\newtheorem{theorem}{Theorem}[section]
\newtheorem{lemma}[theorem]{Lemma}
\newtheorem{corollary}[theorem]{Corollary}
\newtheorem{prop}[theorem]{Proposition}
\newtheorem{remark}[theorem]{Remark}
\renewcommand{\P}{\mathbb{P}}
\renewcommand{\S}{\mathbb{S}}
\newcommand{\R}{\mathbb{R}}
\newcommand{\E}{\mathbb{E}}
\newcommand{\lp}{\left(}
\newcommand{\rp}{\right)}
\newcommand\inp[2]{\left\langle #1, #2 \right\rangle}
\newcommand\step[1]{
\vskip .1in
{\bf \noindent #1}
\vskip .1in
}
\providecommand{\keywords}[1]
{
  \textbf{\textbf{Keywords:}} #1
}
\title{\bf Sub-Gaussian Matrices on Sets: \\ Optimal Tail Dependence and Applications}
\author{Halyun Jeong}
\author{Xiaowei Li}
\author{Yaniv Plan}
\author{{\"O}zg{\"u}r Y{\i}lmaz}
\affil{Department of Mathematics, The University of British Columbia\\
\vspace{.1cm}\texttt{\{hajeong,xli,yaniv,oyilmaz\}@math.ubc.ca}}
\begin{document}
\maketitle

\begin{abstract}
Random linear mappings are widely used in modern signal processing, compressed sensing and machine learning. These mappings may be used to embed the data into a significantly lower dimension while at the same time preserving useful information. This is done by approximately preserving the distances between data points, which are assumed to belong to $\R^n$.  Thus, the performance of these mappings is usually captured by how close they are to an isometry on the data. Gaussian linear mappings have been the object of much study, while the sub-Gaussian settings is not yet fully understood. In the latter case, the performance depends on the sub-Gaussian norm of the rows. In many applications, e.g., compressed sensing, this norm may be large, or even growing with dimension, and thus it is important to characterize this dependence. 

We study when a sub-Gaussian matrix can become a near isometry on a set, show that previous best known dependence on the sub-Gaussian norm was sub-optimal, and present the optimal dependence. Our result not only answers a remaining question posed by Liaw, Mehrabian, Plan and Vershynin in 2017, but also generalizes their work.
We also develop a new Bernstein type inequality for sub-exponential random variables, and a new Hanson-Wright inequality for quadratic forms of sub-Gaussian random variables, in both cases improving the bounds in the sub-Gaussian regime under moment constraints.
Finally, we illustrate popular applications such as Johnson-Lindenstrauss embeddings, null space property for 0-1 matrices, randomized sketches and blind demodulation, whose theoretical guarantees can be improved by our results (in the sub-Gaussian case). 

\end{abstract}
 
\keywords{compressed sensing, dimension reduction, random matrices, sub-Gaussian, Bernstein's inequality, Hanson-Wright inequality, blind demodulation}

\section{Introduction}
Random linear mappings play a central role in dimension reduction, compressed sensing, and numerical linear algebra due to their propensity to preserve the geometry of a given set. The performance of a random linear mapping $A\in\R^{m\times n}$ is often determined by the uniform concentration bound of $\frac{1}{\sqrt{m}}\|Ax\|_2$ around $\|x\|_2$ for all vectors in a set of interest (in other words, how close the map $\frac{1}{\sqrt{m}}A$ is to being an isometry on the set).
This is now well-understood by the standard techniques in the Gaussian random matrix case \cite{Schechtman2006, vershynin_2018,gordon1988milman}. 
However, in many applications, non-Gaussian random mappings are more useful because of their computational/storage benefits or simply the difficulty to generate Gaussian matrices using sampling devices \cite{krahmer2015compressive}. For example, sparse or structured random matrices are preferred in both dimension reduction \cite{Dirksen2016dimension} and random sketching in numerical linear algebra \cite{achlioptas2003database, kane2014sparser, 2015pilanci, 2014woodruff} since they provide more efficient matrix multiplications than dense and unstructured matrices such as Gaussian ones. Certain formulations in compressed sensing also naturally require random matrices such as randomly subsampled Fourier measurements \cite{Krahmer2014} or Bernoulli random matrices \cite{saab2018compressed}.

There has been a series of recent works \cite{Dirksen2016dimension, liaw2017simple,2017oymak} to demonstrate the effectiveness of random mappings outside the Gaussian setup. Unlike the Gaussian case in which we have a rotation invariance property, non-Gaussian setups require more sophisticated arguments to address various new technical challenges. In this article, we will be focusing on sub-Gaussian random mappings.

Let us recall some definitions. For $\alpha \geq 1$, the $\psi_\alpha$-norm (which is the Orlicz norm taken with respect to function $\exp(x^\alpha) -1$) of a random variable $X$ is defined as 
$$ \| X\|_{\psi_\alpha} := \inf\{ t>0: \E \exp(|X|^\alpha/t^\alpha) \leq 2\}. $$
In particular, $\alpha=2$ gives the sub-Gaussian norm and $\alpha=1$ gives the sub-exponential norm. The random variable $X$ is called sub-Gaussian if $\| X\|_{\psi_2}<\infty$ and called sub-exponential if $\| X\|_{\psi_1}<\infty$.

For sub-Gaussian random variables, the $\psi_2$-norm roughly measures how fast the tail distribution decays -- usually the bigger $\psi_2$-norm is, the heavier the tail. We will repeatedly use the fact that $\|X\|_{\psi_2}\leq K$ if and only if the tail probability $\P(|X|\geq t)$ is bounded by a Gaussian with standard deviation in the order of $K$. A precise statement of this, along with some other properties of $\psi_\alpha$-norm, can be found in \Cref{appendix_psi_alpha_properties}.

The sub-Gaussian norms for many random variables can be calculated by looking at the moment generating function of their squares. For example, 
the sub-Gaussian norm for $\mathbf{Normal}(0,\sigma^2)$ is $\sqrt{\frac{8}{3}}\,\sigma$; for $\mathbf{Bernoulli}(p)$ it is $\log^{-\frac{1}{2}}\lp 1+p^{-1} \rp$; for Rademacher random variable it is $\log^{-\frac{1}{2}}(2)$ and for any bounded (by $M$) random variable it is no more than $M\log^{-\frac{1}{2}}(2)$. For $\mathbf{Exponential(\lambda)}$, it is not a sub-Gaussian random variable, but has sub-exponential norm $\frac{2}{\lambda}$.

For a random vector $a\in \R^n$ we say $a$ is sub-Gaussian if
$$ \|a\|_{\psi_2}:=\sup_{x\in \S^{n-1}} \|\langle a,x\rangle \|_{\psi_2}<\infty, $$
and say $a$ is isotropic if
$$ \E aa^T=I_n.$$
We say a random matrix $A\in \R^{m\times n}$ is isotropic and sub-Gaussian if its rows are independent, isotropic and sub-Gaussian random vectors in $\R^n$. The sub-Gaussian parameter of $A$ is defined as
$$ K:=\max_{1\leq i\leq m}\{\|A_i\|_{\psi_2}: A_i^T \text{ is the $i$-th row of } A\}. $$
%It is easy to see from the definition that $X$ is sub-Gaussian if and only if $X^2$ is sub-exponential, and $\|X^2\|_{\psi_1}=\|X\|_{\psi_2}^2$.

For random matrix $A\in \R^{m\times n}$, the isotropic condition guarantees $\frac{1}{\sqrt{m}}A$ will preserve Euclidean norm in expectation.
Some examples of isotropic and sub-Gaussian matrices include matrices with independent and sub-Gaussian entries $A_{ij}$ satisfying $\E A_{ij}=0$ and $\E A_{ij}^2=1$, uniformly subsampled (with replacement and after proper normalization) rows of orthonormal basis or tight frames, etc. \cite{vershynin_2018}.
In the cases of Bernoulli matrices or sparse ternary matrices, which is a generalization of the database-friendly mappings in \cite{achlioptas2003database}, the sub-Gaussian parameter can depend on the signal dimension $n$ if the probability of an entry being nonzero is $n$-dependent.

In the line of research regarding sub-Gaussian random mappings, Liaw et al. \cite{liaw2017simple} showed that for isotropic and sub-Gaussian mapping $A$ with sub-Gaussian parameter $K$, let $T\subset\R^n$, then we have with high probability,
\begin{equation}
\label{eq_liaw}
\sup_{x\in T} \left|\frac{1}{\sqrt{m}}\|Ax\|_2-\|x\|_2\right| \leq \frac{K^2\cdot O(w(T)+\mathrm{rad}(T))}{\sqrt{m}}.
\end{equation}
Here $w(T)$ is the Gaussian width given by
\[w(T):=\E \sup_{y\in T} \langle g, y \rangle \text{ where }g\sim \mathbf{Normal}(0,I_n),\]
and $\mathrm{rad}(T)$ is given by
$$\mathrm{rad}(T):=\sup_{y\in T} \|y\|_2,$$ which is the radius when $T$ is symmetric.

Gaussian width measures the complexity of a set. In particular, denote $\text{cone}(T):=\{tx:t\geq0, x\in T\}$, then $w^2(\text{cone}(T)\cap \S^{n-1})$ is a meaningful approximation for dimension \cite{candes2014mathematics, 2017oymak}.
Generally $\text{rad}(T)$ is also dominated by $w(T)$. For example, if $0\in T$, then by Jensen's inequality, $$w(T) = \E \sup_{y\in T} \lp \max\{ \inp{g}{y},0\} \rp \geq \sup_{y\in T} \E \max\{ \inp{g}{y},0\} = \text{rad}(T)/\sqrt{2\pi}.$$
In such case, \eqref{eq_liaw} implies that with high probability, $\frac{1}{\sqrt{m}}A$ is a near isometry on $T$ whenever $m\geq CK^4 w^2(T)$ for some constant $C$.

The dependency on $w(T)$ in \eqref{eq_liaw} is optimal. This is easy to see when $m=1$ and $A$ has i.i.d.  $\mathbf{Normal}(0,1)$ entries. But when it comes to the dependency on the sub-Gaussian parameter $K$, whether the $K^2$ factor can be improved is a question raised but left unanswered in \cite{liaw2017simple}. 
Other important works regarding this type of bounds are either not explicit \cite{2005Klartag, 2017oymak} or at least of the same order $K^2$ \cite{2007Mendelson, Dirksen2016dimension, 2015dirksen}. 

In this article, we refine this dependency on the sub-Gaussian parameter from $K^2$ to the optimal $K\sqrt{\log K}$. This enhances the concentration bound substantially when the sub-Gaussian mapping is not well-behaved, for example, when $K$ increases together with the signal dimension.
We also relax the row-independent requirement by considering random mappings in the form of $BA$ where $B$ is an arbitrary matrix and $A$ is mean zero, isotropic and sub-Gaussian. The mean zero assumption is additional when comparing to the assumptions for \eqref{eq_liaw}, and not needed when $B$ is only diagonal. However, it is necessary for arbitrary $B$. 
Our bound is broadly applicable since it only require these properties from the random matrix $A$ without any other assumptions.

Now we state our main theorem. In the following, $\|B\|_F$ and $\|B\|$ denote Frobenius and operator norm of $B$ respectively. The matrix $B\in \R^{l\times m}$ is diagonal means that the only possible non-zero entries are $B_{ii}$ where $1\leq i\leq \min\{l,m\}$.
\begin{theorem}
\label{theorem_mainBA}
Let $B\in\R^{l \times m}$ be a fixed matrix, let $A\in\R^{m\times n}$ be a mean zero, isotropic and sub-Gaussian matrix with sub-Gaussian parameter $K$ and let $T\subset \R^n$ be a bounded set. Then
$$ \E\sup_{x\in T} \left|\|BAx\|_2-\|B\|_F\|x\|_2\right|\leq CK\sqrt{\log K}\,\|B\|\left[w(T)+\mathrm{rad}(T)\right], $$
and for any $u\geq 0$, with probability at least $1-3e^{-u^2}$,
$$ \sup_{x\in T} \left|\|BAx\|_2-\|B\|_F\|x\|_2\right|\leq CK\sqrt{\log K}\,\|B\| \left[w(T)+u\cdot\mathrm{rad}(T)\right]. $$
Here $C$ is an absolute constant. Furthermore, when $B$ is a diagonal matrix, random matrix $A$ only need to be isotropic and sub-Gaussian with sub-Gaussian parameter $K$ for the conclusions to hold.
\end{theorem}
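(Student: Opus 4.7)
The plan is to apply a generic chaining argument to the stochastic process $Z_x := \|BAx\|_2$ indexed by $x \in T$, obtaining tighter increment bounds than those of Liaw et al.\ via the paper's new Hanson--Wright inequality. The algebraic starting point is
\[
\|BAx\|_2^2 - \|BAy\|_2^2 \;=\; (A(x-y))^\top (B^\top B)\,(A(x+y)),
\]
which writes the squared-norm difference as a bilinear chaos in the entries of $A$ coupling the vectors $x-y$ and $x+y$. Dividing through by $\|BAx\|_2 + \|BAy\|_2$, typically of order $\|B\|_F(\|x\|_2 + \|y\|_2)$ on a high-probability event, will pass from the squared-norm process to the norm process $Z_x$ itself.

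I would then apply standard decoupling, rendering the two $A$-factors independent up to a universal constant, and invoke the paper's new Hanson--Wright inequality for quadratic forms of sub-Gaussian random variables. This refined inequality exploits the moment constraint $\E A_{ij}^2 = 1$ from isotropicity to replace the classical $K^2$ factor in the sub-Gaussian part of the tail by $K\sqrt{\log K}$, while retaining a sub-exponential remainder at scale $K^2 \|B\|^2$. Combined with the division above, this yields a mixed $\psi_2$-$\psi_1$ tail bound on the centered increment $|Z_x - Z_y|$, with sub-Gaussian scale of order $K\sqrt{\log K}\,\|B\|\,\|x-y\|_2$ and a sub-exponential contribution that, after normalization, is controlled by $\mathrm{rad}(T)$.

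With these increment bounds in hand, I would invoke Talagrand's generic chaining for mixed $\psi_2 + \psi_1$ processes: the supremum is bounded by $\gamma_2(T,d_2) + \gamma_1(T,d_1)$ for the two increment metrics $d_2, d_1$. Since both metrics are proportional to the Euclidean metric on $T$, Talagrand's majorizing measures theorem gives $\gamma_2(T,\|\cdot\|_2) \asymp w(T)$, while the $\gamma_1$-piece is absorbed into $K\sqrt{\log K}\,\|B\|\,\mathrm{rad}(T)$ for bounded $T$. This yields the expectation bound, and the tail form of generic chaining provides the $w(T) + u\,\mathrm{rad}(T)$ scaling in the high-probability statement. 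For diagonal $B$, the rows of $BA$ are scalar multiples of rows of $A$ and therefore remain independent, so the mean-zero hypothesis on $A$ is unnecessary; for general $B$, row-mixing produces bilinear cross-terms between distinct rows of $A$ whose bias must be killed by the assumption $\E A_{ij} = 0$.

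The principal obstacle is the improvement from $K^2$ to $K\sqrt{\log K}$ in the quadratic concentration step. Classical proofs of Hanson--Wright do not separate the sub-Gaussian norm $K$ from the variance $\sigma^2 = 1$, and so cannot see the logarithmic savings afforded by isotropicity. Obtaining the sharper bound should require a truncation-plus-moment-method argument --- splitting each $A_{ij}$ into a bounded part at scale $\tau \asymp \sqrt{\log K}$, whose contribution inherits the Gaussian-optimal rate governed by variance $1$, and a rare tail part whose contribution becomes a sub-exponential correction at scale $K^2$ --- with the balance at $\tau \asymp \sqrt{\log K}$ producing the logarithmic factor. Once the new Hanson--Wright is established, the remaining chaining steps are essentially standard.
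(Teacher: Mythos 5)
Your overall architecture — prove increment bounds for $Z_x=\|BAx\|_2-\|B\|_F\|x\|_2$ using decoupling and the paper's new Hanson--Wright, then chain — matches the paper's plan, and your account of why diagonal $B$ sidesteps the mean-zero hypothesis is correct. However, there is a genuine gap in the chaining step that would cause your argument to fail.

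You claim the increment $|Z_x-Z_y|$ retains a mixed $\psi_2$-$\psi_1$ tail and that the resulting $\gamma_1$ functional ``is absorbed into $K\sqrt{\log K}\,\|B\|\,\mathrm{rad}(T)$ for bounded $T$.'' Neither half of this is right. First, $\gamma_1(T,\|\cdot\|_2)$ is emphatically not controlled by $\mathrm{rad}(T)$: for the unit Euclidean ball in $\R^n$, $\gamma_1\asymp n$ while $\gamma_2\asymp\sqrt n$, so a $\gamma_1$ term would reintroduce a dimension-dependence that the theorem claims to avoid. Second — and this is the paper's key mechanism, inherited from Liaw et al.\ — the norm process $Z_x$ actually has \emph{purely} sub-Gaussian increments. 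The division by $\|BAx\|_2+\|BAy\|_2$ that you mention does not merely ``normalize'' the sub-exponential part; it kills it. Concretely, if $Y:=\|BX\|_2^2-\|B\|_F^2$ has a mixed tail with Gaussian scale $\|B\|\,\|B\|_F\,K\sqrt{\log K}$ and exponential scale $\|B\|^2K^2\log K$, then for $Z:=\|BX\|_2-\|B\|_F$ the event $\{|Z|\geq s\}$ implies $\{|Y|\geq\max(s^2,s\|B\|_F)\}$, and evaluating the two-regime tail of $Y$ at that threshold yields a pure Gaussian tail $\exp(-cs^2/(\|B\|^2K^2\log K))$ in \emph{both} the $s\leq\|B\|_F$ and $s\geq\|B\|_F$ regimes. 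This is the content of Theorem~\ref{theorem_concentrationBX} and is the step you are missing. With that, one applies the sub-Gaussian version of Talagrand's majorizing measure theorem (Theorem~\ref{theorem_majorm}), in which the $u\cdot\mathrm{rad}(T)$ term comes from the $\mathrm{diam}(T)$ deviation in the tail form of that theorem, not from a $\gamma_1$ bound.

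A smaller remark: your sketch of how to prove the new Hanson--Wright (truncation of each $A_{ij}$ at scale $\tau\asymp\sqrt{\log K}$) is not what the paper does — the paper proves it via the new Bernstein inequality for the diagonal part, and decoupling plus a comparison to scaled-Bernoulli-times-Gaussian for the off-diagonal part, with the $\log K$ emerging from the moment bounds in Lemma~\ref{lemma_pmoment}. Your truncation strategy is plausible in spirit but is a different route and would need to be carried through; since you treat the inequality as a black box for Theorem~\ref{theorem_mainBA}, this is secondary to the chaining gap above.
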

When $B$ is the identity matrix, we have the following corollary.
\begin{corollary}
\label{theorem_main}
Let $A\in\R^{m\times n}$ be an isotropic and sub-Gaussian matrix with sub-Gaussian parameter $K$, and let $T\subset \R^n$ be a bounded set. Then
$$ \E\sup_{x\in T} \left|\|Ax\|_2-\sqrt{m}\|x\|_2\right|\leq CK\sqrt{\log K}\,\left[w(T)+\mathrm{rad}(T)\right], $$
and for any $u\geq 0$, with probability at least $1-3e^{-u^2}$,
$$ \sup_{x\in T} \left|\|Ax\|_2-\sqrt{m}\|x\|_2\right|\leq CK\sqrt{\log K}\,\left[w(T)+u\cdot\mathrm{rad}(T)\right] $$
\end{corollary}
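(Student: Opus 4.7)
The plan is to obtain \Cref{theorem_main} as an immediate consequence of \Cref{theorem_mainBA} by taking $B=I_m$, the $m\times m$ identity matrix. First I would check that the hypotheses line up: \Cref{theorem_mainBA} allows any fixed $B\in\R^{l\times m}$, so $B=I_m$ (i.e.\ $l=m$) is admissible. Moreover $I_m$ is diagonal, so the ``furthermore'' clause of \Cref{theorem_mainBA} applies, and the only assumptions needed on $A$ are that it is isotropic and sub-Gaussian with sub-Gaussian parameter $K$; in particular, the mean zero assumption required for a general $B$ is dropped, which matches exactly the hypotheses of \Cref{theorem_main}.

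Next I would compute the two norms of $B=I_m$ that enter the bound: one has $\|I_m\|=1$ and $\|I_m\|_F=\sqrt{m}$, while $BAx=Ax$. Substituting these values into the expectation bound of \Cref{theorem_mainBA} gives
\[
\E \sup_{x\in T} \bigl| \|Ax\|_2 - \sqrt{m}\,\|x\|_2 \bigr| \leq CK\sqrt{\log K}\,\bigl[w(T)+\mathrm{rad}(T)\bigr],
\]
and the same substitution in the tail bound yields, for any $u\geq 0$, with probability at least $1-3e^{-u^2}$,
\[
\sup_{x\in T} \bigl| \|Ax\|_2 - \sqrt{m}\,\|x\|_2 \bigr| \leq CK\sqrt{\log K}\,\bigl[w(T)+u\cdot\mathrm{rad}(T)\bigr],
\]
which are precisely the two conclusions of \Cref{theorem_main}.

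There is essentially no obstacle; the entire content of the corollary is already contained in the main theorem. The purpose of stating it separately is to isolate the most commonly encountered setting (no auxiliary matrix $B$, no mean zero assumption on $A$) and to make the improvement over the Liaw--Mehrabian--Plan--Vershynin bound \eqref{eq_liaw} transparent: the geometric complexity term $w(T)+\mathrm{rad}(T)$ is unchanged, while the dependence on the sub-Gaussian parameter sharpens from $K^2$ to the optimal $K\sqrt{\log K}$.
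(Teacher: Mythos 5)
Your proposal is correct and coincides with the paper's own derivation: the corollary is stated immediately after \Cref{theorem_mainBA} as the special case $B=I_m$, using exactly the substitutions $\|I_m\|=1$, $\|I_m\|_F=\sqrt{m}$. You also correctly invoke the ``furthermore'' clause for diagonal $B$ to drop the mean-zero assumption, which is the one detail that makes the hypotheses of the corollary line up.
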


%The bound appearing on the right hand side of \Cref{theorem_mainBA} is optimal in general. The $\|B\|$ factor is optimal and this is easy to see when $B$ has non-zero singular values being all equal (because the statement should be invariant under scaling for $B$). We also give another example below in which the singular values are not all equal. The dependency on $K$ is optimal and this follows from \Cref{prop_tightness_ex} in \Cref{sec_tightness} with $T$ being a singleton. As mentioned before, $\mathrm{rad}(T)$ is generally dominated by $w(T)$ and the dependency on $w(T)$ is also optimal.
In general, the high probability concentration bound for supremum in \Cref{theorem_mainBA} (and \Cref{theorem_main}) is optimal up to constants.
As an example where $K\sqrt{\log K}$ dependency is optimal, consider the scaled Bernoulli distribution as in \Cref{prop_tightness_ex}. In this case, we can let $T=\{(1,0,\dots,0)^T\}$ as a singleton, let $B$ be identity and let $A$ be any isotropic matrix whose entries are independent with $A_{ij}^2$ following the same distribution as $X_i^2$ in \Cref{prop_tightness_ex}. Note that such $A$ is not unique, and we can have different values for $\E A$ by assigning different (arbitrary or random) signs for $A_{ij}$ while keeping $A$ isotropic -- for example, if all $A_{ij}$ are symmetric, then $\E A=0$; if $A$ has a column which is non-negative with probability 1, and $A_{ij}$ in all other columns are symmetric, then $\E A \neq 0$.
In either case, by \Cref{prop_tightness_ex} (assuming $m\geq K^2\log K$), the sub-Gaussian process here has $\psi_2$-norm at least $cK\sqrt{\log K}$ as $K \to \infty$, this implies that the $K\sqrt{\log K}$ factor in our concentration bound cannot be improved (because up to constants, a sub-Gaussian concentration bound is also an upper bound on $\psi_2$-norm).

The $\|B\|$ factor is optimal and this is easy to see when $B$ has non-zero singular values being all equal (because the statement should be invariant under scaling for $B$). We also give another example below in which the singular values are not all equal. As mentioned after \Cref{eq_liaw}, $\mathrm{rad}(T)$ is generally dominated by $w(T)$ and the dependency on $w(T)$ is optimal as well.

Assuming $\mathrm{rad}(T)$ is dominated by $w(T)$, \Cref{theorem_mainBA} then implies that with high probability, matrix $ \frac{1}{\|B\|_F} BA$ is a near isometry on $T$ whenever the stable rank of $B$
$$ 
\mathrm{sr}(B):=\frac{\|B\|_F^2}{\|B\|^2} \geq CK^2\log K \,w^2(T).
	$$
This result recovers \eqref{eq_liaw} with improved dependency on $K$ when $B=I_m$.

\Cref{theorem_mainBA} can fail for some $B$ if $\E A \neq 0$. For example, let $B$ be the all ones matrix, i.e. $B_{ij}=1$ for $1\leq i\leq l$ and $1\leq j\leq m$, then $\|B\|_F=\|B\|=\sqrt{lm}$. Suppose $A$ has independent entries where
\[
\left\{ 
\begin{array}{ll}
A_{ij} \sim \mathbf{Normal}(0,1), & 1\leq i\leq m \,\text{ and }\, 2\leq j\leq n \\
A_{i1}=|g_i|, & g_i\sim \mathbf{Normal}(0,1)
\end{array}
\right.
\]
It is easy to verify that $\E A\neq 0$ and $A$ has isotropic rows $A_i^T$. Moreover, for any $y=(y_1,\dots,y_n)\in \S^{n-1}$, notice that
$$\inp{A_i}{y}\sim \sqrt{1-y_1^2}\cdot \mathbf{Normal}(0,1) + y_1|g_i|.$$
So using triangle inequality for the $\psi_2$-norm and inequality $\sqrt{1-y_1^2}+|y_1|\leq \sqrt{2}$, we get the sub-Gaussian parameter of $A$ is no more than $\sqrt{2}\|g_i\|_{\psi_2}=\sqrt{16/3}$.\\
Let $x=(1,0,\dots,0)^T$ and $T=\{x\}$. Since $Ax=(|g_1|,\dots,|g_m|)^T$, we have
\begin{align*}
\E \left|\|BAx\|_2-\|B\|_F\|x\|_2\right| & \geq \E \|BAx\|_2 - \|B\|_F\|x\|_2 \\
& = \E \lp \sqrt{l} \sum_{i=1}^m |g_i| \rp - \sqrt{lm} \\
&= \sqrt{lm} \lp \sqrt{2m/\pi} -1 \rp .
\end{align*}
On the other hand,
$\|B\|\left[w(T)+\mathrm{rad}(T)\right]=\sqrt{lm}$. So in this case, \Cref{theorem_mainBA} does not hold when $m$ is sufficiently large.

As an example demonstrating $\|B\|$ is optimal in general, consider the case when $T=\{x\}\subset \S^{n-1}$, $A$ is standard Gaussian so that $g:=Ax\sim \mathbf{Normal}(0,I_m)$ and $B=\mathrm{diag}(\tau,1,\dots,1)$ where $\tau>0$. Also let $g_i$ be the coordinates of $g$, then
\begin{align*}
\E \left|\|BAx\|_2-\|B\|_F\|x\|_2\right| & \geq \|B\|_F\|x\|_2 - \E \|BAx\|_2 \\
&= \sqrt{\tau^2+m-1}-\E \sqrt{\tau^2g_1^2+{\textstyle\sum_{i\geq 2}g_i^2}} \\
&\geq \sqrt{\tau^2+m-1} -\E \lp \tau |g_1| + \sqrt{\textstyle\sum_{i\geq 2}g_i^2} \rp \\
&\geq \sqrt{\tau^2+m-1} -\tau\sqrt{2/\pi}-\sqrt{m-1} \\
&=\tau \lp \sqrt{1+\frac{m-1}{\tau^2}}-\sqrt{\frac{2}{\pi}}-\sqrt{\frac{m-1}{\tau^2}} \rp
\end{align*}
where we used Jensen's inequality in the second last line. This estimate is in the order of $\tau=\|B\|$ when $\tau>C\sqrt{m}$ with some constant $C$ large enough.

We make one more technical remark that the $\sqrt{\log K}$ factor here is well-defined. In fact, the isotropic and sub-Gaussian conditions of $A$ guarantee that $K$ is bounded below from $1$. To see this, let $X:=Ax$ for some $x\in \S^{n-1}$, then $X$ has independent coordinates $X_i$ satisfying $\E X_i^2=1$ and $\|X_i\|_{\psi_2}\leq K$. Also let $K_0:=\sqrt{1/ \log 2}\approx 1.201$, from
\begin{equation}
\label{eq_unitVar65}
\E \exp(X_i^2/K_0^2) = \sum_{n\geq 0} \frac{\E X_i^{2n}}{n!K_0^{2n}} \geq \sum_{n\geq 0} \frac{1}{n!K_0^{2n}} = e^{1/K_0^2} = 2
\end{equation}
we can conclude that $K\geq K_0$ and the equality is achieved when $X_i=1$ a.s.

The proof for \Cref{theorem_mainBA} follows an analogous approach in Liaw et al. \cite{liaw2017simple}. One major difference is that we prove and apply two new concentration inequalities with improved parametric dependency in the sub-Gaussian regime. We believe these inequalities are interesting on their own as an application-oriented concentration inequality.

The first one is a new Bernstein type inequality under bounded first absolute moment condition. This inequality provides a concentration bound for sum of sub-exponential random variables.
\begin{theorem}[New Bernstein's Inequality]
\label{theorem_newbernstein}
Let $a=(a_1,\dots,a_m)$ be a fixed non-zero vector and let $Y_1,\dots,Y_m$ be independent, mean zero sub-exponential random variables satisfying $\E |Y_i|\leq 2$ and $\|Y_i\|_{\psi_1} \leq K_i^2$ $\lp \text{assume }K_i \geq \frac{6}{5}\rp$. Then for every $t \geq 0$ we have
$$
\P\left(\left|\sum_{i=1}^m a_iY_i \right|\geq t\right) \leq 
2\exp [-c\min\left(\frac{t^2}{\sum_{i=1}^m a_i^2 K_i^2\log K_i},\frac{t}{\|a\|_{\infty}K^2\log K}\right)], 
	$$
where $K=\max_iK_i$ and $c$ is an absolute constant.
\end{theorem}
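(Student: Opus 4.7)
The plan is to follow the classical Chernoff--Cram\'er route: establish an improved sub-Gaussian bound on the MGF of each $Y_i$ in a suitable range of $\lambda$, multiply using independence, and apply Markov's inequality with an optimized $\lambda$. The hypothesis $\E|Y_i|\leq 2$ is the ingredient that allows the MGF bound to improve from the classical $\exp(C\lambda^2 K_i^4)$ available for general $\psi_1$-random variables to $\exp(C\lambda^2 K_i^2\log K_i)$, and this improvement propagates through Chernoff to the stated inequality.

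The key step is to show that, for all $|\lambda|\leq c/(K_i^2\log K_i)$,
\[
\E e^{\lambda Y_i} \leq \exp\lp C\lambda^2 K_i^2\log K_i\rp.
\]
Since $\E Y_i=0$, I would start from $\E e^{\lambda Y_i}-1 = \E\lp e^{\lambda Y_i}-1-\lambda Y_i\rp$ and use the elementary bound $|e^x-1-x|\leq \tfrac{x^2}{2}e^{|x|}$, reducing the task to controlling $\E\lp Y_i^2 e^{|\lambda Y_i|}\rp$. Splitting at the threshold $L := c_0 K_i^2 \log K_i$ (with $c_0$ chosen large enough) gives two pieces. For $|Y_i|\leq L$, the first-moment hypothesis yields $\E\lp Y_i^2 \mathbf{1}_{|Y_i|\leq L}\rp \leq L\cdot \E|Y_i| \leq 2L$, while the factor $e^{|\lambda Y_i|}\leq e^{|\lambda|L}\leq e$ is harmless, producing a contribution of order $\lambda^2 L = \lambda^2 K_i^2\log K_i$. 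For $|Y_i|>L$, I would integrate against the sub-exponential tail $\P(|Y_i|>t)\leq 2e^{-ct/K_i^2}$ and note that with $L = c_0 K_i^2\log K_i$ the factor $e^{-cL/K_i^2} = K_i^{-cc_0}$ is polynomially small in $K_i$ and absorbs the polynomial-in-$K_i$ factors arising from $\int_L^\infty(t^2+|\lambda|t^3)e^{-(c/K_i^2-|\lambda|)t}\,dt$, producing a tail contribution negligible compared to $\lambda^2 K_i^2\log K_i$.

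Combining by independence,
\[
\E \exp\lp \lambda \sum_{i=1}^m a_i Y_i\rp \leq \exp\lp C\lambda^2 \sum_{i=1}^m a_i^2 K_i^2\log K_i\rp
\]
for every $|\lambda|\leq c/(\|a\|_\infty K^2\log K)$, the restriction guaranteeing $|\lambda a_i|\leq c/(K_i^2\log K_i)$ for all $i$ (with $K=\max_i K_i$). A standard Chernoff bound together with optimization in $\lambda$ then produces the two-regime minimum in the statement: the unconstrained optimum $\lambda\sim t/\lp\sum a_i^2 K_i^2\log K_i\rp$ gives the sub-Gaussian term when it falls in the admissible range, and setting $\lambda$ to the boundary gives the sub-exponential term otherwise. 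The lower tail is obtained by applying the same argument to $-Y_i$.

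The hard part will be the MGF estimate, specifically the choice of the threshold $L$. The value $L\sim K_i^2\log K_i$ is forced by balancing the Markov bound $\P(|Y_i|>t)\leq 2/t$ (which is the quantitative content of $\E|Y_i|\leq 2$, entering through $\E[Y_i^2\mathbf{1}_{|Y_i|\leq L}]\leq 2L$) against the sub-exponential tail; a smaller $L$ would fail to exploit the full first-moment gain, while a larger $L$ would make the bounded contribution $\lambda^2 L$ exceed the target. Verifying that the tail contribution is truly absorbed uniformly in $\lambda\in[-c/(K_i^2\log K_i),c/(K_i^2\log K_i)]$ and in $K_i\geq 6/5$ requires some case analysis, with the regime of $K_i$ bounded handled by a direct appeal to the classical Bernstein inequality (where $K_i^4$ and $K_i^2\log K_i$ differ only by an absolute constant).
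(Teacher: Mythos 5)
Your proposal is correct and reaches the same MGF estimate, but by a genuinely different route. The paper first proves a general $p$-th moment bound, $\E|Y|^p \leq C^p p^p (K^2\log K)^{p-1}$ for all $p\geq 1$ (Lemma~\ref{lemma_pmoment}), obtained by writing the moment as an integral and splitting at the threshold $T=6pK^2\log K$; the MGF bound then follows from Taylor expansion. You instead bound the MGF directly via the centering identity $\E e^{\lambda Y}-1 = \E(e^{\lambda Y}-1-\lambda Y)$, the elementary estimate $|e^x-1-x|\le \tfrac{x^2}{2}e^{|x|}$, and a single truncation at $L=c_0K^2\log K$. These are two implementations of the same core idea: the level $\sim K^2\log K$ is where the Markov/first-moment bound $\P(|Y|>t)\le 2/t$ (equivalently $\E[Y^2\mathbf 1_{|Y|\le L}]\le 2L$) hands off to the $\psi_1$ tail $\P(|Y|>t)\le 2e^{-t/K^2}$, and choosing $L$ with a sufficiently large constant makes the tail factor $e^{-L/(2K^2)}=K^{-c_0/2}$ beat the polynomial-in-$K$ factors from the tail integral. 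Your truncation route is arguably more direct for the task at hand; the paper's choice to isolate the moment lemma pays off later, since Lemma~\ref{lemma_pmoment} is reused essentially verbatim in the proof of the new Hanson--Wright inequality (Lemma~\ref{lemma_HW_lm2}), where one really needs control of $\E Z^{2q}$ for all $q$, not just the MGF. Two small points to tighten: (i) the admissible range $|\lambda|\le c/(K_i^2\log K_i)$ forces $cc_0\le 1$ in the bounded part and $|\lambda|\le 1/(2K_i^2)$ in the tail integral, so $c$ must be chosen small after $c_0$ is fixed large (say $c_0\ge 6$ and then $c\lesssim 1/c_0$) -- it is $c$, not $c_0$, that does the balancing; (ii) verifying that the tail contribution is uniformly dominated for all $K_i\ge 6/5$ and not just asymptotically requires a short finite-$K$ check, which you correctly observe can be deferred to the classical Bernstein inequality on any compact range of $K_i$.
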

\begin{remark}
\label{remark_bernstein}
\Cref{theorem_newbernstein} remains true (with a different absolute constant $c$) when the $2$ in $\E |Y_i|\leq 2$ is replaced with an arbitrary positive constant (see \Cref{remark_EYalpha} for more detail).
\end{remark}

The second one is a new Hanson-Wright inequality under unit variance condition. This inequality provides a concentration bound for quadratic forms of independent random variables and is more general than the aforementioned Bernstein's inequality. In the literature, results of similar flavor have been obtained \cite{rudelson2013hanson,vu2015random,adamczak2015note,klochkov2018uniform} but under different assumptions. We will give a brief comparison between our result and a few notable ones in \Cref{sec_hansonwright}.

\begin{theorem}[New Hanson-Wright Inequality]
\label{theorem_newhansonwright}
Let $A\in \R^{n\times n}$ be a fixed non-zero matrix and let $X=(X_1,\dots, X_n)\in\R^n$ be a random vector with independent, mean zero, sub-Gaussian coordinates satisfying $\E X_i^2=1$ and $\|X_i\|_{\psi_2}\leq K$. Then for every $t\geq 0$ we have
\[
\P\lp |X^TAX-\E X^TAX| \geq t \rp \leq 2 \exp\left[ -c\min\lp \frac{t^2}{\|A\|_F^2K^2\log K},\frac{t}{\|A\|K^2\log K}\rp \right],
	\]
where $c$ is an absolute constant.
\end{theorem}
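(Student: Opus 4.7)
The plan is to follow the classical Rudelson--Vershynin architecture for Hanson--Wright, substituting our new Bernstein inequality (Theorem \ref{theorem_newbernstein}) at every step that reduces to a sum of sub-exponentials arising from unit-variance sub-Gaussians. That substitution is exactly what upgrades the classical $K^4$ scaling to the target $K^2\log K$.

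First I would split
\[
X^T A X - \E X^T A X \;=\; \sum_{i=1}^n A_{ii}(X_i^2-1) \;+\; \sum_{i\neq j} A_{ij} X_i X_j \;=:\; S_d + S_o,
\]
and bound $\P(|S_d|\ge t/2)$ and $\P(|S_o|\ge t/2)$ separately, combining by a union bound. For $S_d$ set $Y_i := X_i^2-1$; then $\E Y_i=0$, $\E|Y_i|\le \E X_i^2+1=2$, and $\|Y_i\|_{\psi_1}\le \|X_i\|_{\psi_2}^2+\|1\|_{\psi_1}\le CK^2$ by the identity $\|X^2\|_{\psi_1}=\|X\|_{\psi_2}^2$ and the triangle inequality for $\psi_1$. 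Theorem \ref{theorem_newbernstein} applied to $a=(A_{ii})_i$ with parameters $K_i\asymp K$ then yields
\[
\P(|S_d|\ge t/2) \;\le\; 2\exp\!\Bigl[-c\min\!\Bigl(\tfrac{t^2}{K^2\log K\,\sum_i A_{ii}^2},\; \tfrac{t}{K^2\log K\cdot\max_i|A_{ii}|}\Bigr)\Bigr],
\]
which is dominated by the claimed bound since $\sum_i A_{ii}^2\le \|A\|_F^2$ and $\max_i|A_{ii}|\le \|A\|$.

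For the off-diagonal piece I would follow the standard decouple--condition--Khintchine chain: a decoupling inequality bounds moments of $S_o$ by those of $R := X^T A X'$ for an independent copy $X'$ of $X$; conditioning on $X$ writes $R=\sum_j (A^T X)_j X_j'$ as a weighted sum of independent mean-zero sub-Gaussians, so the sub-Gaussian Khintchine bound gives $\|R\mid X\|_p\le CK\sqrt{p}\,\|A^TX\|_2$; and one then has to control the moments of the quadratic form $\|A^TX\|_2^2=X^T(AA^T)X$. The main obstacle is precisely this last step: a naive application of the classical Hanson--Wright inequality to $AA^T$ reintroduces a factor $K^2$ and spoils the gain. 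My plan is to close the argument by a self-consistent bootstrap, proving by induction on $p$ the whole target estimate $\|X^T B X - \E X^T B X\|_q \le CK^2\log K\bigl(\sqrt{q}\,\|B\|_F + q\,\|B\|\bigr)$ for all $q<p$ and all matrices $B$, applying it to $B=AA^T$ (so that $\|AA^T\|_F\le \|A\|\,\|A\|_F$ and $\|AA^T\|=\|A\|^2$), and absorbing the resulting cross term $\sqrt{q}\,\|A\|\,\|A\|_F$ into the flanking terms by AM--GM with a weight that balances $\|A\|_F$ against $\sqrt{q\log K}\,\|A\|$. The key point is that every diagonal reduction encountered in this recursion is handled by Theorem \ref{theorem_newbernstein} rather than by classical Bernstein, so the $K^2\log K$ scaling is preserved at each level and the induction closes with the same constant. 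Translating the resulting bound $\|S_o\|_p \le CK\sqrt{p\log K}\,\|A\|_F + CK^2 p\log K\,\|A\|$ into tails via the standard conversion $\|Z\|_p\le \sigma\sqrt{p}+Lp \Rightarrow \P(|Z|\ge t)\le 2\exp[-c\min(t^2/\sigma^2,\,t/L)]$, and combining with the diagonal tail by a union bound, finishes the proof.
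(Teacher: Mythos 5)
Your diagonal treatment matches the paper's exactly. But your off-diagonal treatment takes a genuinely different route: the paper decouples at the \emph{moment-generating-function} level and then \emph{compares} each $X_i$ to $r_ig_i$, where $g_i$ is standard Gaussian and $r_i^2\sim(2L)^2\,\mathbf{Bernoulli}((2L)^{-2})$ with $L^2=K^2\log K$ — a distribution engineered to dominate the MGF of any unit-variance sub-Gaussian with $\|X_i\|_{\psi_2}\le K$ (this is where \Cref{lemma_pmoment} enters). The resulting chaos $g^T R A R' g'$ has a closed-form conditional MGF, and \Cref{lemma_HW_lm4} finishes. You instead decouple at the \emph{moment} level, apply Khintchine conditionally on $X$, and then close via a moment-recursion on the auxiliary quadratic form $X^T(AA^T)X$. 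This bootstrap can be made to work and is a legitimate alternative — it is more ``elementary'' (no comparison distribution needed) at the cost of a more delicate constant/induction bookkeeping.

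However, as written there is a concrete gap: the inductive hypothesis you state, $\|X^TBX-\E X^TBX\|_q \le CK^2\log K\bigl(\sqrt{q}\,\|B\|_F+q\,\|B\|\bigr)$, is too weak for the recursion to close. Tracing the step $\|R\|_p \le CK\sqrt{p}\,\|\|A^TX\|_2\|_p$ with $\|\|A^TX\|_2\|_p\le \|A\|_F + \|W\|_{p/2}^{1/2}$ and plugging this hypothesis into $W=X^T(AA^T)X-\|A\|_F^2$, the linear-in-$p$ coefficient you recover is of order $K\cdot(\alpha+\sqrt{\beta})$ with $\alpha=\beta=CK^2\log K$, which is $\asymp K^3\log K$ — strictly larger than the hypothesized $K^2\log K$. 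The hypothesis must instead have the sharper $\sqrt{q}$-coefficient $\alpha \asymp K\sqrt{\log K}$ (which is also what your final claimed moment bound $\|S_o\|_p\lesssim K\sqrt{p\log K}\,\|A\|_F + K^2 p\log K\,\|A\|$ asserts, in contradiction with the stated hypothesis). With that corrected two-parameter hypothesis $\alpha\asymp K\sqrt{\log K}$, $\beta\asymp K^2\log K$ the recursion does close — the Khintchine step alone gives $\alpha_o\asymp K$ and $\beta_o\asymp K(\alpha/2+\sqrt{\beta})\asymp K^2\sqrt{\log K}$, the diagonal (Bernstein) piece contributes $\alpha_d\asymp K\sqrt{\log K}$, $\beta_d\asymp K^2\log K$, and the sums fit back under $\alpha,\beta$ — but verifying this requires (i) using the lower bound $K\ge\sqrt{1/\log 2}$ so that $1/\sqrt{\log K}$ is an absolute constant, (ii) a correctly weighted AM--GM to split the cross term $\sqrt{\alpha\sqrt{p/2}\,\|A\|\,\|A\|_F}$ without inflating the Frobenius coefficient, and (iii) an explicit base case at $p\in[1,2]$ (e.g.\ via $\mathrm{Var}(S_o)\le 2\|A\|_F^2$) to anchor the doubling induction. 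None of these are spelled out, and (i)--(ii) in particular are where the argument can silently fail if done carelessly. So the route is viable, but the proposal as written has an incorrectly-stated induction hypothesis and omits the steps that make it close.
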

\begin{remark}
If $A$ is a diagonal matrix, then \Cref{theorem_newhansonwright} recovers \Cref{theorem_newbernstein} (assuming all $K_i$ are equal) with $Y_i=X_i^2-\E X_i^2$. Therefore this can be viewed as a generalization of the new Bernstein's inequality given in \Cref{theorem_newbernstein}.
\end{remark}

\subsection*{Notations}
We use $\|\cdot\|_2$ for Euclidean norm of vectors, $\|\cdot\|_F$ and $\|\cdot\|$ for Frobenius and operator norm of matrices respectively. We use $\circ$ for Hadamard (entrywise) product. We say $f\lesssim g$ if $f\leq Cg$ for some absolute constant $C$ and say $f\gtrsim g$ if $f\geq Cg$ for some absolute constant $C$.
Typically, $c$ and $C$ denote absolute constants (often $c$ for small ones and $C$ for large ones) which may vary from line to line.

\subsection*{Organization}
The rest of this paper is organized as follows: In \Cref{sec_results_bernstein}, we discuss and prove the new Bernstein's inequality (\Cref{theorem_newbernstein}). In \Cref{sec_hansonwright}, we first discuss and compare the new Hanson-Wright inequality (\Cref{theorem_newhansonwright}) to other known variants of Hanson-Wright inequalities and then prove \Cref{theorem_newhansonwright}. In \Cref{sec_results_main}, we prove our main theorem regarding sub-Gaussian matrices on sets (\Cref{theorem_mainBA}) and give an example to show our tail dependency on $K$ is optimal. In \Cref{sec_applications}, we demonstrate how our result can improve theoretical guarantees of some popular applications such as Johnson-Lindenstrauss embedding, null space property for 0-1 matrices, randomized sketches and blind demodulation. In \Cref{sec_conclusion}, we make a brief conclusion for this paper.

\section{New Bernstein's Inequality}
\label{sec_results_bernstein}
In this section we prove the new Bernstein's inequality \Cref{theorem_newbernstein}.
Let us first recall the standard Bernstein's inequality for sub-exponential random variables \cite[Theorem 2.8.2]{vershynin_2018}, which states that for independent, mean zero, sub-exponential random variables $Y_1, Y_2,\dots, Y_m$ and a vector $a=(a_1,\dots,a_m)\in\R^m$, we have
\begin{equation}
\label{eq-std-bernstein}
\P\left(\left|\sum_{i=1}^m a_iY_i \right|\geq u\right) \leq 
2\exp [-c\min\left(\frac{u^2}{K^4\|a\|_2^2},\frac{u}{K^2\|a\|_{\infty}}\right)], 
\end{equation}
where $K^2=\max_i\|Y_i\|_{\psi_1}$.

Compared to \eqref{eq-std-bernstein}, \Cref{theorem_newbernstein} has an extra assumption on the first absolute moment of $Y_i$ -- namely $\E|Y_i|\leq 2$, but it improves the dependence on $K$ in the sub-Gaussian regime from $K^4$ to $K^2\log K$.
It is worth noting that such extra assumption comes naturally when considering isotropic random matrices/vectors. In fact, let $x$ be a fixed point on the unit sphere and let $a_i$ be isotropic random vectors of the same dimension, then $Y_i:=|\inp{a_i}{x}|^2-1$ is mean zero since $a_i$ is isotropic, and $\E |Y_i|\leq \E |\inp{a_i}{x}|^2+1=2$ by triangle inequality.

\subsection*{Proof of \Cref{theorem_newbernstein}}
We will first bound the moments of $Y_i$, then bound their moment generating functions, and finally use Chernoff method to obtain the desired tail bound.

\step{Step 1: Bounding the moments}
The idea here is to write the moment as an integral and then estimate under the two constraints $\E |Y_i|\leq 2$ and $\|Y_i\|_{\psi_1}\leq K^2$.
\begin{lemma}[Moment Bounds]
\label{lemma_pmoment}
Let $Y$ be a sub-exponential random variable satisfying $\E |Y|\leq 2$ and $\|Y\|_{\psi_1} \leq K^2$ with $ K \geq \frac{6}{5}$. Then
\begin{equation*}
 \E |Y|^{p} \leq C^{p} p^p \left(K^2\log K\right)^{p-1}, \; \forall p\geq 1.
\end{equation*}
\end{lemma}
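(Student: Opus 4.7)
My plan is a truncation argument, splitting $\E|Y|^p$ according to whether $|Y|$ exceeds a threshold $T$ of order $pK^2\log K$. The hypothesis $\E|Y|\leq 2$ will control the contribution from $\{|Y|\leq T\}$, while the sub-exponential tail $\P(|Y|\geq t)\leq 2\exp(-t/K^2)$ (coming from $\|Y\|_{\psi_1}\leq K^2$, as discussed after the definition of $\psi_\alpha$-norm) will control the contribution from $\{|Y|>T\}$. The threshold is tuned so that (i) these two contributions are of the same order, and (ii) the exponential decay kills the $K^{2p}$ prefactor that would otherwise survive.

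For the small part I would use the deterministic inequality $|Y|^p\mathbf{1}_{\{|Y|\leq T\}}\leq T^{p-1}|Y|$, giving $\E[|Y|^p\mathbf{1}_{\{|Y|\leq T\}}]\leq 2T^{p-1}$. This is precisely where the first-moment hypothesis is essential: using only the $\psi_1$ norm one would obtain the much weaker bound $(CpK^2)^p$. Plugging in $T\asymp pK^2\log K$ yields a contribution of the desired form $(Cp)^{p-1}(K^2\log K)^{p-1}$, which is the target bound up to a factor of $p$.

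For the large part I would apply the layer-cake identity
$$\E[|Y|^p\mathbf{1}_{\{|Y|>T\}}]=T^p\,\P(|Y|>T)+\int_T^\infty p t^{p-1}\P(|Y|>t)\,dt,$$
insert the sub-exponential tail, substitute $u=t/K^2$, and reduce the integral to an incomplete gamma $K^{2p}\Gamma(p,T/K^2)$. The standard bound $\Gamma(p,a)\leq 2^p\Gamma(p)e^{-a/2}$ together with the choice $T=4pK^2\log K$ (so $a/2=2p\log K$, hence $e^{-a/2}=K^{-2p}$) cancels the $K^{2p}$ factor exactly, leaving a tail contribution of order $(Cp)^p$. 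A similar check shows $T^p\P(|Y|>T)\lesssim (Cp)^p$ as well, using $\log K/K^2\lesssim 1$ for $K\geq 6/5$.

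The main obstacle is then reconciling the "orphan" $(Cp)^p$ term coming from the tail with the target $C^p p^p(K^2\log K)^{p-1}$: these differ by a factor of $(K^2\log K)^{p-1}$, which can be \emph{less than one} when $K$ is close to $6/5$. This is exactly where the assumption $K\geq 6/5$ is used rather than merely $K\geq K_0$: it guarantees $K^2\log K\geq c_0>0$ for an absolute constant $c_0$, so the missing factor $(K^2\log K)^{p-1}\geq c_0^{p-1}$ can be absorbed by enlarging $C$. The boundary case $p=1$ follows immediately from $\E|Y|\leq 2$ (no truncation argument needed), which fixes the constant at $p=1$.
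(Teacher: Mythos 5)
Your proposal is correct and follows essentially the same route as the paper: truncate at a threshold $T\asymp pK^2\log K$, use $\E|Y|\le 2$ to bound the contribution from $\{|Y|\le T\}$ by $2T^{p-1}$, and use the sub-exponential tail to annihilate the $K^{2p}$ factor coming from $\{|Y|>T\}$ (your $T=4pK^2\log K$ plays the role of the paper's $6pK^2\log K$, and in both cases $K$ being bounded away from $1$ is what lets the $(K^2\log K)^{p-1}$ factor be matched). The only cosmetic difference is that you dispatch the tail integral via the incomplete gamma bound $\Gamma(p,a)\le 2^p\Gamma(p)e^{-a/2}$, whereas the paper introduces $f(t)=\P(|Y|\ge t)e^{t/K^2}$, notes $\int_0^\infty f\le K^2$, and pulls out the (decreasing past $T$) factor $u^{p-1}e^{-u/K^2}$ to avoid special functions; both work, and your constant bookkeeping at the end is sound.
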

\begin{proof}
Define $f(t):=  \P(|Y|\geq t) \, e^{t/K^2}$. 
Since $\E |Y|\leq 2$, we have 
\begin{equation}
\int_0^\infty f(t)e^{-t/K^2} dt = \int_0^\infty \P(|Y|\geq t) dt \leq 2.
\label{eq-pmoment-1}
\end{equation}
Also, since $\|Y\|_{\psi_1} \leq K^2$, a change of variable $s=e^{t/K^2}$ gives
\begin{align*}
2  \geq \E \exp(|Y|/K^2) &=\int_0^\infty \P\lp e^{|Y|/K^2} \geq s\rp ds 
=\int_{-\infty}^0 K^{-2}e^{t/K^2} dt + \int_0^\infty K^{-2}f(t) dt.
\end{align*}
Notice that $\int_{-\infty}^0 K^{-2}e^{t/K^2} dt=1$, this becomes
\begin{equation}
\int_0^\infty f(t) \,dt \leq K^2.
\label{eq-pmoment-2}
\end{equation}
For the $p$-th moment of $|Y|$, with a change of variable $s=u^p$, we have
$$ \E |Y|^{p} = \int_0^\infty \P(|Y|^p \geq s) d s 
= \int_0^\infty f(u)e^{-u/K^2} pu^{p-1} du.
$$
We will split this integral into two parts. 

Set $T=6pK^2\log K$. Since $pu^{p-1}$ monotonically increases on $[0,T]$, we have
\begin{align*}
\int_0^T f(u)e^{-u/K^2} pu^{p-1} du \; &\leq\; pT^{p-1}\int_0^T f(u)e^{-u/K^2} du \\
&\overset{\eqref{eq-pmoment-1}}{\leq}\; 2p\left(6pK^2\log K\right)^{p-1}.
\end{align*}
On the other hand, since
$$
 \frac{d}{du} \left(u^{p-1}e^{-u/K^2}\right)=K^{-2}e^{-u/K^2}u^{p-2}\left(K^2(p-1)-u\right),
 	$$
and $T\geq \lp 6\log \frac{6}{5}\rp pK^2>pK^2$ (note that $6\log \frac{6}{5}\approx 1.09$), we can conclude that $u^{p-1}e^{-u/K^2}$ monotonically decreases on $[T, \infty)$. Thus
\begin{align*}
\int_T^\infty f(u)e^{-u/K^2} pu^{p-1} du 
&\leq pT^{p-1}e^{-T/K^2} \int_T^\infty f(u)du \\
&\overset{\eqref{eq-pmoment-2}}{\leq} pT^{p-1}K^{-6p}K^2 \\
&\leq p\left(6pK^2\log K\right)^{p-1}.
\end{align*}
Combining these two parts completes the proof with $C\leq 6$.
\end{proof}

\step{Step 2: Bounding the moment generating function}
Let $Y$ be the random variable as in \Cref{lemma_pmoment}, the moment generating function of $Y$ can be estimated through Taylor series
\begin{align*}
\E \exp(\lambda Y) &= 1+\E Y +\sum_{p\geq 2}\frac{\E (\lambda Y)^p}{p!}  \\
&\leq 1+\sum_{p\geq 2} \frac{|\lambda|^p(C_1p)^p\left(K^2\log K\right)^{p-1}}{p!} \\
&\leq 1+\frac{1}{K^2\log K}\sum_{p\geq 2} \left(C_1e|\lambda| K^2\log K\right)^p.
\end{align*}
Here the first inequality is by \Cref{lemma_pmoment} (with $C_1\leq 6$) and the second inequality uses $p!\geq (p/e)^p$.\\
When $\displaystyle|\lambda|K^2\log K\leq 1/(2C_1e)$, the above summation converges, and we have
\begin{align*}
\E \exp(\lambda Y) 
\leq 1+C_1^2e^2\lambda^2K^2\log K
\leq \exp\left(C_1^2e^2\lambda^2K^2\log K\right),
\end{align*}
where the last inequality uses $1+x\leq e^x$.\\
Hence we have showed 
\begin{equation}
\label{eq_mgfY}
\E\exp(\lambda Y) \leq \exp\left(C_0\lambda^2K^2\log K\right) \text{   when   } |\lambda|K^2\log K\leq c_0
\end{equation}
for absolute constants $C_0=(C_1e)^2$ and $c_0=1/(2C_1e)$.

\step{Step 3: Chernoff bound}
For $\displaystyle \lambda \in \left[0,\frac{c_0}{\|a\|_\infty K^2\log K} \right]$, by Markov's inequality and \Cref{eq_mgfY} we have 
\begin{align}
\P\lp \sum_{i=1}^m a_iY_i \geq u\rp 
&\leq e^{-\lambda u}\E \exp\lp \lambda \sum_{i=1}^m a_i Y_i\rp \nonumber \\
& \leq \exp\left(-\lambda u+\lambda^2C_0\sum_{i=1}^m a_i^2 K_i^2\log K_i\right) \label{eq_bern1}
\end{align}
where $c_0$ and $C_0$ are absolute constants. When we minimize the above expression over $\lambda$, we get the optimal value
$$\lambda_{\mathrm{opt}}=\min\left(\frac{u}{2C_0\sum a_i^2 K_i^2\log K_i},\frac{c_0}{\|a\|_\infty K^2\log K}\right).$$
Next we plug in $\lambda_{\mathrm{opt}}$ into \eqref{eq_bern1} to get
\begin{equation}
\label{eq_bern2}
\P\left(\sum_{i=1}^m a_iY_i \geq u\right) \leq \exp[-\min\left(\frac{u^2}{4C_0\sum a_i^2 K_i^2\log K_i},\frac{c_0u}{2\|a\|_\infty K^2\log K}\right)]. 
\end{equation}
Setting $\displaystyle c=\min\left\{\frac{1}{4C_0},\frac{c_0}{2}\right\}$ in \eqref{eq_bern2}, we obtain the one sided bound
$$ 
\P\left(\sum_{i=1}^m a_iY_i\geq u\right) \leq \exp [-c\min\left(\frac{u^2}{\sum a_i^2 K_i^2\log K_i},\frac{u}{\|a\|_\infty K^2\log K}\right)]. 
	$$
The bound for $\P \lp \sum a_iY_i<-u \rp$ is similarly obtained by considering $-Y_i$ instead of $Y_i$. This completes the proof.

\begin{remark}
\label{remark_EYalpha}
If the random variables $Y_i$ have first absolute moment $\E |Y_i|\leq \alpha$, then the right hand side of \Cref{eq-pmoment-1} becomes $\alpha$ and it is easy to see that \Cref{lemma_pmoment} still holds with $C\leq 6+\alpha$. It follows that the $C_1$ in Step 2 will be no more than $6+\alpha$ and \Cref{theorem_newbernstein} now holds with constant $c=\frac{1}{4(C_1e)^2}\geq \frac{1}{4e^2(6+\alpha)^2}$.
\end{remark}

\section{New Hanson-Wright Inequality}
\label{sec_hansonwright}
Hanson-Wright inequality gives a concentration bound for quadratic forms of random variables. The version in \cite{rudelson2013hanson} states that for a random vector $X=(X_1,\dots,X_n)\in\R^n$ with independent, mean zero, sub-Gaussian coordinates, suppose $\max_i\|X_i\|_{\psi_2}\leq K$ and let $A$ be an $n\times n$ real matrix, then
\begin{equation}
\label{eq-hansonwright}
\P\lp |X^TAX-\E X^TAX| \geq t \rp \leq 2 \exp\left[ -c\min\lp \frac{t^2}{\|A\|_F^2 K^4},\frac{t}{\|A\| K^2}\rp \right].
\end{equation}

In the same spirit as the new Bernstein's inequality, we can improve the tail dependency on $K$ in the sub-Gaussian regime from $K^4$ to $K^2\log K$ under a  further assumption $\E X_i^2=1$ for each $X_i$. This is the new Hanson-Wright inequality \Cref{theorem_newhansonwright}.

It is not difficult to drop the requirement $\E X_i^2=1$ in \Cref{theorem_newhansonwright} by a simple scaling, in which case we have the following corollary.
\begin{corollary}
\label{theorem_hansonwright_alt}
Let $X=(X_1,\dots, X_n)\in\R^n$ be a random vector with independent, mean zero, sub-Gaussian coordinates satisfying $0<\|X_i\|_{\psi_2}\leq K$, then for fixed square matrix $A$,
\[
\P\lp |X^TAX-\E X^TAX| \geq t \rp \leq 2 \exp\left[ -c\min\lp \frac{t^2}{\|A\|_F^2\alpha_2^2\gamma^2K^2\log\frac{K}{\alpha_1}},\frac{t}{\|A\|\gamma^2K^2\log \frac{K}{\alpha_1}}\rp \right].
	\]
where $\alpha_1=\min_i\lp \E X_i^2\rp^{\frac{1}{2}}$, $\alpha_2=\max_i\lp \E X_i^2\rp^{\frac{1}{2}}$ and $\gamma=\alpha_2/\alpha_1$.
\end{corollary}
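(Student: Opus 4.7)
The plan is to reduce to \Cref{theorem_newhansonwright} via a diagonal rescaling that normalizes each coordinate to unit variance. Set $\sigma_i := (\E X_i^2)^{1/2}$, which is strictly positive (since $\|X_i\|_{\psi_2} > 0$ forces $\E X_i^2 > 0$) and lies in $[\alpha_1, \alpha_2]$. Define $Y_i := X_i/\sigma_i$, $D := \mathrm{diag}(\sigma_1,\dots,\sigma_n)$, and $\tilde A := DAD$, so that $X = DY$ and, since $D$ is symmetric,
\[
X^T A X = Y^T \tilde A Y, \qquad \E X^T A X = \E Y^T \tilde A Y.
\]

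Next I would verify the hypotheses of \Cref{theorem_newhansonwright} for $Y$ and $\tilde A$, and control the norms of $\tilde A$. Each $Y_i$ is mean zero with $\E Y_i^2 = 1$ and $\|Y_i\|_{\psi_2} = \|X_i\|_{\psi_2}/\sigma_i \leq K/\alpha_1$, so $\tilde K := \max_i \|Y_i\|_{\psi_2} \leq K/\alpha_1$. The computation in \eqref{eq_unitVar65}, applied to the unit-variance $Y_i$, also gives $\tilde K \geq K_0 = 1/\sqrt{\log 2} > 1$, which forces $K/\alpha_1 \geq K_0 > 1$ so that both $\log \tilde K$ and $\log(K/\alpha_1)$ are strictly positive. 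For the norms, $\|D\| = \alpha_2$ and submultiplicativity give $\|\tilde A\| \leq \alpha_2^2 \|A\|$, while entrywise
\[
\|\tilde A\|_F^2 = \sum_{i,j} \sigma_i^2 \sigma_j^2 A_{ij}^2 \leq \alpha_2^4 \|A\|_F^2.
\]
Since $x \mapsto x^2 \log x$ is increasing on $[K_0, \infty)$, we also get $\tilde K^2 \log \tilde K \leq (K^2/\alpha_1^2)\log(K/\alpha_1)$.

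Finally, I apply \Cref{theorem_newhansonwright} to $Y$ and $\tilde A$ and substitute the three bounds above into the two terms of the min in the exponent. Using $\alpha_2^4/\alpha_1^2 = \alpha_2^2 \gamma^2$, the first argument becomes $t^2/(\|A\|_F^2 \alpha_2^2 \gamma^2 K^2 \log(K/\alpha_1))$; using $\alpha_2^2/\alpha_1^2 = \gamma^2$, the second becomes $t/(\|A\| \gamma^2 K^2 \log(K/\alpha_1))$, matching the claim. No real obstacle is anticipated, since the argument is a direct scaling; the only point that requires care is ensuring $\log(K/\alpha_1) > 0$ so that the bound is well-defined, which is exactly what the universal lower bound from \eqref{eq_unitVar65} supplies.
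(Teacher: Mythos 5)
Your proposal is correct and follows essentially the same route as the paper: both apply \Cref{theorem_newhansonwright} to the rescaled vector $D_{1/\beta}X$ (your $Y$) and the conjugated matrix $D_\beta A D_\beta$ (your $\tilde A$), with $D_\beta = \mathrm{diag}(\sigma_1,\dots,\sigma_n)$. Where the paper's proof is terse (it states the substitution and declares the proof complete), you carefully spell out the remaining details that the paper leaves implicit — the norm bounds $\|\tilde A\| \le \alpha_2^2\|A\|$ and $\|\tilde A\|_F \le \alpha_2^2\|A\|_F$, and the well-definedness of $\log(K/\alpha_1)$ via the lower bound from \eqref{eq_unitVar65} — which is a good habit but does not represent a different approach.
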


\begin{proof}
Let $\beta_i:=(\E X_i^2)^\frac{1}{2}$ for $1\leq i\leq n$ and define diagonal matrices
$$
D_\beta:=\text{diag}(\beta_1,\beta_2,\dots,\beta_n), \quad
D_{1/\beta}:=\text{diag}(1/\beta_1,1/\beta_2,\dots,1/\beta_n).
	$$
Then $\tilde{X}:=D_{1/\beta}X$ satisfies the assumption of \Cref{theorem_newhansonwright} with $\E\tilde{X}_i^2=1$ and $\|\tilde{X}_i\|_{\psi_2}\leq K/\beta_i\leq K/\alpha_1$. 
Applying \Cref{theorem_newhansonwright} to $\tilde{X}$ and $\tilde{A}:=D_\beta AD_\beta$ completes the proof.
\end{proof}
 
\subsection*{Comparison with Other Hanson-Wright Inequalities}
Let us first compare \Cref{theorem_hansonwright_alt} to the standard Hanson-Wright inequality \eqref{eq-hansonwright} in the case when $\gamma=1$. The concentration bound in \eqref{eq-hansonwright} implies that, with probability at least $1-2e^{-t}$,
\begin{equation}
\label{eq-hw-cmp1}
|X^TAX-\E X^TAX| \lesssim K^2\|A\|_F\sqrt{t}+K^2\|A\|t.
\end{equation}
Meanwhile, \Cref{theorem_hansonwright_alt} implies that
\begin{equation}
\label{eq-hw-cmp2}
|X^TAX-\E X^TAX| \lesssim \alpha K \sqrt{\log (K/\alpha)} \|A\|_F\sqrt{t}+K^2\log(K/\alpha)\|A\|t
\end{equation}
where $\alpha=(\E X_i)^\frac{1}{2}$. Note that $\alpha\leq \|X_i\|_{\psi_2}\leq K$, so this bound improves the parameter dependence (up to a log factor) in the sub-Gaussian regime from $K^2$ to $\alpha K$. Such improvement can be significant when $\alpha$ is far less than $K$.

Other variants of Hanson-Wright inequality have appeared in literature with similar improvements \cite{adamczak2015note, vu2015random}. In particular, one of the results by Adamczak \cite{adamczak2015note} works under the assumption that $X$ satisfies the convex concentration property with constant $\tilde{K}$, that is, for every 1-Lipschitz convex function $\varphi : \R^n\to \R$, we always have $\E |\varphi(X)|< \infty$ and 
$$ \P(|\varphi(X)-\E\varphi(X)|\geq t)\leq 2\exp(-t^2/\tilde{K}^2) \;\text{ for any } t\geq 0.$$
Then under such assumption,
\begin{equation}
\label{eq-hw-cmp3}
|X^TAX-\E X^TAX| \lesssim \tilde{K} \sqrt{\|\text{Cov}(X)\|} \|A\|_F\sqrt{t}+\tilde{K}^2\|A\|t
\end{equation}
where $\text{Cov}(X)$ is the covariance matrix of $X$. When $X$ has independent and mean zero coordinates, $\|\text{Cov}(X)\|=\max_i\E X_i^2$. 
However, the convex concentration property is not the same as sub-Gaussianity. More precisely, while it is true that $\tilde{K}$ is independent of dimension when $X$ has i.i.d. coordinates which are bounded almost surely \cite{samson2000concentration}, this can fail when the boundedness assumption of $X_i$ is replaced by sub-Gaussianity (i.e. $\tilde{K}$ could depend on the dimension of $X$ when $X_i$ are i.i.d. and sub-Gaussian) \cite{adamczak2005logarithmic, hitczenko1998hypercontractivity}. Therefore, the bound in \eqref{eq-hw-cmp3} does not imply \eqref{eq-hw-cmp2} nor \eqref{eq-hw-cmp1} in general.

In a more recent paper by Klochkov and Zhivotovskiy \cite{klochkov2018uniform}, the authors proved a uniform version of the Hanson-Wright inequality which, when applying to a single matrix under the same assumption as \Cref{theorem_hansonwright_alt}, yields the following bound:
\begin{equation}
\label{eq-hw-cmp4}
X^TAX-\E X^TAX \lesssim M\E\|AX\|_2 \sqrt{t}+M^2\|A\|t
\end{equation}
where $M=\|\max_i |X_i|\|_{\psi_2}$. This bound also improves \eqref{eq-hw-cmp1} in some cases as demonstrated in \cite{klochkov2018uniform}. We shall compare this bound to \eqref{eq-hw-cmp2} in the sub-Gaussian regime. On one hand, Jensen's inequality tells us that the $\E \|AX\|_2$ factor in \eqref{eq-hw-cmp4} is bounded by the $\alpha \|A\|_F$ factor in \eqref{eq-hw-cmp2}. On the other, the factor $M$ in \eqref{eq-hw-cmp4} is only bounded by $M \lesssim K\sqrt{\log n}$, which could depend on dimension $n$. Moreover, \eqref{eq-hw-cmp4} only provides a one-sided bound instead of two-sided concentration bounds like \Cref{eq-hw-cmp1,eq-hw-cmp2,eq-hw-cmp3}.

 \subsection*{Proof of \Cref{theorem_newhansonwright}}
% \label{sec_pf_hansonwright}
 The main idea of proof is similar to \cite{rudelson2013hanson}, that is to divide the sum into diagonal and off-diagonal, then bound the moment generating function of the latter through a decoupling and comparison argument. However, there are two significant differences. The first difference is the random variables used for comparison. We will use scaled Bernoulli multiplied by standard Gaussian in order to preserve the condition of second moment being 1. Using such random variables also leads to challenges in bounding the moment generating function, which is the second difference.
 
Now we proceed with the proof.
For any $t>0$, let $$p:= \P\lp |X^TAX-\E X^TAX| \geq t \rp$$ be the the tail probability we want to bound. Let $A_1:=\text{diag}(A)$ be the diagonal of $A$ and let $A_2:=A-A_1$. Then
$$
p \leq \P\lp |X^TA_1X-\E X^TA_1X| \geq \frac{t}{2} \rp + \P\lp |X^TA_2X-\E X^TA_2X| \geq \frac{t}{2} \rp =:p_1+p_2.
	$$
We will seek bounds for $p_1$ and $p_2$.
\step{Step 1: The diagonal sum}
The bound for $p_1$ is given by our new Bernstein's inequality. Notice that 
$$X^TA_1X-\E X^TA_1X=\sum a_{ii}(X_i^2-1),$$
where $\E |X_i^2-1|\leq 2$ and $\|X_i^2-1\|_{\psi_1}\leq C\|X_i^2\|_{\psi_1}\leq CK^2$.
So by \Cref{theorem_newbernstein} (note that $K\geq 6/5$ as shown in \Cref{eq_unitVar65}) and the simple relationships between the norms of $A_1$ and $A$, we have
\begin{equation*}
p_1\leq 2 \exp\left[ -c\min\lp \frac{t^2}{\|A\|_F^2K^2\log K},\frac{t}{\|A\|K^2\log K}\rp \right].
\end{equation*}

\step{Step 2: Decoupling}
To bound $p_2$, we will derive a bound for the moment generating function of $X^TA_2X$. Let $X'$ be an independent copy of $X$, then 
$$ \E \exp(\lambda X^TA_2X) \leq \E_{X'} \E_{X} \exp(4\lambda X^TAX'). $$
The above follows directly from the following decoupling lemma.
\begin{lemma}[Decoupling \cite{vershynin_2018}] 
\label{lemma_HW_decoupling} 
Let $A=(a_{ij})$ be a fixed $n\times n$ matrix, and let $X=(X_1,\dots, X_n)\in\R^n$ be a random vector with independent mean zero coordinates. Then for every convex function $F:\R\rightarrow \R$,
 $$
 \E F \lp \sum_{i\neq j} a_{ij} X_iX_j \rp \leq \E F \lp 4X^TAX' \rp,
 	$$
where $X'$ is an independent copy of $X$.
 \end{lemma}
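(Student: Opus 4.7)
The plan is a standard random-selector argument. I would introduce an auxiliary vector $\delta = (\delta_1,\dots,\delta_n)$ of i.i.d.\ Bernoulli$(1/2)$ variables, independent of $X$ and $X'$, and exploit the identity $\E_\delta[\delta_i(1-\delta_j)] = 1/4$ for $i \neq j$, while the same quantity vanishes for $i = j$. This lets me rewrite the off-diagonal form as
\[
\sum_{i \neq j} a_{ij} X_i X_j \;=\; 4\,\E_\delta \sum_{i,j} a_{ij}\, \delta_i(1-\delta_j)\, X_i X_j.
\]
Jensen's inequality applied to the convex $F$ then pulls $F$ inside the $\delta$-expectation, and Fubini gives
\[
\E F\lp \sum_{i\neq j} a_{ij} X_i X_j \rp \;\le\; \E_\delta \E_X F\lp 4 \sum_{i,j} a_{ij}\,\delta_i(1-\delta_j)\, X_i X_j \rp.
\]

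Next I would fix a realization of $\delta$, set $I := \{i : \delta_i = 1\}$ and $I^c := \{j : \delta_j = 0\}$, so that the inner sum collapses to the single off-diagonal block $\sum_{i \in I,\, j \in I^c} a_{ij} X_i X_j$. To introduce the independent copy $X'$, I would define $Y_i := X_i$ for $i \in I$ and $Y_i := X'_i$ for $i \in I^c$, and symmetrically $Z_j := X_j$ for $j \in I^c$, $Z_j := X'_j$ for $j \in I$. Because $I$ and $I^c$ are disjoint and $X,X'$ are i.i.d., the coordinatewise construction shows that $(Y,Z)$ is itself a pair of independent copies of $X$, and the restricted block can be identified with $\sum_{i\in I,\,j\in I^c} a_{ij} Y_i Z_j$ pointwise.

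The main obstacle is passing from this restricted block to the full bilinear form $Y^T A Z = \sum_{i,j} a_{ij} Y_i Z_j$. The step I plan to use is conditioning on the $\sigma$-algebra generated by $\{Y_i\}_{i \in I}$ together with $\{Z_j\}_{j \in I^c}$; under this conditioning the remaining variables $\{Y_i\}_{i \in I^c}$ and $\{Z_j\}_{j \in I}$ are still independent of the conditioning and mean zero. A direct check then shows that each of the three ``missing'' index blocks ($i,j \in I$; $i,j \in I^c$; $i \in I^c, j \in I$) has conditional expectation zero, because every summand in them is linear in at least one mean-zero, non-conditioned factor. Hence the restricted block equals the conditional expectation of $Y^T A Z$, and a second application of Jensen's inequality (to $F$) promotes the restricted sum to the full form. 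Taking the outer expectation and using $(Y,Z) \stackrel{d}{=} (X,X')$ finally yields $\E F(4\sum_{i \in I,\, j \in I^c} a_{ij} Y_i Z_j) \le \E F(4\, X^T A X')$, which combines with the earlier bound to give the claimed decoupling inequality.
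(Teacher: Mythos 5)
Your proof is correct and is essentially the standard selector-variable decoupling argument; the paper does not give its own proof but instead cites Theorem 6.1.1 and Remark 6.1.3 of Vershynin's book, and your write-up reproduces that argument (Bernoulli selectors, Jensen over $\delta$, the explicit $(Y,Z)$ interleaving, and conditional Jensen after identifying the restricted block as a conditional expectation of $Y^TAZ$) in full detail.
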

 See Theorem 6.1.1 and Remark 6.1.3 in \cite{vershynin_2018} for a proof of \Cref{lemma_HW_decoupling}.

\step{Step 3: Comparison}
We will compare $X$ (and $X'$) to scaled Bernoulli multiplied entrywise by standard Gaussian. But first let us look at the case of a single variable through the following lemma. Note here the Bernoulli parameter $(2L)^{-2}<1$ since $K\geq \sqrt{1/\log 2}$, as shown in \eqref{eq_unitVar65}.
\begin{lemma}
\label{lemma_HW_lm2}
For random variable $Z\in\R$, if $\E Z=0$, $\E Z^2=1$ and $\|Z\|_{\psi_2}\leq K$, then
$$\E \exp(tZ)\leq \E_{r,g}\exp(Ctrg),\quad \forall t\in\R $$ 
where $g\sim\mathbf{Normal}(0,1)$, $r^2\sim (2L)^2\cdot \mathbf{Bernoulli}((2L)^{-2})$ and $L^2=K^2\log K$.
\end{lemma}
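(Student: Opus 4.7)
The plan is to match Taylor coefficients of $t^{2p}$ after symmetrizing $Z$ so that only even powers appear on both sides. First, I would lay bare the structure of the right-hand side: conditioning on $r$ collapses the Gaussian integral and gives
\[
\E_{r,g}\exp(Ctrg) \;=\; \E_r \exp(C^2 t^2 r^2/2) \;=\; 1 + \frac{1}{4L^2}\bigl(e^{2C^2 t^2 L^2} - 1\bigr) \;=\; 1 + \sum_{p \geq 1} \frac{(2C^2)^p L^{2(p-1)}}{4\,p!}\,t^{2p},
\]
a power series in $t^2$ only, since the symmetry of $g$ kills every odd moment of $rg$.

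To align the left-hand side with this form, I would symmetrize. Let $Z'$ be an independent copy of $Z$; since $\E Z' = 0$, Jensen's inequality yields $\E\exp(tZ) \leq \E\exp(t(Z-Z'))$, and the symmetry of $Z - Z'$ kills the odd moments of the LHS as well, giving
\[
\E\exp(t(Z-Z')) \;=\; \sum_{p \geq 0} \frac{t^{2p}\,\E(Z-Z')^{2p}}{(2p)!}.
\]
The elementary inequality $(Z-Z')^{2p} \leq 2^{2p-1}(Z^{2p} + (Z')^{2p})$ then gives $\E(Z-Z')^{2p} \leq 2^{2p}\,\E Z^{2p}$, so the problem reduces to bounding the even moments of $Z$.

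These moments come from \Cref{lemma_pmoment} applied to $Y := Z^2$: the hypotheses $\E|Y| = 1 \leq 2$ and $\|Z^2\|_{\psi_1} \leq \|Z\|_{\psi_2}^2 \leq K^2$ are immediate, and the range condition on $K$ is guaranteed by \Cref{eq_unitVar65}. Absorbing absolute constants, one obtains $\E Z^{2p} \leq C_1^p\, p^p\, L^{2(p-1)}$ for every $p \geq 1$. The coefficient of $t^{2p}$ in the symmetrized LHS is therefore at most $4^p C_1^p p^p L^{2(p-1)}/(2p)!$, whereas the RHS coefficient equals $(2C^2)^p L^{2(p-1)}/(4\,p!)$. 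Using $(2p)! \geq 2^p (p!)^2$ together with $p^p \leq e^p\, p!$, the ratio simplifies to $4\,(eC_1/C^2)^p$; picking $C^2 = 4 e C_1$ makes this $\leq 4^{1-p} \leq 1$ for every $p \geq 1$, which closes the proof term by term.

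The main obstacle is the mismatch in symmetry between the two sides: the comparison variable $rg$ is symmetric and has vanishing odd moments, but $Z$ need not be, so a direct Taylor comparison would have to control odd moments of $Z$ and would lose the clean power-series-in-$t^2$ structure. Symmetrization via Jensen resolves this cleanly at the cost of a factor $2^{2p}$, which is comfortably absorbed by the sharp moment bound from \Cref{lemma_pmoment}; once both sides are aligned as series in $t^2$, the remaining coefficient match is routine and determines $C$ explicitly in terms of the constant from \Cref{lemma_pmoment}.
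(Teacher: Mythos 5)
Your proof is correct, and it diverges from the paper's only in how it eliminates the odd moments of $Z$ before matching Taylor coefficients. The paper applies the pointwise inequality $e^x \leq x + \cosh(2x)$, which, after taking expectations and using $\E Z = 0$, immediately produces a series in $t^{2q}$ with coefficient $4^q \E Z^{2q}/(2q)!$. You instead symmetrize: introduce an independent copy $Z'$, apply Jensen to get $\E\exp(tZ)\leq\E\exp(t(Z-Z'))$, and bound $\E(Z-Z')^{2p}\leq 2^{2p}\E Z^{2p}$ by convexity, producing exactly the same coefficient $4^p\E Z^{2p}/(2p)!$. From that point on the two arguments coincide: apply \Cref{lemma_pmoment} to $Y=Z^2$ (hypotheses verified the same way, including $K\geq 6/5$ from \eqref{eq_unitVar65}), expand $\E_r\exp(C^2t^2r^2/2)$ as a series in $t^{2}$, and match coefficients using $(2p)!\geq 2^p(p!)^2$ and $p^p\leq e^p p!$. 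Your symmetrization is the more standard probabilistic move and perhaps the one you would find without knowing the trick; the $\cosh$ inequality is a compact deterministic substitute that avoids introducing a second random variable. Both cost identically and lead to the same $C$.
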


\begin{proof}
Using the inequality $e^x\leq x+\cosh(2x)$, which is true for all $x\in\R$ (see \Cref{appendix_b}), we have
$$
\E\exp(tZ) \leq \E tZ+\E\cosh (2tZ) =0+1+\sum_{q\geq 1}\frac{(2t)^{2q}}{(2q)!}\E Z^{2q}.
	$$
By \Cref{lemma_pmoment} we know that $\E Z^{2q}\leq C_0^q q^qL^{2q-2}$ for any positive integer $q$ and some absolute constant $C_0$, hence
$$
\E\exp(tZ) \leq 1+\sum_{q\geq 1} \frac{4^qt^{2q}}{(2q)!}C_0^q q^q L^{2q-2} \leq 1+\sum_{q\geq 1} \frac{(4C_0)^qt^{2q}}{q!}L^{2q-2}.
	$$
On the other hand, a direct calculation gives
$$
\E_{r,g}\exp(Ctrg)=\E_r\exp(\frac{1}{2}C^2t^2r^2)=1+\sum_{q\geq 1}\frac{(C^2/2)^qt^{2q}}{q!}\E r^{2q} \geq 1+\sum_{q\geq 1} \frac{(C^2/2)^qt^{2q}}{q!}L^{2q-2}.
	$$
Choosing any $C$ such that $C^2\geq 8C_0$ completes the proof.
\end{proof}

Now let $g,r\in \R^n$ be random vectors such that $g\sim \mathbf{Normal}(0,I_n)$ and $r$ has entries $r_i^2\overset{i.i.d}{\sim} (2L)^2\cdot \mathbf{Bernoulli}((2L)^{-2})$ where $L^2=K^2\log K$. Also let $g'$ and $r'$ be independent copies of $g$ and $r$. 
Let $\alpha$ be any vector in $\R^n$, by \Cref{lemma_HW_lm2} and independence we have
\begin{equation}
\label{eq-HW-compare}
\E_X\exp(\alpha^TX)
=\prod_j\E_{X_j}\exp(\alpha_jX_j) \leq \prod_j \E_{r_j,g_j}\exp(C\alpha_j\, r_jg_j) 
= \E_{r,g}\exp(C\alpha^T(r\circ g)).
\end{equation}
Note the above also holds for $\E_{X'}\exp(\alpha^TX')$. Therefore
\begin{align*}
\E_{X'} \E_{X} \exp(4\lambda X^TAX') 
&\leq \E_{X'} \E_{r,g} \exp(C\lambda (r\circ g)^TAX') \\
&=  \E_{r,g} \E_{X'} \exp(C\lambda (r\circ g)^TAX') \\
&\leq  \E_{r,g} \E_{r',g'} \exp(C\lambda(r\circ g)^T A(r'\circ g')) \\
&= \E\exp( C\lambda g^TRAR'g' )
\end{align*}
where $R:=\text{diag}(r)$ and $R':=\text{diag}(r')$. Here the two inequalities are repeated applications of \Cref{eq-HW-compare}.

\step{Step 4: Moment generating function of $g^TRAR'g'$}
Denote $\sigma_i=\sigma_i(RAR')$ the singular values of matrix $RAR'$. From the rotation invariance of $g$ and $g'$ we have
$$
\E \exp(\lambda g^TRAR'g' ) 
= \E_{r,r'} \E_{g,g'}\exp(\sum_{i=1}^n \lambda \sigma_ig_ig_i').
	$$
For standard normal random variables $g_i$ and $g_i'$,
\begin{equation*}
\E_{g_i,g_i'} \exp(\eta g_i g_i')=\E_{g_i}\exp(\frac{1}{2}\eta^2g_i^2)=(1-\eta^2)^{-\frac{1}{2}}
\leq \exp(\eta^2) \;\text{ whenever }\; \eta^2<\frac{1}{2},
\end{equation*}
where the inequality uses $(1-x)^{-\frac{1}{2}}\leq e^x$ when $x\in[0,\frac{1}{2})$ (see \Cref{appendix_b}).\\
Also, note that $\sigma_i\leq \|RAR'\|\leq 4L^2\|A\|$, so if $\displaystyle \lambda^2 <\frac{1}{32L^4\|A\|^2}$ we have
$$
\E \exp(\lambda g^TRAR'g' ) 
\leq \E_{r,r'} \exp(\sum_{i=1}^n \lambda^2\sigma_i^2) 
= \E_{r,r'} \exp(\lambda^2\|RAR'\|_F^2).
	$$
Next, use the following \Cref{lemma_HW_lm4} (with $\eta=16\lambda^2L^4$ and $p=\frac{1}{4L^{2}}$) to bound the moment generating function of $\|RAR'\|_F^2$ and we obtain
$$
\E \exp(\lambda g^TRAR'g' ) 
\leq \exp(8\lambda^2L^2\|A\|_F^2)
\;\text{ when }\; \lambda^2 <\frac{1}{32L^4\|A\|^2}.
	$$

\begin{lemma}
\label{lemma_HW_lm4}
Let $D$ be a diagonal random matrix with i.i.d. entries $D_{ii}=d_i\sim \mathbf{Bernoulli}(p)$, and let $D'$ be an independent copy of $D$. Given a fixed matrix $A$, then
$$
\E\exp(\eta\|DAD'\|_F^2) \leq \exp(2p\eta\|A\|_F^2) \;\text{ when }\; 0<\eta\leq\frac{1}{\|A\|^2}.
	$$
\end{lemma}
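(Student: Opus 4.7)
The plan is to exploit that each $d_i, d_j'$ takes values in $\{0,1\}$, which yields the elementary pointwise inequality $d_i d_j' \leq \tfrac{1}{2}(d_i + d_j')$ (immediate from case-checking the four $\{0,1\}$-assignments). Combined with $A_{ij}^2 \geq 0$, this linearizes the bilinear form:
\[
\|DAD'\|_F^2 = \sum_{i,j} A_{ij}^2 d_i d_j' \;\leq\; \tfrac{1}{2}\sum_i d_i R_i + \tfrac{1}{2}\sum_j d_j' C_j,
\]
where $R_i := \sum_j A_{ij}^2 = \|A^T e_i\|_2^2$ and $C_j := \sum_i A_{ij}^2 = \|A e_j\|_2^2$. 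Two facts drive the rest: $\sum_i R_i = \sum_j C_j = \|A\|_F^2$, while individually $R_i, C_j \leq \|A\|^2$ (the row/column norms of $A$ are dominated by its operator norm).

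Next I would apply $\exp(\eta\cdot)$ to both sides (monotone since $\eta > 0$) and factor over the independent coordinates of $D$ and $D'$:
\[
\E\exp\!\lp \eta \|DAD'\|_F^2 \rp \;\leq\; \prod_i \E\exp\!\lp \tfrac{\eta R_i}{2} d_i \rp \cdot \prod_j \E\exp\!\lp \tfrac{\eta C_j}{2} d_j' \rp.
\]
Each factor is a single-Bernoulli MGF. The hypothesis $\eta\|A\|^2 \leq 1$ places every argument $\eta R_i/2$ (and $\eta C_j/2$) in $[0,1/2]$, and on that range the elementary bound $e^x - 1 \leq 2x$ holds. Thus
\[
\E\exp\!\lp \tfrac{\eta R_i}{2} d_i \rp = 1 + p\!\lp e^{\eta R_i/2} - 1\rp \leq 1 + p\eta R_i \leq \exp(p\eta R_i),
\]
and analogously for the $d_j'$ factors. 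Taking products and using $\sum_i R_i = \sum_j C_j = \|A\|_F^2$ bounds each of the two products by $\exp(p\eta\|A\|_F^2)$, so multiplying them gives the desired $\exp(2p\eta\|A\|_F^2)$.

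The only step requiring an actual idea is the opening linearization $d_i d_j' \leq \tfrac{1}{2}(d_i + d_j')$, which converts a bilinear form in two independent Bernoulli vectors into a sum of two factorizable linear forms. Everything afterwards is a routine single-Bernoulli MGF computation, with the hypothesis $\eta \leq 1/\|A\|^2$ serving precisely to keep the arguments in the range where the crude bound $e^x - 1 \leq 2x$ is available; a substantially finer estimate is unnecessary for this lemma.
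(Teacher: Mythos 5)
Your proof is correct. The route is genuinely different from the paper's, though, in the opening reduction. The paper exploits the contraction $\|DAD'\|_F^2 \leq \|DA\|_F^2$ (valid since $D'$ has $\{0,1\}$ entries, so deleting it can only enlarge the Frobenius norm), which drops $D'$ entirely and leaves a single sum $\sum_i \|A_i\|_2^2 d_i$ over the rows of $A$; after that a single product of Bernoulli MGFs plus $e^x\le 1+2x$ (on $[0,1]$, using $\eta\|A_i\|_2^2\le\eta\|A\|^2\le 1$) and $1+x\le e^x$ finishes it. You instead keep both $D$ and $D'$ and symmetrize via the pointwise inequality $d_id_j'\le\tfrac12(d_i+d_j')$, which linearizes the bilinear form into two independent factorizable sums, one over rows and one over columns, each with arguments halved into $[0,1/2]$; you then apply the same elementary MGF estimates twice. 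The paper's version is marginally simpler (one product rather than two) and arrives at the identical constant $2p\eta\|A\|_F^2$, which arises there from the factor $2$ in $e^x\le 1+2x$ and here from the two symmetric products. Your symmetrization is a perfectly clean alternative and treats $D$ and $D'$ on equal footing, which some readers may find more transparent than the asymmetric drop of $D'$.
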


\begin{proof}
Denote $A_i$ the $i$-th row of $A$. Notice that
$$
\|DAD'\|_F^2\leq \|DA\|_F^2=\sum_i \|A_i\|_2^2 d_i,
	$$
so for $\eta\in \left( 0,\frac{1}{\|A\|^2} \right]$, we have
\begin{align*}
\E\exp(\eta\|DAD'\|_F^2)
&\leq \prod_i \E\exp(\eta \|A_i\|_2^2 d_i) \\
&= \prod_i \lp 1-p+pe^{\eta \|A_i\|_2^2} \rp \\
&\leq \prod_i \lp 1+ 2p\eta\|A_i\|_2^2 \rp  \\
&\leq \exp(2p\eta\|A\|_F^2)
\end{align*}
Here the second last inequality uses $\eta\|A_i\|_2^2\leq \eta\|A\|^2\leq 1$ and $e^x\leq 1+2x$ when $x\in [0,1]$. The last inequality uses $1+x\leq e^x$.
\end{proof}

\step{Step 5: Chernoff bound}
From previous steps we get
$$
\E \exp(\lambda X^TA_2X)
\leq \exp(C\lambda^2L^2\|A\|_F^2)
\;\text{ when }\; \lambda^2 <\frac{c}{L^4\|A\|^2}
	$$
for some absolute constants $C$ and $c$.\\
Notice that $\E X^TA_2X=0$, so by Markov's inequality we have for $0<\lambda \leq \frac{c}{L^2\|A\|}$,
\begin{align*}
\P\lp X^TA_2X-\E X^TA_2X \geq \frac{t}{2} \rp 
& \leq e^{-\lambda t/2}\E\exp(\lambda X^TA_2X) \\
& \leq \exp(-\lambda t/2+C\lambda^2L^2\|A\|_F^2)
\end{align*}
Optimizing this over $\lambda$ (similar to proof of \Cref{theorem_newbernstein}) yields a one sided bound for $p_2$. The other side can then be obtained by considering $-A_2$ (and $-A$) instead of $A_2$. Together they give
\begin{equation*}
p_2\leq 2 \exp\left[ -c\min\lp \frac{t^2}{\|A\|_F^2K^2\log K},\frac{t}{\|A\|K^2\log K}\rp \right].
\end{equation*}

\step{Step 6: The bound for $p$}
Lastly, since $p\leq \min\{1,p_1+p_2\}$, combining the bounds for $p_1, p_2$ and then applying inequality $\min\{1,4e^{-x}\}\leq 2e^{-x/2}$ (see \Cref{appendix_b}) complete the proof of \Cref{theorem_newhansonwright}.

\section{Sub-Gaussian Matrices on Sets}
\label{sec_results_main}
In this section we prove \Cref{theorem_mainBA} and show that the $K\sqrt{\log K}$ tail dependence on $K$ is optimal. \Cref{subsec_main1} studies the simple case when $T$ consists of only a single point. \Cref{subsec_main2} establishes the technical sub-Gaussian increments lemmas and \Cref{subsec_main3} proves \Cref{theorem_mainBA} through these lemmas and Talagrand's Majorizing Measure Theorem. \Cref{sec_tightness} provides an example through scaled Bernoulli random variables that can be used to show the tightness of $K\sqrt{\log K}$ in our concentration bound.

\subsection{Concentration of Random Vectors}
\label{subsec_main1}
Let $X:=Ax\in \R^m$ with $x\in \S^{n-1}$. The isotropic and sub-Gaussian assumption on $A$ now implies $X$ has independent coordinates satisfying $\E X_i^2=1$ and $\|X_i\|_{\psi_2}\leq K$. Moreover, recall that $K\geq \sqrt{1/\log 2}>1$ from \eqref{eq_unitVar65}.

Lemma 5.3 in \cite{liaw2017simple} states that
$$ \| \|X\|_2-\sqrt{m} \|_{\psi_2} \lesssim K^2 .$$
In other words, $\|Ax\|_2$ has a sub-Gaussian concentration around $\sqrt{m}$. It is worth noting that this concentration is independent of the ambient dimension $m$. We will follow a similar proof idea, but use the new inequalities (\Cref{theorem_newbernstein} and \Cref{theorem_newhansonwright}) to generalize and refine this result.

\begin{theorem}
\label{theorem_concentrationBX}
Let $B$ be a fixed $m\times n$ matrix and let $X=(X_1,\dots, X_n)\in\R^n$ be a random vector with independent sub-Gaussian coordinates satisfying $\E X_i^2=1$ and $\|X_i\|_{\psi_2}\leq K$. If either one of the following conditions further holds:
\begin{itemize}
\setlength\itemsep{0em}
\item[$\mathrm{(a)}$] $X$ is mean zero;
\item[$\mathrm{(b)}$] $m=n$ and $B$ is a diagonal matrix.
\end{itemize}
Then
$$ \left\Vert \|BX\|_2-\|B\|_F \right\Vert_{\psi_2}\leq CK\sqrt{\log K}\|B \|. $$
\end{theorem}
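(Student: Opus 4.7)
The plan is to bound the sub-Gaussian norm of $\|BX\|_2 - \|B\|_F$ by first establishing a two-sided tail bound on the squared deviation $|\|BX\|_2^2 - \|B\|_F^2|$ and then converting it into a bound on the first-power deviation. The first step splits into the two cases (a) and (b), each of which falls directly under one of the concentration inequalities already proved in the paper.

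Under either condition (a) or (b) the identity $\E \|BX\|_2^2 = \|B\|_F^2$ holds: in case (a), mean zero and $\E X_i^2 = 1$ combined with independence give $\E XX^T = I_n$, so $\E X^TB^TBX = \mathrm{tr}(B^TB) = \|B\|_F^2$; in case (b), the diagonal structure gives $\|BX\|_2^2 = \sum_i B_{ii}^2 X_i^2$ and the identity follows from $\E X_i^2 = 1$. In case (a) I would apply the new Hanson--Wright inequality (\Cref{theorem_newhansonwright}) to the matrix $M = B^TB$, using $\|M\| = \|B\|^2$ and $\|M\|_F^2 = \sum_i \sigma_i(B)^4 \leq \|B\|^2 \|B\|_F^2$, to obtain
$$
\P\lp |\|BX\|_2^2 - \|B\|_F^2| \geq u \rp \leq 2\exp\left[-c\min\lp \frac{u^2}{\|B\|_F^2\|B\|^2 K^2\log K},\ \frac{u}{\|B\|^2 K^2\log K}\rp\right].
$$
In case (b) the diagonal structure turns the deviation into $\sum_i B_{ii}^2 (X_i^2-1)$. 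Each $Y_i := X_i^2 - 1$ is mean zero with $\E|Y_i| \leq 2$ and $\|Y_i\|_{\psi_1} \lesssim K^2$, so the new Bernstein inequality (\Cref{theorem_newbernstein}) applies with weights $a_i = B_{ii}^2$; since $\|a\|_\infty = \|B\|^2$ and $\sum a_i^2 \leq \|B\|^2\|B\|_F^2$, it delivers exactly the same bound.

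To convert this to a bound on $|\|BX\|_2 - \|B\|_F|$, I would use the elementary facts that for non-negative reals $Z, a$ one has both $|Z^2-a^2| \geq |Z-a|^2$ and $|Z^2-a^2| \geq a|Z-a|$, so
$$
\{|\|BX\|_2 - \|B\|_F| \geq t\} \subseteq \{|\|BX\|_2^2 - \|B\|_F^2| \geq \max(t^2,\, t\|B\|_F)\}.
$$
Plugging $u = \max(t^2, t\|B\|_F)$ into the above tail bound and splitting into the two cases $t \leq \|B\|_F$ (where $u = t\|B\|_F$) and $t > \|B\|_F$ (where $u = t^2$), a short algebraic check shows that in both cases the minimum inside the exponential is controlled by $c t^2 / (\|B\|^2 K^2 \log K)$. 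This is a sub-Gaussian tail with parameter $\|B\|K\sqrt{\log K}$, and converting a tail bound of this form into a $\psi_2$-norm bound (using the standard equivalence recalled in \Cref{appendix_psi_alpha_properties}) yields the stated conclusion. Note that $\log K$ is well-defined throughout because $K \geq K_0 > 1$ by \eqref{eq_unitVar65}.

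The main subtlety I anticipate is the second case of the substitution: verifying that the linear (sub-exponential) branch $u/(\|B\|^2 K^2\log K)$ still produces a pure sub-Gaussian tail in $t$ after plugging $u = t^2$. The resolution is that the condition $t > \|B\|_F$ forces $t^2/(\|B\|^2 K^2\log K) \leq t^4/(\|B\|_F^2\|B\|^2 K^2\log K)$, so the sub-exponential branch becomes the binding one and returns a $-ct^2$ in the exponent; checking this inequality is the only non-routine step. Everything else amounts to carefully assembling the inequalities already available in \Cref{sec_results_bernstein,sec_hansonwright}.
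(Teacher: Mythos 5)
Your proof is correct and follows essentially the same approach as the paper: apply the new Hanson--Wright inequality (case (a)) or new Bernstein's inequality (case (b)) to get a tail bound on $\|BX\|_2^2-\|B\|_F^2$, then use the elementary inequality $|\alpha^2-\beta^2|\geq\max\{|\alpha-\beta|^2,|\alpha-\beta|\beta\}$ with the case split at $t=\|B\|_F$ to transfer this to a sub-Gaussian tail bound on $\|BX\|_2-\|B\|_F$, and finally convert the tail into a $\psi_2$-norm bound. Your verification that the sub-exponential branch becomes binding when $t>\|B\|_F$ is exactly the algebraic check the paper performs.
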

\begin{proof}
The conclusion is trivial if $\|B\|=0$, so we will assume $B$ is non-zero.
\vskip .1in
\noindent (a) Let $A:=B^TB$, then
$$ X^TAX=\|BX\|_2^2,\quad \E X^TAX=\|B\|_F^2, $$
and
$$ \|A\|=\|B\|^2, \quad \|A\|_F\leq \|B^T\|\|B\|_F=\|B\|\|B\|_F. $$
Let $Y:=\|BX\|_2^2-\|B\|_F^2$, by \Cref{theorem_newhansonwright} we have
\begin{equation}
\label{eq-concBX-Y}
\P \lp \left| Y \right| \geq t \rp \leq 2\exp[-c\min\left(\frac{t^2}{\|B\|^2\|B\|_F^2 K^2\log K},\frac{t}{\|B\|^2 K^2\log K}\right)]. 
\end{equation}
Note that for $\alpha, \beta, s \geq 0$,
$$
|\alpha-\beta|\geq s \quad\Rightarrow\quad |\alpha^2-\beta^2|\geq \max\{s^2, s\beta\}.
	$$
(This readily comes from the inequalities $|\alpha^2-\beta^2|\geq |\alpha-\beta|^2$ and $|\alpha^2-\beta^2|\geq |\alpha-\beta|\beta$ whenever $\alpha,\beta\geq 0$.)\\
Let $Z:=\|BX\|_2 -\|B\|_F$, then
$$ \P(|Z|\geq s)\leq \P \lp |Y|\geq \max\{s^2, s\|B\|_F\} \rp. $$
To bound this probability, we observe that
\begin{itemize}
\item[]{\makebox[4cm]{if $0\leq s\leq \|B\|_F$, then\hfill}}
$ \displaystyle 
\P(|Z|\geq s) \leq \P( |Y|\geq s\|B\|_F ) \leq 2\exp(\frac{-cs^2}{\|B\|^2 K^2\log K});\;$ and
\item[]{\makebox[4cm]{if $s\geq \|B\|_F$, then\hfill}}
$ \displaystyle
\P(|Z|\geq s) \leq \P(|Y|\geq s^2) \leq 2\exp(\frac{-cs^2}{\|B\|^2 K^2\log K}).$
\end{itemize}
Combining these two bounds and then using property (b) in \Cref{appendix_psi_alpha_properties} complete the proof.

\vskip .1in
\noindent (b) We will first use Bernstein's inequality to obtain \eqref{eq-concBX-Y}. Denote $b_i:=B_{ii}$ the diagonal entries of $B$, then
$$Y:=\|BX\|_2^2-\|B\|_F^2 =\sum_{i=1}^m b_i^2\lp X_i^2-1 \rp.$$
For random variables $X_i^2-1$, notice that
$$\E |X_i^2-1|\leq 2 \quad\text{ and }\quad \|X_i^2-1\|_{\psi_1}\leq C\|X_i^2\|_{\psi_1}\leq CK^2,$$
where the $\psi_1$-norm estimate is from property (f) in \Cref{appendix_psi_alpha_properties}.
Also, note that $K\geq 6/5$ as shown by \Cref{eq_unitVar65}.
Using \Cref{theorem_newbernstein} and the inequality $\sum b_i^4\leq \lp \max_i b_i^2 \rp \cdot \sum b_i^2 = \|B\|^2\|B\|_F^2,$ we have 
$$ 
\P\left(|Y| \geq t\right) \leq 
2\exp [-c\min\left(\frac{t^2}{\|B\|^2\|B\|_F^2 K^2\log K},\frac{t}{\|B\|^2 K^2\log K}\right)]. 
	$$
The rest of the proof is the same as in (a).
\end{proof}

\subsection{Sub-Gaussian Increments Lemma}
\label{subsec_main2}
A key lemma for \Cref{theorem_mainBA} is to show that the random process $Z_x:=\|BAx\|_2-\|B\|_F \|x\|_2 $ has sub-Gaussian increments. That is, $\|Z_x-Z_y\|_{\psi_2}\leq M\|x-y\|_2$ for some $M$ and for all $x,y\in \R^n$. 
Theorem 1.3 in \cite{liaw2017simple} showed sub-Gaussian increments for $B=I_m$ with $M= CK^2$. Here we improve and generalize this result to any $B$ with $M=C K\sqrt{\log K}\,\|B\|$. From \Cref{prop_tightness_ex}, it is easy to see that this  $K\sqrt{\log K}$ factor is in fact tight for certain $A, B, x$ and $y$.

We will prove two versions of the sub-Gaussian increment lemma. The first one (\Cref{lemma_subG_incrementB}) is for arbitrary $B$, but require the random matrix $A$ to be mean zero. The second one (\Cref{lemma_subG_incrementB_diagonal}) is only for diagonal $B$, but does not require zero mean from $A$.

For \Cref{lemma_subG_incrementB} below, the beginning of the proof follows the argument in \cite{liaw2017simple}, except we will use \Cref{theorem_concentrationBX} for better tail bounds. Later on in the proof, we will use a different approach to bound one of the tail probabilities (i.e. $p_3$) through the new Hanson-Wright inequality (\Cref{theorem_newhansonwright}).

\begin{lemma}
\label{lemma_subG_incrementB}
Let $B\in\R^{l \times m}$ be a fixed matrix and let $A\in\R^{m\times n}$ be a mean zero, isotropic and sub-Gaussian matrix with sub-Gaussian parameter $K$. Then the random process
$$ Z_x:=\|BAx\|_2-\|B\|_F \|x\|_2 $$
has sub-Gaussian increments with
$$ \| Z_x-Z_y\|_{\psi_2} \leq CK\sqrt{\log K}\,\|B\| \|x-y\|_2 , \;\; \forall x,y\in\R^n.  $$
\end{lemma}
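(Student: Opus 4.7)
The plan is to follow the Liaw--Mehrabian--Plan--Vershynin framework but upgrade the quadratic-form step to the new Hanson--Wright inequality (\Cref{theorem_newhansonwright}). Without loss of generality I assume $\|x\|_2\ge\|y\|_2>0$, the trivial $y=0$ case following from \Cref{theorem_concentrationBX} applied to $Ax$. The argument will split into two regimes depending on the ratio $\|x-y\|_2/\|y\|_2$.

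In the far regime $\|x-y\|_2\ge \tfrac12\|y\|_2$ I would just use the triangle inequality $\|Z_x-Z_y\|_{\psi_2}\le\|Z_x\|_{\psi_2}+\|Z_y\|_{\psi_2}$, bound each term by $CK\sqrt{\log K}\,\|B\|\,\|x\|_2$ (resp.\ $\|y\|_2$) via \Cref{theorem_concentrationBX} applied to $Ax$ and $Ay$, and observe that $\|x\|_2+\|y\|_2\le 3\|x-y\|_2$ in this regime. In the near regime $\|x-y\|_2<\tfrac12\|y\|_2$, where $x$ and $y$ are close compared to their magnitudes, I would exploit the algebraic identity
\[
\|BAx\|_2-\|BAy\|_2=\frac{\|BAx\|_2^2-\|BAy\|_2^2}{\|BAx\|_2+\|BAy\|_2}=\frac{(BA(x-y))^T(BA(x+y))}{\|BAx\|_2+\|BAy\|_2},
\]
combined with $\|x\|_2-\|y\|_2=(\|x\|_2^2-\|y\|_2^2)/(\|x\|_2+\|y\|_2)$, to rearrange
\[
Z_x-Z_y=\frac{\mathrm{Num}}{\|BAx\|_2+\|BAy\|_2}-\frac{\|B\|_F(\|x\|_2-\|y\|_2)(Z_x+Z_y)}{\|BAx\|_2+\|BAy\|_2},
\]
where $\mathrm{Num}:=(x-y)^T\bigl(A^TB^TBA-\|B\|_F^2 I\bigr)(x+y)$ is mean zero by the isotropy computation $\E[A^TB^TBA]=\|B\|_F^2 I$. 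I would then bound the tail of $|Z_x-Z_y|$ by a union bound over three events
\[
p_1=\P\bigl(\|BAx\|_2<\tfrac12\|B\|_F\|x\|_2\bigr),\quad p_2=\P\bigl(\|BAy\|_2<\tfrac12\|B\|_F\|y\|_2\bigr),\quad p_3=\P(|\mathrm{Num}|>\tau).
\]
The probabilities $p_1$ and $p_2$ come directly from \Cref{theorem_concentrationBX}, and on their common complement the denominator is $\gtrsim \|B\|_F(\|x\|_2+\|y\|_2)$; using $|\|x\|_2-\|y\|_2|\le\|x-y\|_2$ together with the sub-Gaussianity of $Z_x+Z_y$ (again \Cref{theorem_concentrationBX}) then controls the second term by a pure sub-Gaussian of scale $CK\sqrt{\log K}\,\|B\|\,\|x-y\|_2$.

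The main obstacle is $p_3$, and this is where the new Hanson--Wright inequality is essential. I would polarize via the unit vectors $\tilde u=(x-y)/\|x-y\|_2$ and $\tilde v=(x+y)/\|x+y\|_2$ to rewrite
\[
\mathrm{Num}=\frac{\|x-y\|_2\|x+y\|_2}{4}\bigl[G(\tilde u+\tilde v)-G(\tilde u-\tilde v)\bigr],\qquad G(s):=\|BAs\|_2^2-\|B\|_F^2\|s\|_2^2,
\]
so each $G(s)$ is a centered quadratic form to which \Cref{theorem_newhansonwright} applies: for any fixed $s\neq 0$ the vector $As/\|s\|_2\in\R^m$ has independent mean-zero unit-variance coordinates with $\psi_2$-norm at most $K$, and the matrix $B^TB$ satisfies $\|B^TB\|_F\le\|B\|\|B\|_F$ and $\|B^TB\|=\|B\|^2$. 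Triangulating the tails for $G(\tilde u\pm\tilde v)$, multiplying by the prefactor $\|x-y\|_2\|x+y\|_2/4$, using $\|\tilde u\pm\tilde v\|_2\le 2$, and dividing by the lower bound $\|B\|_F(\|x\|_2+\|y\|_2)/2\ge \|B\|_F\|x+y\|_2/2$ for the denominator will produce the target sub-Gaussian scale $CK\sqrt{\log K}\,\|B\|\,\|x-y\|_2$. The delicate technical point in this last step is to verify that on the range of $t$ relevant for the $\psi_2$-norm the sub-exponential contribution of the Hanson--Wright tail is absorbed by its sub-Gaussian part once these scalings and the deterministic inequality $\|x+y\|_2\le\|x\|_2+\|y\|_2$ are taken into account; combining the resulting tail with the bounds on $p_1,p_2$ and converting via the $\psi_2$-norm characterization in \Cref{appendix_psi_alpha_properties} will finish the proof.
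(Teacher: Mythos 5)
Your decomposition identity for $Z_x-Z_y$ is correct (I verified the algebra), and the polarization of $\mathrm{Num}$ followed by \Cref{theorem_newhansonwright} on each $G(\tilde u\pm\tilde v)$ is a reasonable way to reach the quadratic form. But your split into a far regime $\|x-y\|_2\ge\frac12\|y\|_2$ and a near regime $\|x-y\|_2<\frac12\|y\|_2$ is a \emph{geometric} split, whereas what is actually needed inside the near regime is a split on the tail parameter $t$ — and that is where a genuine gap appears.

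Concretely: to get the $\psi_2$-norm bound you must show $\P(|Z_x-Z_y|>t)\lesssim\exp(-ct^2/M^2)$ with $M=K\sqrt{\log K}\,\|B\|\,\|x-y\|_2$ for \emph{every} $t>0$. Your near-regime union bound includes the two bad events $p_1=\P\bigl(\|BAx\|_2<\tfrac12\|B\|_F\|x\|_2\bigr)$ and $p_2$, which are constants of order $\exp(-c\|B\|_F^2/(K^2\log K\,\|B\|^2))$ and do not decay in $t$. These constants are dominated by $\exp(-ct^2/M^2)$ only in the range $t\lesssim\|B\|_F\|x-y\|_2$. Likewise the Hanson–Wright tail for $p_3$, with $r\asymp t\|B\|_F/\|x-y\|_2$, is in its sub-Gaussian regime only when $r\lesssim\|B\|_F^2$, i.e.\ again $t\lesssim\|B\|_F\|x-y\|_2$; for larger $t$ the sub-exponential branch governs and the claimed sub-Gaussian $\psi_2$ scale fails. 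You flag the Hanson–Wright crossover as a "delicate point to verify," but it is not a verification — for $t\gg\|B\|_F\|x-y\|_2$ a \emph{different} argument is required, and the $p_1,p_2$ constants present exactly the same obstruction (which you do not mention at all). The paper circumvents this by first reducing to the unit sphere (so $\|x\|_2=\|y\|_2=1$) and then splitting on the tail parameter $s=t/\|x-y\|_2$: for $0<s<2\|B\|_F$ the analogue of $p_2$ is $\le 2\exp(-c(\|B\|_F/2)^2/(K^2\log K))\le 2\exp(-cs^2/(16K^2\log K))$ precisely \emph{because} $\|B\|_F>s/2$ there, and for $s\ge2\|B\|_F$ it falls back to the reverse triangle inequality $|Z_x-Z_y|\le\|BA(x-y)\|_2$ together with \Cref{theorem_concentrationBX}. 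To repair your proof you would need to insert the same $t$-split inside your near regime (using, e.g., $|Z_x-Z_y|\le\|BA(x-y)\|_2+\|B\|_F\|x-y\|_2$ when $t\gtrsim\|B\|_F\|x-y\|_2$), at which point the geometric far/near distinction becomes redundant; this is the structural reason the paper organizes the case analysis around the tail parameter rather than around $\|x-y\|_2/\|y\|_2$.

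Two smaller points: in the far regime the constant is $5$ rather than $3$ (from $\|y\|_2\le2\|x-y\|_2$ and $\|x\|_2\le3\|x-y\|_2$), and you silently introduce a fourth event (the tail of $Z_x+Z_y$) beyond the three you announce; both are cosmetic and absorbed into $C$, but worth tightening.
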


\begin{proof}
The statement is invariant under scaling for $B$. So without loss of generality, we will assume $B$ has operator norm $\|B\|=1$.
\subsubsection*{Step 1: Show sub-Gaussian increments for $x,y\in \S^{n-1}$ on the unit sphere}
Without loss of generality, assume $x\neq y$ and define
$$ p:=\P \lp \frac{|Z_x-Z_y|}{\|x-y\|_2} \geq s \rp = \P \lp \frac{| \|BAx\|_2-\|BAy\|_2 |}{\|x-y\|_2} \geq s \rp.
	$$
We need to bound this tail probability by a Gaussian whose standard deviation is the order of $K\sqrt{\log K}$. Consider the following two cases:	
\begin{itemize}
\item $s\geq 2\|B\|_F$. Denote $u:=\frac{x-y}{\|x-y\|_2}$ and by triangle inequality we have
$$
p\leq\P \lp \frac{ \|BA(x-y)\|_2}{\|x-y\|_2} \geq s \rp = \P \lp \|BAu\|_2\geq s \rp =: p_1.
	$$
	
\item $0<s<2\|B\|_F$. Write $p$ as
$$
p=\P \lp |Z| \geq s(\|BAx\|_2+\|BAy\|_2) \rp \quad\text{ where }\quad Z:=\frac{\|BAx\|_2^2-\|BAy\|_2^2}{\|x-y\|_2}.
	$$
Then
$$
p\leq \P \lp |Z| \geq s\|BAx\|_2 \rp \leq \P \lp \|BAx\|_2\leq \frac{1}{2}\|B\|_F \rp + \P \lp |Z|>\frac{s}{2}\|B\|_F \rp =: p_2+p_3,
	$$
where $p_2$ and $p_3$ denote the first and second summand respectively.
\end{itemize}
Next we derive bounds for $p_1$, $p_2$ and $p_3$.

\vskip .1in
{\bf \noindent Bound for $p_1$}
\vskip .1in
From $s\geq 2\|B\|_F$ we have
$$
p_1 =  \P \lp \|BAu\|_2-\|B\|_F \geq s -\|B\|_F\rp \leq  \P \lp \|BAu\|_2-\|B\|_F \geq \frac{s}{2} \rp.
	$$
Applying \Cref{theorem_concentrationBX} to the random vector $Au$ we get
$$
p_1 \leq 2\exp(-c\frac{s^2}{4K^2\log K}).
	$$
	
\vskip .1in
{\bf \noindent Bound for $p_2$}
\vskip .1in
Applying \Cref{theorem_concentrationBX} to the random vector $Ax$ and note that $\|B\|_F>\frac{1}{2}s$, we get
$$
p_2\leq 2\exp(-c\frac{(\|B\|_F/2)^2}{K^2\log K})\leq 2\exp(-c\frac{s^2}{16K^2\log K}).
	$$	
	
\vskip .1in
{\bf \noindent Bound for $p_3$}
\vskip .1in
Denote $u:=\frac{x-y}{\|x-y\|_2}$ and $v:=x+y$, then $\inp{u}{v}=0$ since $\|x\|_2=\|y\|_2=1$. We can write $Z$ as
$$
Z=\frac{\inp{BA(x-y)}{BA(x+y)}}{\|x-y\|_2}=\inp{BAu}{BAv}.
	$$
Notice that 
\[
2\inp{BAu}{BAv}=\inp{BA(u+v)}{BA(u+v)}-\inp{BAu}{BAu}-\inp{BAv}{BAv}.
	\]
Let us also denote $X_w:=Aw$ for $w\in \R^n$, then from $\E X_wX_w^T=\|w\|_2^2\, I_n$ we have
$$
\E \|BX_w\|_2^2=\E\, \mathrm{tr}(B^TBX_wX_w^T) = \mathrm{tr}(B^TB\,\E \lp X_wX_w^T \rp ) =\|w\|_2^2\|B\|_F^2.
	$$
Thus we can further write $Z$ as
 \begin{align*}
 Z& = \frac{1}{2}\|BX_{u+v}\|_2^2 - \frac{1}{2}\|BX_u\|_2^2 - \frac{1}{2}\|BX_v\|_2^2 \\
 & = \frac{1}{2} \lp \|BX_{u+v}\|_2^2 -\E \|BX_{u+v}\|_2^2 \rp - \frac{1}{2} \lp \|BX_u\|_2^2-\E \|BX_u\|_2^2 \rp  \notag \\ &\qquad
 - \frac{1}{2} \lp \|BX_v\|_2^2- \E \|BX_v\|_2^2 \rp \\
 & = \frac{1}{2}Y_{u+v} -\frac{1}{2}Y_u -\frac{1}{2}Y_v.
 \end{align*}
where the second equality uses the fact that $Z$ is mean zero and in the last equality $Y_w:=\|BX_w\|_2^2- \E \|BX_w\|_2^2$. Therefore
\begin{align*}
p_3 &= \P (|Y_{u+v}-Y_u-Y_v|>s\|B\|_F) \\
& \leq \P \lp |Y_{u+v}|+|Y_u|+|Y_v|>s\|B\|_F \rp \\
&\leq \P \lp |Y_{u+v}| \geq \frac{s}{2}\|B\|_F \rp + \P \lp |Y_u|+|Y_v| > \frac{s}{2}\|B\|_F \rp \\
&\leq \P \lp |Y_{u+v}| \geq \frac{s}{2}\|B\|_F \rp + \P \lp |Y_u| \geq \lp 1-\frac{1}{8}\|v\|_2^2 \rp \frac{s}{2}\|B\|_F \rp \notag \\ &\qquad
+\P \lp |Y_v| > \frac{1}{8}\|v\|_2^2\cdot \frac{s}{2}\|B\|_F \rp \\
&=: p_4+p_5+p_6.
\end{align*}
We will bound $p_4,\,p_5$ and $p_6$ through the new Hanson-Wright inequality (\Cref{theorem_newhansonwright}).

For any non-zero vector $w$, define $\bar{w}:=\frac{w}{\|w\|_2}$. It is easy to see that $X_w=\|w\|_2 X_{\bar{w}}$ and 
$Y_w=\|w\|_2^2Y_{\bar{w}}$. Also note that
$$ \|B^TB\|=\|B\|^2=1, \quad \|B^TB\|_F\leq \|B^T\|\|B\|_F=\|B\|\|B\|_F=\|B\|_F, $$
so by \Cref{theorem_newhansonwright} we have 
\begin{align*}
\P \lp |Y_{\bar{w}}|\geq r \rp 
 & \leq 2\exp[-c\min\left(\frac{r^2}{\|B\|_F^2 K^2\log K},\frac{r}{K^2\log K}\right)] \\
 &= 2\exp(-c \frac{r^2}{\|B\|_F^2 K^2\log K}) \quad\text{ when }\, 0\leq r \leq \|B\|_F^2.
\end{align*}
Hence for $0\leq t\leq \|w\|_2^2\|B\|_F^2$,
\begin{equation}
\label{eq-boundp456}
\P(|Y_w|\geq t) = \P \lp |Y_{\bar{w}}| \geq \frac{t}{\|w\|_2^2} \rp
\leq 2\exp(\frac{-ct^2}{\|w\|_2^4\|B\|_F^2 K^2\log K}).
\end{equation}
Now we apply \Cref{eq-boundp456} to $p_4,\,p_5$ and $p_6$.

\begin{itemize}
\item For $p_4$. Since $s<2\|B\|_F$ and $\|u+v\|_2=\sqrt{1+\|v\|_2^2}\in [1,\sqrt{5})$, we can conclude that
$$\frac{s}{2}\|B\|_F< \|B\|_F^2\leq \|u+v\|_2^2\|B\|_F^2$$
and therefore
$$ p_4 \leq 2\exp(\frac{-cs^2}{4\|u+v\|_2^4 K^2\log K}) \leq 2\exp(\frac{-cs^2}{100 K^2\log K}). $$

\item For $p_5$. Notice that $\|u\|_2=1$ and $1-\frac{1}{8}\|v\|_2^2 \in (\frac{1}{2}, 1]$, so
$$ p_5\leq \P \lp |Y_u|\geq \frac{s}{4}\|B\|_F \rp \leq 2\exp(\frac{-cs^2}{16 K^2\log K}). 
	$$
	
\item For $p_6$. If $v=0$ (i.e. $x=-y$), then $p_6=\P (0>0)=0$. Now assume $v\neq 0$, then by \eqref{eq-boundp456} we have
$$  p_6=\P \lp |Y_v| > \frac{\|v\|_2^2}{16}s\|B\|_F \rp \leq 2\exp(\frac{-cs^2}{256 K^2\log K}). 
	$$
\end{itemize}

\vskip .1in
{\bf \noindent Putting everything together -- the bound for $p$}
\vskip .1in
So far we have showed that 
\[
p\leq \max\{ p_1, p_2+p_3\}  \quad\text{ and }\quad 
p_3\leq p_4+p_5+p_6,
\]
where $p_i \leq 2\exp(\frac{-cs^2}{K^2\log K})$ for some absolute constant $c$ and $1\leq i\leq 6$. Note $p\leq 1$ and the inequality $\min\{1,8e^{-x}\}\leq 2e^{-x/3}$ (see \Cref{appendix_b}), we get
$$ p\leq \min \left\{ 1, 8\exp(\frac{-cs^2}{K^2\log K})\right\} \leq 2 \exp(\frac{-cs^2}{3K^2\log K}).
	$$

\subsubsection*{Step 2: Show sub-Gaussian increments for all $x$ and $y$}
Without loss of generality, we can assume $\|x\|_2=1$ and $\|y\|_2\geq 1$. Let $\bar{y}:=\frac{y}{\|y\|_2}$ be the projection of $y$ onto unit ball, then by triangle inequality,
$$
\|Z_x-Z_y\|_{\psi_2} \leq \|Z_x-Z_{\bar{y}}\|_{\psi_2} + \|Z_y-Z_{\bar{y}}\|_{\psi_2}=:R_1+R_2.
	$$
Here $R_1$ it is bounded by $CK\sqrt{\log K}\|x-\bar{y}\|_2$ since $x,\bar{y}\in\S^{n-1}$, and $$R_2=\| \lp \|y\|_2-1\rp Z_{\bar{y}}\|_{\psi_2}=\|y-\bar{y}\|_2\|Z_{\bar{y}}\|_{\psi_2}\leq CK\sqrt{\log K}\|y-\bar{y}\|_2,$$
where the first equality uses $Z_y=\|y\|_2Z_{\bar{y}}$, the second equality is true since $\|y\|_2-1=\|y-\bar{y}\|_2$ and the last inequality follows from \Cref{theorem_concentrationBX}. Combining these bounds we get
 $$
 \|Z_x-Z_y\|_{\psi_2} \leq CK\sqrt{\log K} \lp \|x-\bar{y}\|_2+\|y-\bar{y}\|_2 \rp. 
  	$$
Finally, note that $\|x\|_2=1$, so by non-expansiveness of projection, $\|x-\bar{y}\|_2\leq \|x-y\|_2$, and by definition of projection, $\|y-\bar{y}\|_2\leq \|y-x\|_2$. This completes the proof.
\end{proof}

Next we show the second version of sub-Gaussian increment lemma, which requires $B$ to be diagonal and does not need $A$ to be mean zero. The proof is mostly the same as \Cref{lemma_subG_incrementB}, so we will only highlight the differences.
\begin{lemma}
\label{lemma_subG_incrementB_diagonal}
Let $B\in\R^{l \times m}$ be a fixed diagonal matrix and let $A\in\R^{m\times n}$ be a isotropic, sub-Gaussian matrix with sub-Gaussian parameter $K$, then the random process
$$ Z_x:=\|BAx\|_2-\|B\|_F \|x\|_2 $$
has sub-Gaussian increments with
$$ \| Z_x-Z_y\|_{\psi_2} \leq CK\sqrt{\log K}\,  \|B\| \|x-y\|_2, \;\; \forall x,y\in\R^n.  $$
\end{lemma}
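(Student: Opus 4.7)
The plan is to mirror the structure of the proof of Lemma \ref{lemma_subG_incrementB}, tracking the two places where the mean-zero hypothesis on $A$ was used and replacing each by an argument that exploits the diagonal structure of $B$. After reducing to $\|B\|=1$ as before, I would split into the same two steps: establishing sub-Gaussian increments on the sphere, then extending to all of $\R^n$.

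For $x,y\in\S^{n-1}$, set $u=(x-y)/\|x-y\|_2$ and $v=x+y$, and split according to $s\geq 2\|B\|_F$ versus $s<2\|B\|_F$ exactly as in the previous lemma. The bounds on $p_1=\P(\|BAu\|_2\geq s)$ and $p_2=\P(\|BAx\|_2\leq \|B\|_F/2)$ carry over verbatim, since part (b) of \Cref{theorem_concentrationBX} supplies the required concentration of $\|BAw\|_2$ around $\|B\|_F$ for any isotropic, sub-Gaussian $A$ without a mean-zero assumption, provided $B$ is diagonal.

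The main work is bounding $p_3=\P(|Z|>s\|B\|_F/2)$ where $Z=(\|BAx\|_2^2-\|BAy\|_2^2)/\|x-y\|_2$. In the non-diagonal proof this went through polarization into three $Y_w$'s followed by three Hanson-Wright bounds, a route that breaks without zero mean. Here the diagonal form lets me proceed directly: writing $b_i:=B_{ii}$ and applying the elementary identity $(A_i x)^2-(A_i y)^2=\|x-y\|_2(A_iu)(A_iv)$ row by row, I obtain
\[
Z=\sum_i b_i^2\,W_i,\qquad W_i:=(A_i u)(A_i v),
\]
and the $W_i$ are independent because the rows $A_i$ are. Isotropy gives $\E W_i=u^T I v=\inp{u}{v}=0$ (since $\|x\|_2=\|y\|_2$) regardless of $\E A$, while the row sub-Gaussian bound yields $\E|W_i|\leq \|u\|_2\|v\|_2=\|v\|_2$ and $\|W_i\|_{\psi_1}\leq \|A_iu\|_{\psi_2}\|A_iv\|_{\psi_2}\leq K^2\|v\|_2$. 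Normalizing $\widetilde W_i:=W_i/\|v\|_2$ puts the summands into the hypotheses of \Cref{theorem_newbernstein}, which I would apply with coefficients $a_i=b_i^2$, so that $\sum a_i^2\leq \|B\|^2\|B\|_F^2$ and $\|a\|_\infty=\|B\|^2$. With threshold $s\|B\|_F/(2\|v\|_2)$ and using $\|v\|_2\leq 2$ together with $s<2\|B\|_F$, a short check shows both terms inside the Bernstein min are $\gtrsim s^2/(K^2\log K)$, giving $p_3\lesssim \exp(-cs^2/(K^2\log K))$.

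Combining $p_1,p_2,p_3$ and using the $\min\{1,Ce^{-x}\}\leq 2e^{-x/c'}$ trick from Lemma \ref{lemma_subG_incrementB} closes the sphere step, and the extension to general $x,y\in\R^n$ (projecting the longer vector onto the unit sphere) is identical, since it too relies only on \Cref{theorem_concentrationBX}(b). The only real obstacle I anticipate is the $p_3$ accounting above; everything else is a transcription from the previous proof, with the diagonal hypothesis precisely the feature that lets me swap a Hanson-Wright argument for a single Bernstein bound on a sum of independent products.
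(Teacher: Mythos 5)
Your proposal is correct and follows essentially the same approach as the paper's proof: exploiting the diagonal structure of $B$ to write $Z=\sum_i b_i^2 W_i$ as a weighted sum of independent terms $W_i=\inp{A_i}{u}\inp{A_i}{v}$, checking that the $W_i$ are mean zero by isotropy and satisfy the moment and $\psi_1$ hypotheses of \Cref{theorem_newbernstein}, and applying the new Bernstein inequality in place of Hanson--Wright. The one loose end is that your normalization by $\|v\|_2$ requires separate handling of the degenerate case $v=0$ (i.e., $x=-y$), where $Z\equiv 0$ and $p_3=0$ trivially; the paper avoids this by not normalizing.
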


\begin{proof}
If $B$ is not a square matrix, we can always add $m-l$ rows of zeros to $B$ (when $l<m$) or remove the last $l-m$ rows of zeros from $B$ (when $l>m$). This will turn $B$ into a $m \times m$ square matrix without changing the values of $\|BAx\|_2$, $\|B\|_F$ and $\|B\|$. So without loss of generality, we can assume $B$ is a square matrix. Also without loss of generality, we can further assume $\|B\|=1$ since the conclusion is invariant under scaling for $B$.

The remaining proof for \Cref{lemma_subG_incrementB_diagonal} is the same as proof for \Cref{lemma_subG_incrementB} except for bounding $p_3$ in Step 1. A bound for $p_3$ here can be obtained through the new Bernstein's inequality (\Cref{theorem_newbernstein}) as detailed below.

Recall that
$$
Z=\inp{BAu}{BAv}=\sum_{i=1}^mb_i^2 \inp{A_i}{u}\inp{A_i}{v}=:\sum_{i=1}^m b_i^2Y_i,
	$$
where $b_i:=B_{ii}$ and $A_i$ is the $i$-th row of $A$. The random variables $Y_i:=\inp{A_i}{u}\inp{A_i}{v}$ are independent, with
$$
\E Y_i=\frac{\E \inp{A_i}{x-y}\inp{A_i}{x+y}}{\|x-y\|_2}=\frac{\E\inp{A_i}{x}^2 -\E\inp{A_i}{y}^2}{\|x-y\|_2}=\frac{1-1}{\|x-y\|_2}=0,
	$$
$$
\E |Y_i|\leq \E \frac{\inp{A_i}{u}^2+\inp{A_i}{v}^2}{2}\leq \frac{1+4}{2}=\frac{5}{2}.
	$$
Here we used $\|x\|_2=\|y\|_2=\|u\|_2=1$, $\|v\|_2\leq 2$, and that $A_i$ is isotropic.
Furthermore, from property (d) in \Cref{appendix_psi_alpha_properties} 
%\eqref{eq-subg2_subexp} 
we have
$$
\|Y_i\|_{\psi_1}\leq \|\inp{A_i}{u}\|_{\psi_2}\|\inp{A_i}{v}\|_{\psi_2}\leq K\|u\|_2 \cdot K\|v\|_2\leq 2K^2.
	$$
Therefore by \Cref{theorem_newbernstein} and note that $\sum b_i^4\leq \lp \max_i b_i^2 \rp \cdot \sum b_i^2 = \|B\|_F^2$, we have
$$
\P\lp \left| \sum b_i^2Y_i \right| > t \rp \leq 
2\exp [-c\min\left(\frac{t^2}{\|B\|_F^2 K^2\log K},\frac{t}{K^2\log K}\right)]
	$$
Since $0<s<2\|B\|_F$, we get
$$
p_3=\P\lp \left| Z \right| > \frac{s}{2}\|B\|_F \rp \leq 2\exp(-c\frac{s^2}{4K^2\log K}).
	$$

\end{proof}

\subsection{Proof of \texorpdfstring{\Cref{theorem_mainBA}}{}}
\label{subsec_main3}
\Cref{theorem_mainBA} follows form the sub-Gaussian increments lemmas and Talagrand's Majorizing Measure Theorem. 
Let us first recall the Majorizing Measure Theorem. The following statement is from \cite{liaw2017simple}.
\begin{theorem}[Majorizing Measure Theorem]
\label{theorem_majorm}
Let $(Z_x)_{x\in T}$ be a random process on a bounded set $T\subset \R^n$. Assume that the process has sub-Gaussian increments, that is there exists $M\geq 0$ such that
\[ \|Z_x-Z_y\|_{\psi_2} \leq M\|x-y\|_2 \;\text{ for all }\; x,y \in T. \]
Then
\[ \E \sup_{x,y\in T}|Z_x-Z_y|\leq CM \, \E\sup_{x\in T}\inp{g}{x}, \]
where $g\sim\mathbf{Normal}(0,I_n)$. Moreover, for any $u\geq 1$, the event 
\[ \sup_{x,y\in T}|Z_x-Z_y|\leq CM \lp \E\sup_{x\in T}\inp{g}{x}+u\cdot \mathrm{diam}(T) \rp \]
holds with probability at least $1-e^{-u^2}$. Here $\mathrm{diam}(T):=\sup_{x,y\in T}\|x-y\|_2$.
\end{theorem}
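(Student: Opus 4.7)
The plan is to derive the theorem from Talagrand's generic chaining method combined with the Fernique--Talagrand majorizing measure theorem for Gaussian processes. The generic chaining converts a sub-Gaussian increment hypothesis on $(Z_x)_{x\in T}$ into a supremum bound in terms of Talagrand's $\gamma_2$-functional on the metric space $(T,\|\cdot\|_2)$, while the majorizing measure theorem identifies $\gamma_2$ up to absolute constants with the Gaussian complexity $\E\sup_{x\in T}\inp{g}{x}$. The two halves are logically independent, so I would prove them separately and glue them at the end.

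First, recall that an admissible sequence of partitions $(\mathcal{A}_n)_{n\geq 0}$ of $T$ satisfies $|\mathcal{A}_0|=1$ and $|\mathcal{A}_n|\leq 2^{2^n}$, and set
\[ \gamma_2(T,\|\cdot\|_2) := \inf_{(\mathcal{A}_n)} \sup_{x\in T}\sum_{n\geq 0}2^{n/2}\,\mathrm{diam}(A_n(x)), \]
where $A_n(x)\in \mathcal{A}_n$ is the cell containing $x$. Fix an admissible sequence nearly attaining this infimum and pick representatives $\pi_n(x)\in A_n(x)$ with $\pi_0$ constant. For any fixed $x\in T$ expand telescopically as $Z_x - Z_{\pi_0} = \sum_{n\geq 1}(Z_{\pi_n(x)}-Z_{\pi_{n-1}(x)})$. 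Each increment has $\psi_2$-norm at most $CM\cdot \mathrm{diam}(A_{n-1}(x))$. At level $n$, the number of distinct successive pairs across all $x\in T$ is at most $|\mathcal{A}_n|\cdot |\mathcal{A}_{n-1}|\leq 2^{2^{n+1}}$; applying the sub-Gaussian tail estimate to each pair with deviation threshold proportional to $M(2^{n/2}+u)\,\mathrm{diam}(A_{n-1}(x))$ and union-bounding over pairs and levels yields, with probability at least $1-e^{-u^2}$,
\[ \sup_{x\in T}|Z_x-Z_{\pi_0}|\leq CM\,\gamma_2(T,\|\cdot\|_2) + CMu\cdot\mathrm{diam}(T). \]
The triangle inequality converts this into a bound on $\sup_{x,y\in T}|Z_x-Z_y|$, and integration of the tail (via the layer-cake identity) gives the expectation bound at $u=0$.

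Second, invoke the Fernique--Talagrand majorizing measure theorem: for the canonical Gaussian process $G_x = \inp{g}{x}$ on $\R^n$, whose induced pseudo-metric is exactly $\|\cdot\|_2$,
\[ c\,\gamma_2(T,\|\cdot\|_2)\leq \E\sup_{x\in T}\inp{g}{x}\leq C\,\gamma_2(T,\|\cdot\|_2). \]
Substituting the lower estimate on $\gamma_2$ into the generic chaining bounds of the previous paragraph yields both conclusions of the theorem, since $\mathrm{diam}(T)\leq 2\,\mathrm{rad}(T)$ when $T$ is centered (and in any case appears on the right-hand side).

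The main obstacle is the lower bound in Fernique--Talagrand: constructing an admissible sequence whose $\gamma_2$ cost is controlled by the Gaussian width requires Talagrand's partition scheme driven by the Gaussian concentration and small-ball inequality, and this is the genuinely deep input. The generic chaining half, by contrast, is mechanical but bookkeeping-heavy, the chief care being to arrange the union bound so that the deterministic $\gamma_2$ contribution and the stochastic $u\cdot\mathrm{diam}(T)$ contribution decouple cleanly. Since both pieces are standard and fully worked out in Talagrand's monograph, the proof in practice reduces to citing them.
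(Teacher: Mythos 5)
The paper does not prove this theorem at all: it cites the expectation bound to Talagrand (Theorem 2.4.12 of his monograph) and the tail bound to Dirksen (Theorem 3.2 of ``Tail bounds via generic chaining''), and your proposal is precisely the standard proof of those two cited results --- generic chaining against $\gamma_2(T,\|\cdot\|_2)$ plus the Fernique--Talagrand comparison $\gamma_2(T,\|\cdot\|_2)\asymp \E\sup_{x\in T}\inp{g}{x}$. So in substance your route and the paper's are the same, and the outline is sound.

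One concrete point in your chaining step does not deliver the stated bound as written. If at every level $n$ you use the threshold $cM(2^{n/2}+u)\,\mathrm{diam}(A_{n-1}(x))$, then on the good event you collect $cM\sum_{n}(2^{n/2}+u)\,\mathrm{diam}(A_{n-1}(x))$, and the $u$-part of this sum is $u\sum_n \mathrm{diam}(A_{n-1}(x))$, which is \emph{not} controlled by $u\cdot\mathrm{diam}(T)$ (take the first $N$ partitions trivial and the sum is $N\,\mathrm{diam}(T)$); it is only controlled by $u\,\gamma_2(T,\|\cdot\|_2)$. That yields the weaker tail $CM(1+u)\,\E\sup_{x\in T}\inp{g}{x}$, whereas the theorem --- and the paper's downstream use of $w(T)+u\cdot\mathrm{rad}(T)$ with large $u$ --- needs the decoupled form. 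The standard repair (this is what Dirksen does) is to split the chain at the level $J$ with $2^{J}\le u^2<2^{J+1}$: bound the coarse jump $Z_{\pi_J(x)}-Z_{\pi_0}$ as a single increment of $\psi_2$-norm at most $M\,\mathrm{diam}(T)$, union-bounded over the at most $2^{2^J}\le 2^{u^2}$ values of $\pi_J(x)$ at deviation $CMu\,\mathrm{diam}(T)$, and run the weighted chain with threshold $CM2^{n/2}\mathrm{diam}(A_{n-1}(x))$ only for $n>J$, whose failure probability sums to $O(e^{-2^{J+1}})=O(e^{-u^2})$. With that modification your argument gives exactly the statement; without it, the $u\cdot\mathrm{diam}(T)$ term silently inflates to $u\cdot w(T)$.
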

The first part of \Cref{theorem_majorm} can be found in \cite[Theorem 2.4.12]{talagrand2014upper} and the second part can be found in \cite[Theorem 3.2]{2015dirksen}.

\begin{proof}[{\bf Proof of \Cref{theorem_mainBA}}]
Let $Z_x:=\|BAx\|_2-\|B\|_F\|x\|_2$.

For the expectation bound, take an arbitrary $y\in T$, then from triangle inequality we have
\[ \E\sup_{x\in T}|Z_x|\leq \E\sup_{x\in T}|Z_x-Z_y|+\E |Z_y|. \]
Using \Cref{lemma_subG_incrementB} and \Cref{theorem_majorm} (Majorizing Measure Theorem), we get
\[ \E\sup_{x\in T}|Z_x-Z_y| \leq \E\sup_{x,y\in T}|Z_x-Z_y| \lesssim K\sqrt{\log K}\, \|B\| w(T)\]
Using property (e) in \Cref{appendix_psi_alpha_properties} and \Cref{lemma_subG_incrementB}, we get
\[ \E|Z_y| \lesssim \|Z_y\|_{\psi_2} = \|Z_y-Z_0\|_{\psi_2}\lesssim K\sqrt{\log K}\, \|B\|\|y\|_2. \]
Therefore $\E\sup_{x\in T}|Z_x|\leq CK\sqrt{\log K}\, \|B\| \lp w(T)+\mathrm{rad}(T) \rp $.

For the high probability bound, notice that the result is trivial when $u<1$. When $u\geq 1$, fix an arbitrary $y\in T$ and use triangle inequality again to get
\[ \sup_{x\in T}|Z_x|\leq \sup_{x\in T}|Z_x-Z_{y}|+ |Z_{y}| \leq \sup_{x,x'\in T}|Z_x-Z_{x'}|+ |Z_{y}|. \]
Since $\mathrm{diam}(T)\leq 2\,\mathrm{rad}(T)$, applying \Cref{lemma_subG_incrementB} and \Cref{theorem_majorm} we know that the event
\[ \sup_{x,x'\in T}|Z_x-Z_{x'}| \lesssim K\sqrt{\log K} \, \|B\| [ w(T)+u\cdot \mathrm{rad}(T) ] \]
holds with probability at least $1-e^{-u^2}$. \\
To bound $|Z_y|$, again by \Cref{lemma_subG_incrementB}, $\|Z_y\|_{\psi_2}=\|Z_y-Z_0\|_{\psi_2}\leq CK\sqrt{\log K}\,\|B\|\|y\|_2$, so the event
\[ |Z_y| \lesssim uK\sqrt{\log K} \,\|B\| \|y\|_2  \]
holds with probability at least $1-2e^{-u^2}$. Combining these yields the desired high probability bound.

Finally, when $B$ is a diagonal matrix and $A$ is not necessarily mean zero, we can repeat the above argument with \Cref{lemma_subG_incrementB_diagonal} instead of \Cref{lemma_subG_incrementB}. This completes the proof.
\end{proof}

\subsection{An Example for Lower Bound}
\label{sec_tightness}
Here we give an example based on scaled Bernoulli random variables. This example can be used to demonstrate that the $K\sqrt{\log K}$ factor in \Cref{theorem_mainBA} tail bound is optimal in general.
\begin{prop}
\label{prop_tightness_ex}
Let $K\geq 4$ and let $X=(X_1,\dots,X_m)\in\R^m$ be a random vector with independent coordinates such that $\frac{1}{K^2\log K}X_i^2 \sim \mathbf{Bernoulli}\lp \frac{1}{K^2\log K}\rp$,
then $\|X_i\|_{\psi_2}\leq K$. Furthermore, if $m\geq K^2\log K$, then we also have
\begin{equation}
\label{eq-ex-tight}
 \| \|X\|_2-\sqrt{m}\|_{\psi_2} \geq cK\sqrt{\log K}
\end{equation}
for some absolute constant $c> \frac{1}{5}$. Note that this result does not depend on the distribution of the signs of $X_i$.
\end{prop}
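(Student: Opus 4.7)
The first assertion is immediate from the two-valued structure of $X_i^2$: writing $p := 1/(K^2\log K)$, direct computation gives
\begin{equation*}
\E \exp(X_i^2/K^2) \;=\; (1-p) + p\,e^{\log K} \;=\; 1 + \frac{K-1}{K^2\log K},
\end{equation*}
which is bounded by $2$ whenever $K \geq 4$, so $\|X_i\|_{\psi_2} \leq K$.

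For~\eqref{eq-ex-tight}, the plan is to use the standard equivalence $\|Y\|_{\psi_2} \geq t \Longleftrightarrow \E\exp(Y^2/t^2) \geq 2$ (valid for any non-degenerate $Y$ by continuity and strict monotonicity of $s \mapsto \E\exp(Y^2/s^2)$) with $Y := \|X\|_2 - \sqrt{m}$, and therefore to exhibit a single $t_\star > K\sqrt{\log K}/5$ with $\E\exp(Y^2/t_\star^2) \geq 2$. The key structural fact is that $\|X\|_2 = K\sqrt{\log K}\,\sqrt{N}$, where $N := \#\{i : X_i \neq 0\} \sim \mathrm{Bin}(m, p)$; in particular, on the event $\{N = 0\}$ we have $Y^2 = m$ deterministically, so retaining only this single contribution yields
\begin{equation*}
\E\exp(Y^2/t^2) \;\geq\; \P(N = 0)\,e^{m/t^2} \;=\; (1-p)^m \exp(mp/c^2)
\end{equation*}
upon setting $t = c K \sqrt{\log K}$ and using $m/t^2 = mp/c^2$.

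Applying the elementary inequality $\log(1-p) \geq -p/(1-p)$ lower-bounds the right-hand side by $\exp\bigl(mp[c^{-2} - (1-p)^{-1}]\bigr)$. Since the hypothesis $m \geq K^2 \log K$ forces $mp \geq 1$, and since $K \geq 4$ forces $p \leq 1/(16 \log 4) < 0.05$ and hence $(1-p)^{-1} < 1.06$, the choice $c = 1/4$ makes the bracket exceed $\log 2$ by a wide margin, so the whole expression is well above $2$. This delivers $\|Y\|_{\psi_2} \geq K\sqrt{\log K}/4 > K\sqrt{\log K}/5$, as required. The conclusion is manifestly insensitive to the signs of the $X_i$ because $\|X\|_2$ depends only on $|X_i|$. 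The only delicate point is the numerical verification that after the cancellation between $(1-p)^m$ and $\exp(mp/c^2)$ the resulting constant clears $1/5$ with room; by the above bracket computation this succeeds for any $c$ up to roughly $1/\sqrt{1+\log 2} \approx 0.77$, so the margin over $1/5$ is ample.
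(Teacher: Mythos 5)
Your proof is correct, and it takes a genuinely different (and more elementary) route than the paper's. The paper works with the \emph{upper} tail: it sets $Z:=\|X\|_2-\sqrt m$, shows via a binomial-tail lower bound (Lemma 4.5, built on an Ash-type estimate and a monotonicity analysis of the Kullback--Leibler divergence) that $\P(|Z|\geq vL)$ is not too small for $v$ in a window $[\alpha,2\alpha]$, and then integrates to lower-bound $\E\exp(\lambda Z^2/L^2)$. You instead exploit the \emph{lower} tail: on the single atomic event $\{N=0\}$ (no nonzero coordinates), which has probability $(1-p)^m$, one has $Y^2 = m$ deterministically, and this one term already makes $\E\exp(Y^2/t^2)$ exceed $2$ at $t = cK\sqrt{\log K}$, because $m/t^2 = mp/c^2$ dominates $-m\log(1-p)\approx mp$ as soon as $c<1$. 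The upshot: your argument avoids the binomial anti-concentration lemma entirely, the exponent computation is a one-liner via $\log(1-p)\geq -p/(1-p)$ together with $mp\geq 1$ and $p\leq 1/(16\log 4)$, and it even yields a slightly better constant ($c=1/4$ versus the paper's $c=1/\sqrt{10\log 10}\approx 0.208$). The minor tradeoff is that your argument leans on the fact that this particular scaled-Bernoulli distribution has a large atom at zero; the paper's upper-tail route is less tailored to that feature and would transfer more readily to distributions without such an atom. For the proposition as stated, yours is the cleaner proof. One small wording nit: the asserted two-sided "equivalence" $\|Y\|_{\psi_2}\geq t\Leftrightarrow\E\exp(Y^2/t^2)\geq 2$ can fail at the boundary; all you actually use is the one implication $\E\exp(Y^2/t_\star^2)\geq 2\Rightarrow\|Y\|_{\psi_2}\geq t_\star$, which follows directly from monotonicity of $t\mapsto\E\exp(Y^2/t^2)$ and the infimum definition, and that is all that is needed.
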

Since $X_i$ has the same $\psi_2$-norm as $|X_i|$, and $\|X\|_2$ can be determined by all the $|X_i|$, it is easy to see that the above result does not depend on the distribution of the signs of $X_i$. 
Also, note that the expected number of non-zero coordinates for $X$ is $\frac{m}{K^2\log K}$, so assumption $m\geq K^2\log K$ is requiring this expected number to be at least 1, i.e. a typical realization of $X$ should be non-zero. This assumption can be considered mild.

 %%%%%% tightness proof %%%%%%%%%%%
To prove \Cref{prop_tightness_ex}, we will need a lower bound on Binomial tails. The following estimate is taken from \cite[Lemma 4.7.2]{robert1990ash}:
 \begin{equation}
 \label{eq-binomial-lower-integerk}
 \P\lp \mathbf{Binomial}(m,p)\geq k\rp \geq \frac{1}{\sqrt{8k(1-\frac{k}{m})}}\exp \lp -m D\lp \frac{k}{m} \,\|\, p\rp \rp \;\text{ when }\;  p<\frac{k}{m}<1, \, k\in \mathbb{N}.
 \end{equation}
Here $D(x\|y)$ is the Kullback-Leibler divergence between two Bernoulli distributions with parameters $x$ and $y$ respectively given by
 $$ D(x\|y)=x\log \frac{x}{y} + (1-x)\log\frac{1-x}{1-y}. $$
 Moreover, for $0<y<x<1$,
 \begin{equation}
 \label{eq-KL-mono}
  \frac{\partial}{\partial x}D(x\|y) = \log \frac{x}{y} + \log\frac{1-y}{1-x} > 0.
 \end{equation}
Estimate \eqref{eq-binomial-lower-integerk} is only for integers $k$.
%However, we can extend it to all real numbers $k \in [mp,m/2)$ as in \Cref{lemma-binomial-lower} below. This extension is true mainly because the left hand side tail probability is a decreasing step function in $k$ and the right hand side lower bound is a decreasing function in $k$ on $[mp,\frac{m}{2}]$. We will apply this lemma in the proof of \Cref{prop_tightness_ex}.
However, using \eqref{eq-binomial-lower-integerk}, we can also prove a variant which is true for all real numbers $k \in (mp+1,m/2)$. This is \Cref{lemma-binomial-lower} below. The proof of this lemma is based on two observations. First, the tail $\P\lp \mathbf{Binomial}(m,p)\geq k\rp$ is a left-continuous decreasing step function, with possible discontinuity at integer points. Second, the right hand side lower bound in \eqref{eq-binomial-lower-integerk} is a decreasing function in $k$ on $[mp,\frac{m}{2}]$.
We will apply \Cref{lemma-binomial-lower} in the proof of \Cref{prop_tightness_ex}.

\begin{lemma}
\label{lemma-binomial-lower}
Assume $p < \frac{1}{4}$ and $mp\geq 1$, then
 \begin{equation*}
 \P\lp \mathbf{Binomial}(m,p)\geq k-1\rp \geq \frac{1}{\sqrt{8k(1-\frac{k}{m})}}\exp \lp -m D\lp \frac{k}{m} \,\|\, p\rp \rp .
 \end{equation*}
for all real numbers $k\in (mp+1, m/2)$.
\end{lemma}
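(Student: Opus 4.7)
The plan is to reduce the real-valued statement to the integer case \eqref{eq-binomial-lower-integerk} by exploiting two complementary observations: the binomial tail $t \mapsto \P(\mathbf{Binomial}(m,p) \geq t)$ is a left-continuous step function of $t$ (jumping down at integers), while the right-hand side bound $k \mapsto \frac{1}{\sqrt{8k(1-k/m)}}\exp(-mD(k/m\|p))$ is continuous and monotone decreasing in $k$ on the relevant interval. This mismatch of continuity/monotonicity is precisely what makes the extension from integers to reals essentially free.

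First, introduce $k^{*} := \lceil k-1 \rceil$, so that $k^{*}$ is an integer and $\P(\mathbf{Binomial}(m,p) \geq k-1) = \P(\mathbf{Binomial}(m,p) \geq k^{*})$ by the integer-valuedness of the binomial. I then need to check that $k^{*}$ satisfies the hypotheses of \eqref{eq-binomial-lower-integerk}. From $k > mp+1$ we get $k-1 > mp$, hence $k^{*} \geq k-1 > mp \geq 1$, so $k^{*}$ is a positive integer with $p < k^{*}/m$; and $k^{*} \leq k < m/2$ gives $k^{*}/m < 1$. Applying \eqref{eq-binomial-lower-integerk} at $k^{*}$ yields
\[
\P(\mathbf{Binomial}(m,p) \geq k-1) \;\geq\; \frac{1}{\sqrt{8k^{*}(1-k^{*}/m)}} \exp\!\left(-m\, D(k^{*}/m \,\|\, p)\right).
\]

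To conclude, I argue that the right-hand side of the target bound, viewed as a function of $k$, is decreasing on $[mp, m/2]$, so replacing $k^{*}$ by the larger $k \leq m/2$ only weakens the bound. Two monotonicities are at play: (i) the quadratic $k(1-k/m) = k - k^{2}/m$ is increasing on $[0, m/2]$, so $1/\sqrt{8k(1-k/m)}$ is decreasing there; (ii) by \eqref{eq-KL-mono}, $D(x\|p)$ is increasing in $x$ on $(p,1)$, so $\exp(-mD(k/m\|p))$ is decreasing for $k/m \in (p, 1)$. Since $k^{*} \leq k$ and both $k^{*}/m, k/m$ lie in $(p, 1/2)$, the value of each factor at $k^{*}$ dominates its value at $k$, which delivers the inequality.

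I don't foresee a real obstacle; the two monotonicity facts are routine and \eqref{eq-KL-mono} is already recorded. The only point requiring some care is the \emph{strict} inequality $k^{*} > mp$ (needed for the envelope to be well-defined and for \eqref{eq-binomial-lower-integerk} to apply) in the edge case when $mp$ itself happens to be an integer; this is handled by the strict hypothesis $k > mp+1$, which forces $k-1 > mp$ and hence $k^{*} = \lceil k-1\rceil > mp$ regardless of whether $mp$ is integer.
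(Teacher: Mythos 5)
Your proposal is correct and follows essentially the same approach as the paper: reduce to the integer case of \eqref{eq-binomial-lower-integerk} at $\lceil k-1\rceil$ and then use monotonicity of the right-hand side on $[mp, m/2]$ to pass from the integer point back to $k$. The only cosmetic difference is that the paper establishes the monotonicity via a direct derivative computation of the full expression, whereas you factor it into $\lp 8k(1-k/m)\rp^{-1/2}$ and $\exp\lp-mD(k/m\|p)\rp$ and observe each factor is separately decreasing, which is a slightly more elementary way to reach the same conclusion.
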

\begin{proof}
Define
$$ q(t):=\P\lp \mathbf{Binomial}(m,p)\geq t\rp  \;\;\text{ and }\;\;
f(t) := \frac{1}{\sqrt{8t(1-\frac{t}{m})}}\exp \lp -m D\lp \frac{t}{m} \,\|\, p\rp \rp.
	$$
Also denote $u=t/m$, we can calculate the derivative of $f(t)$:
\begin{align*}
\frac{df}{dt} &=  \left[ -\frac{1-\frac{2t}{m}}{2\sqrt{8}t^{\frac{3}{2}}(1-\frac{t}{m})^{\frac{3}{2}}} 
+ \frac{1}{\sqrt{8t(1-\frac{t}{m})}} \cdot (-m)\frac{\partial D}{\partial u}\frac{\partial u}{\partial t} \right] 
\exp \lp -m D\lp \frac{t}{m} \,\|\, p\rp \rp \\
&= -\frac{\exp \lp -m D\lp \frac{t}{m} \,\|\, p\rp \rp}{\sqrt{8t(1-\frac{t}{m})}}
\left[ \frac{1-\frac{2t}{m}}{2t(1-\frac{t}{m})} + \frac{\partial D}{\partial u} \right].
\end{align*}
When $t \in [mp, m/2]$, notice $\frac{\partial D}{\partial u}\geq 0$ by \eqref{eq-KL-mono} and we get $\frac{df}{dk}\leq 0$. 
So $f(t)$ monotonically decreases on $ [mp, m/2]$. 

On the other hand, notice that $q(t)$ is a left-continuous decreasing step function, so by the above and \eqref{eq-binomial-lower-integerk} we have
\[
q(k-1)=q(\lceil k-1 \rceil) \geq q(\lfloor k \rfloor) \geq f(\lfloor k \rfloor) \geq f(k), \quad k\in (mp+1,m/2)
\]
where $\lfloor\cdot\rfloor$ and $\lceil\cdot\rceil$ are the floor and ceiling functions.
%To show the desired result, let us partition interval $[mp,m/2)$ by integers and consider:
%\begin{itemize}
%\item[(i)] $mp \leq k < \lfloor mp+1 \rfloor$ where $\lfloor\cdot\rfloor$ is the floor function. Since $q(k)$ is constant on this interval and note that (any) median of $\mathbf{Binomial}(m,p)$ is greater or equal to $\lfloor mp \rfloor$ \cite[Corollary 1]{kaas1980mean}, we have
%$$ q(k) = q(mp) = q(\lfloor mp \rfloor) \geq \frac{1}{2}.$$ 
%Hence
%$$f(k)\leq f(mp)=\frac{1}{\sqrt{8mp(1-p)}} \leq \frac{1}{2} \leq q(k).$$
%\item[(ii)] $n \leq k < \min\{n+1, m/2\}$ where $n\in (mp,m/2)\cap \mathbb{N}$. From \eqref{eq-binomial-lower-integerk} we know $ q(n)  \geq f(n)$ and therefore
%$$q(k)=q(n)\geq f(n)\geq f(k).$$
%\end{itemize}
This completes the proof.
\end{proof}

 \begin{proof}[{\bf Proof of \Cref{prop_tightness_ex}}]
$\|X_i\|_{\psi_2}\leq K$ follows directly from definition since
 $$
 \E \exp(X_i^2/K^2) = \frac{1}{K^2\log K}e^{\log K}+\lp 1-\frac{1}{K^2\log K}\rp e^0 < 2.
 	$$
Let $\lambda>0$, $Z:=\|X\|_2-\sqrt{m}$ and $L^2:=K^2\log K$, with a change of variable $s=\exp(\lambda t/L^2)$ we have
 \begin{align*}
 \E \exp( \lambda Z^2/L^2) 
 &= \int_{0}^\infty \P\left(e^{\lambda Z^2/L^2}\geq s\right) ds \\
 &=\int_0^1 1\, ds + \int_{1}^\infty \P\left(e^{\lambda Z^2/L^2}\geq s\right) ds \\
 &= 1 + \frac{\lambda}{L^2}\int_0^\infty \P(Z^2\geq t)\, e^{\lambda t/L^2} dt.
 \end{align*}
To show \eqref{eq-ex-tight}, we need to find a $\lambda$ such that $\E \exp( \lambda Z^2/L^2)>2 $.
By a change of variable $t=v^2L^2$, it suffices to show
 $$I:=2\lambda \int_0^\infty \P(|Z|\geq vL)\, ve^{\lambda v^2}dv>1 \;\text{ for some } \lambda>0. $$ 
 Let
 $$\alpha:=\frac{\sqrt{m}}{L}\geq 1, \quad 
 \beta_v:=\alpha +v=\frac{\sqrt{m}}{L}+v, \quad 
 \gamma_v:=\frac{\beta_v^2+1}{m}=\lp \frac{1}{L}+\frac{v}{\sqrt{m}}\rp ^2 +\frac{1}{m},
 $$
 then
 \begin{align*}
 \P(|Z|\geq vL) &\geq \P\lp \|X\|_2\geq \sqrt{m}+vL\rp 
 = \P \lp \frac{1}{L^2}\|X\|_2^2\geq \beta_v^2 \rp 
 \end{align*}
 where $\frac{1}{L^2}\|X\|_2^2\sim \mathbf{Binomial}\lp m,\frac{1}{L^2} \rp$. Note that for $v\in [\alpha, 2\alpha]$, we have 
$$\beta_v^2+1>mL^{-2}+1 \quad\text{ and }\quad \beta_v^2+1 \leq 9mL^{-2}+1\leq 10mL^{-2}<m/2.
	$$
So by \Cref{lemma-binomial-lower} (with $k=\beta_v^2+1$) we get
\[
\P \lp |Z|\geq vL \rp
\geq \frac{1}{\sqrt{8(\beta_v^2+1)}} \exp \lp -m D\lp \gamma_v \,\|\, L^{-2} \rp \rp.
\]
Also note that for $v\in [\alpha, 2\alpha]$,
\[
\frac{v^2}{\beta_v^2+1}= \frac{v^2}{(\alpha+v)^2+1} \geq \frac{v^2}{(2v)^2+v^2}=\frac{1}{5}.
\]
Therefore
 \begin{align*}
 I &\geq 
 2\lambda\int_{\alpha}^{2\alpha} \frac{1}{\sqrt{8}} 
 \frac{v}{\sqrt{\beta_v^2+1}}\exp\lp -mD\lp \gamma_v \,\middle\|\, L^{-2} \rp \rp \cdot \exp(\lambda v^2) dv \\
 &\geq \frac{2\lambda}{\sqrt{40}} \int_{\alpha}^{2\alpha} \exp \lp -mD\lp \gamma_v \,\middle\|\, L^{-2} \rp +\lambda v^2 \rp dv \\
 &\geq \frac{\lambda}{\sqrt{10}} \int_{\alpha}^{2\alpha} \exp(-\lambda_0\alpha^2 +\lambda v^2) dv
 \end{align*}
 with $\lambda_0:=10\log 10$. The last inequality above holds because
$\gamma_v \leq \gamma_{2\alpha}=\frac{9}{L^2}+\frac{1}{m}\leq \frac{10}{L^2}$, so it follows from \eqref{eq-KL-mono} that
\[
D\lp \gamma_v \,\middle\|\, L^{-2} \rp \leq
D\lp 10L^{-2} \,\middle\|\, L^{-2}\rp 
 \leq 10L^{-2}\log \frac{10L^{-2}}{L^{-2}} = \frac{\lambda_0}{L^2}.
\]
%$$
% D\lp \gamma_v \,\middle\|\, L^{-2} \rp 
% \overset{\eqref{eq-KL-mono}}{\leq} D\lp \gamma_{2\alpha} \,\middle\|\, L^{-2}\rp 
%= D\lp 9L^{-2} \,\middle\|\, L^{-2}\rp 
% \leq 9L^{-2}\log \frac{9L^{-2}}{L^{-2}} = \frac{\lambda_0}{L^2}.
%	$$
Choose $\lambda= \lambda_0$ and we get
 $$ I\geq \frac{\lambda_0}{\sqrt{10}}\int_\alpha^{2\alpha} 1\,dv \geq \frac{\lambda_0}{\sqrt{10}} >1.$$ 
 This proves \eqref{eq-ex-tight} with $c=1/\sqrt{\lambda_0}\approx 0.208$.
 \end{proof}
 %-----------------tightness ends

\section{Applications}
\label{sec_applications}
\subsection{Johnson-Lindenstrauss Lemma}
One immediate application of our result is a guarantee for all isotropic and sub-Gaussian matrices as Johnson-Lindenstrauss (JL) embeddings for dimension reduction. We state this JL lemma below. It follows directly form \Cref{theorem_concentrationBX}.

\begin{lemma}
\label{lemma_JL_xy}
Let $A\in\R^{m\times n}$ be an isotropic and sub-Gaussian matrix with sub-Gaussian parameter $K$. If
\begin{equation}
\label{eq-JLlm}
 m \geq CK^2\log K \varepsilon^{-2}\log(1/\delta),
 \end{equation}
then for any $x, y\in\R^n$, with probability at least $1-\delta$ we have
$$ (1-\varepsilon)\|x-y\|_2 \leq \frac{1}{\sqrt{m}} \|A(x-y)\|_2 \leq (1+\varepsilon)\|x-y\|_2. $$
\end{lemma}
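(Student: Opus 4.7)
The plan is to reduce the statement, by homogeneity, to a single-direction tail bound on $\|Az\|_2$ for a unit vector $z$, and then invoke \Cref{theorem_concentrationBX} with the $m \times m$ identity matrix playing the role of $B$.

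First, the case $x = y$ is trivial, so assume $x \neq y$ and set $z := (x-y)/\|x-y\|_2 \in \S^{n-1}$ and $X := Az \in \R^m$. Since the rows $A_i^T$ of $A$ are independent, the coordinates $X_i = \langle A_i, z\rangle$ are independent as well. Isotropy of $A_i$ gives $\E X_i^2 = z^T I_n z = 1$, and the definition of the sub-Gaussian parameter yields $\|X_i\|_{\psi_2} \le \|A_i\|_{\psi_2}\|z\|_2 \le K$. Hence $X$ satisfies the hypotheses of \Cref{theorem_concentrationBX}, and taking $\tilde B = I_m$ (which is diagonal, so condition (b) applies and no mean-zero assumption on $A$ is required; also $\|\tilde B\|_F = \sqrt{m}$ and $\|\tilde B\| = 1$) delivers
\[
\bigl\| \|Az\|_2 - \sqrt{m} \bigr\|_{\psi_2} \le CK\sqrt{\log K}.
\]

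Translating this $\psi_2$ bound into a tail inequality via property (b) of \Cref{appendix_psi_alpha_properties} gives, for any $t \ge 0$,
\[
\P\bigl( \bigl| \|Az\|_2 - \sqrt{m} \bigr| \ge t \bigr) \le 2\exp\!\left( -\frac{ct^2}{K^2\log K} \right).
\]
Setting $t = \varepsilon \sqrt{m}$ and multiplying the event through by $\|x-y\|_2/\sqrt{m}$ is exactly the two-sided bound $(1-\varepsilon)\|x-y\|_2 \le \frac{1}{\sqrt{m}}\|A(x-y)\|_2 \le (1+\varepsilon)\|x-y\|_2$. Demanding $2\exp(-c\varepsilon^2 m/(K^2\log K)) \le \delta$ and solving for $m$ yields the stated sample complexity $m \ge CK^2 \log K\, \varepsilon^{-2}\log(1/\delta)$. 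There is no real obstacle here: the lemma is a direct specialization of \Cref{theorem_concentrationBX} to one fixed direction, and the only thing worth double-checking is that the reduction to $z \in \S^{n-1}$ preserves the isotropy and sub-Gaussian hypotheses, which we verified above.
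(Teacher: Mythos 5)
Your proof is correct and follows essentially the same route as the paper: reduce by homogeneity to a unit vector $z=(x-y)/\|x-y\|_2$, apply \Cref{theorem_concentrationBX} with $B=I_m$ (correctly noting the diagonal case dispenses with the mean-zero assumption on $A$), then convert the resulting $\psi_2$-norm bound into a Gaussian tail bound and solve for $m$. One small slip: the step going from a $\psi_2$-norm bound to a tail bound is property (a) of \Cref{prop_psi_alpha}, not property (b), which runs in the opposite direction.
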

\begin{proof}%[{\bf Proof of \Cref{lemma_JL_xy}}]
By scaling we can assume $\|x-y\|_2=1$. By \Cref{theorem_concentrationBX} (with $B=I_m$) we have
 $$ \| \|m^{-\frac{1}{2}}A(x-y)\|_2-1\|_{\psi_2}\leq C m^{-\frac{1}{2}}K\sqrt{\log K},
 $$
 the result then follows from property (a) in \Cref{appendix_psi_alpha_properties}.
 \end{proof}

It is known that the dependence on $\varepsilon$ and $\delta$ in \eqref{eq-JLlm} is optimal for linear mappings \cite{larsen2014johnson}. Using the same example as \Cref{prop_tightness_ex}, we can conclude that (see \Cref{appendix_jlopt}) the dependence on sub-Gaussian parameter $K$ here is optimal as well.
Similar results to \Cref{lemma_JL_xy} have appeared in \cite{matouvsek2008variants,Dirksen2016dimension}, but to the best of our knowledge, the previous known dependence on $K$ was $K^4$.

\subsection{Null Space Property for 0-1 Matrices: Improved Parametric Dependence}

{\it Null space property} (NSP) is a well-known sufficient condition for sparse recovery and can be used to provide guarantees for $\ell_1$-based algorithms such as {\it basis pursuit denoising}. For the robust version, we say a matrix $A\in \R^{m\times n}$ satisfies the {\it $\ell_2$-robust null space property} of order $s$ with parameters $\rho\in (0,1)$ and $\tau >0$, denoted as $\ell_2\text{-rNSP}(s,\rho,\tau)$, if
$$
\|v_S\|_2 \leq \frac{\rho}{\sqrt{s}}\|v_{\overline{S}}\|_1 + \tau \|Av\|_2
	$$
holds for all $v\in \R^n$ and all $S\subset [n]= \{1,2,\dots,n\}$ with size $|S|\leq s$. Here $v_S$ is the restriction of $v$ to set $S$ (that is, $v_S(i)=v(i)$ if $i\in S$ and $v_S(i)=0$ otherwise) and $\overline{S}$ is the complement of $S$. 
%Meanwhile, 0-1 matrices (matrices with entries either 0 or 1) can be appealing due to its simplicity, they also arise naturally in applications such as group testing.

Matrices with all entries equal to 0 or 1, called 0-1 matrices, appear in many compressed sensing applications, including group testing \cite{cohen2020multi,shental2020efficient}, compressed imaging \cite{shental2020efficient}, wireless network activity detection \cite{kueng2017robust}.  In some cases, most of the entries of the matrix should be equal to 0.  In particular, in group testing each 1 that appears in a row corresponds with a sample that is being mixed into a pool, but including too many samples causes dilution.  Measurement matrices are designed to be sparse to avoid such dilution \cite{cohen2020multi}.  The traditional way to draw such a 0-1 matrix for group testing is to take each entry as an independent $\mathbf{Bernoulli}(p)$ random variable \cite{cohen2020multi}, although recent research focused on COVID-19 detection has considered other methods to generate the 0-1 matrix \cite{cohen2020multi,shental2020efficient}.  The authors of \cite{kueng2017robust} proved that 0-1 matrices with $\mathbf{Bernoulli}(p)$ entries satisfy $\ell_2$-rNSP under certain conditions, with a proof based on Mendelson’s small ball method.  Using the tools of our paper as a black box allows for a simpler proof of this result, as well as a significantly improvement on the dependence of parameter $p$.

A common way of proving NSP is through the {\it restricted isometry property} (RIP). We say matrix $A\in \R^{m\times n}$ satisfies RIP of order $s$ with parameters $\delta\in (0,1)$, denoted as RIP$(s,\delta)$, if
\[
(1-\delta)\|x\|_2 \leq \|Ax\|_2 \leq (1+\delta)\|x\|_2, \quad \forall x\in \Sigma_s^n
\]
where $\Sigma_s^n$ is the set of all $s$-sparse vectors in $\R^n$ (note that \Cref{theorem_mainBA} can provide such a result by choosing $T=\Sigma_s^n\cap \S^{n-1}$). 
However, for 0-1 matrices, it is necessary to have $m\gtrsim \min\{s^2, n\}$ in order for RIP of order $s$ to hold \cite{chandar2008negative}. This would be highly sub-optimal as the optimal sample complexity for sparse recovery is known to be $Cs\log \frac{en}{s}$, which can also be achieved by many isotropic random matrices \cite{foucart2013mathematical}.

To our knowledge, the best result so far regarding $\ell_2$-rNSP for 0-1 matrices with i.i.d. $\mathbf{Bernoulli}(p)$ entries appeared in \cite{kueng2017robust}. 
While its bound on sample complexity scales optimally in $s$, the dependence on Bernoulli parameter $p$ was at least $\frac{1}{p^2(1-p)^2}$.
%Note that as we will show below, the optimal dependence should be $\frac{1}{p(1-p)}$.
In the following we give an alternative proof for such $\ell_2$-rNSP (\Cref{theorem_nsp01}). Our proof yields a sample complexity bound that scales optimally in both $s$ and $p$. We remark that the optimal $\frac{1}{p(1-p)}$ dependence on $p$ is obtained due to the improvement on sub-Gaussian parameters in \Cref{theorem_mainBA}. We also give a brief justification of this optimality after the proof. 

The proof idea for \Cref{theorem_nsp01} is to show RIP (and therefore $\ell_2$-rNSP) for a projected version of the original 0-1 matrix, and then pass the $\ell_2$-rNSP back to the original matrix.

\begin{theorem}
\label{theorem_nsp01}
Fix $\rho \in (0,1)$ and let $A\in \R^{m\times n}$ be a random matrix with i.i.d. $\mathbf{Bernoulli}(p)$ entries where $p\in (0,1)$. If
\begin{equation}
\label{eq-nsp-m}
m\geq C\rho^{-2}\frac{1}{p(1-p)} \lp s\log\frac{en}{s} + u^2 \rp 
\end{equation}
for some absolute constant $C$, then with probability at least $1-3e^{-u^2}$, $A$ satisfies $\ell_2\text{-rNSP}(s,\rho,\tau)$ with $\tau = \frac{2}{\sqrt{m-1}\sqrt{p(1-p)}} $.
\end{theorem}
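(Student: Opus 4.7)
The approach is to reduce to Theorem 1.1 by centering and then projecting out the mean so that the projected matrix can be accessed without knowing $p$. Write $A = A^0 + pJ$ where $J$ is the all-ones matrix and $A^0_{ij}=A_{ij}-p$, and set $\tilde A := \frac{1}{\sqrt{p(1-p)}}A^0$, which is mean zero, isotropic and sub-Gaussian. Let $P:=I_m-\tfrac{1}{m}\mathbf{1}\mathbf{1}^T$, the orthogonal projection onto $\mathbf{1}^\perp$. Since $PJ=0$, we have $PA=PA^0=\sqrt{p(1-p)}\,P\tilde A$, so $P$ eliminates $p$ without our needing to know it. Also $\|P\|=1$ and $\|P\|_F=\sqrt{m-1}$, making $P$ an admissible choice for the $B$ matrix in Theorem 1.1.

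The crux of the argument is controlling the sub-Gaussian parameter $K$ of $\tilde A$. Since $|A^0_{ij}|\le 1$ and $\E(A^0_{ij})^2 = p(1-p)$, expanding the $\psi_2$ series
$\E\exp(|A^0_{ij}|^2/t^2)=\sum_{k\ge 0}\E(A^0_{ij})^{2k}/(k!\,t^{2k})$
and using $(A^0_{ij})^{2k}\le (A^0_{ij})^2$ for $k\ge 1$ yields $\|A^0_{ij}\|_{\psi_2}^2\lesssim 1/\log(1+1/(p(1-p)))$. After dividing by $\sqrt{p(1-p)}$, this gives $K^2\lesssim 1/[p(1-p)\log(1+1/(p(1-p)))]$ and consequently
$$
K^2\log K \;\lesssim\; \frac{1}{p(1-p)},
$$
with the two logarithmic factors cancelling. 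This is precisely where the sharper $K\sqrt{\log K}$ dependence of Theorem 1.1 (rather than the classical $K^2$) produces the optimal $1/(p(1-p))$ dependence in \eqref{eq-nsp-m}; a $K^2$ bound would leave a spurious $\log(1/(p(1-p)))$ factor.

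Apply Theorem 1.1 to $\tilde A$ with $B=P$ and $T=\Sigma_{2s}^n\cap\S^{n-1}$, for which $w(T)\lesssim \sqrt{s\log(en/s)}$ and $\mathrm{rad}(T)=1$. With probability at least $1-3e^{-u^2}$,
$$
\sup_{x\in T}\bigl|\|P\tilde Ax\|_2-\sqrt{m-1}\,\|x\|_2\bigr|\;\lesssim\; K\sqrt{\log K}\bigl(\sqrt{s\log(en/s)}+u\bigr).
$$
Dividing by $\sqrt{m-1}$ and using the bound $K^2\log K\lesssim 1/(p(1-p))$, the hypothesis \eqref{eq-nsp-m} (with $C$ sufficiently large) makes the relative deviation at most a small absolute constant $\delta$, so $M:=\frac{1}{\sqrt{m-1}}P\tilde A$ satisfies RIP of order $2s$ with constant $\delta$. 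The standard RIP-to-rNSP implication (e.g.\ Foucart--Rauhut) then yields $\ell_2$-rNSP$(s,\rho,\tau_0)$ for $M$ with some $\tau_0\le 2$. Finally, since $M=\frac{1}{\sqrt{(m-1)p(1-p)}}PA$, for every $v\in\R^n$ and $|S|\le s$,
$$
\|v_S\|_2 \le \frac{\rho}{\sqrt{s}}\|v_{\overline{S}}\|_1 + \frac{\tau_0}{\sqrt{(m-1)p(1-p)}}\|PAv\|_2 \le \frac{\rho}{\sqrt{s}}\|v_{\overline{S}}\|_1 + \frac{2}{\sqrt{(m-1)p(1-p)}}\|Av\|_2,
$$
using $\|P\|\le 1$ in the last step. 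This transfers $\ell_2$-rNSP to $A$ with $\tau=2/\sqrt{(m-1)p(1-p)}$. The main technical obstacle is the cancellation of logarithms in the second paragraph; once $K^2\log K\lesssim 1/(p(1-p))$ is established, the remainder is a clean application of Theorem 1.1 together with standard sparse-recovery machinery.
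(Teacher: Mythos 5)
Your proof is correct and follows essentially the same route as the paper: center and rescale $A$ to get $\tilde A$, project out the mean with $B=P$ (so that $P\tilde A$ is accessible as a scalar multiple of $PA$), apply \Cref{theorem_mainBA} with $T=\Sigma_{2s}^n\cap\S^{n-1}$ to get RIP, pass to rNSP via Foucart--Rauhut, and transfer back to $A$ using $\|PAv\|_2\le\|Av\|_2$. The only variation is how you bound the sub-Gaussian parameter: the paper defines $K$ implicitly by $K^2\log K = 1/(p(1-p))$ and verifies $\E\exp(X^2/K^2)\le 2$ directly (\Cref{prop_nsp_psi2}), whereas you first derive the explicit bound $\|\tilde A_{ij}\|_{\psi_2}^2\le 1/[p(1-p)\log(1+1/(p(1-p)))]$ via the boundedness trick $(A^0_{ij})^{2k}\le (A^0_{ij})^2$ and then argue the logarithms cancel; both give $K^2\log K\lesssim 1/(p(1-p))$.
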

\begin{proof}
Denote $\mathbf{1}_n$ the all ones column vector in $\R^n$, then $\E A = p\mathbf{1}_m\mathbf{1}_n^T$. Let 
$$\textstyle
\tilde{A}:=\frac{1}{\sqrt{p(1-p)}} \lp A -  \E A \rp,
	$$
it is easy to verify that $\E \tilde{A}_{ij}=0$ and $\E \tilde{A}_{ij}^2=1$. Moreover, by \Cref{prop_nsp_psi2} below, $\tilde{K}^2 \log \tilde{K}\leq \frac{1}{p(1-p)}$ where $\tilde{K}$ is the sub-Gaussian parameter of $\tilde{A}$.
%$ \|\tilde{A}_{ij}\|_{\psi_2}^2\log \|\tilde{A}_{ij}\|_{\psi_2} \leq \frac{1}{p(1-p)}$.

Let $P\in \R^{m\times m}$ be the orthogonal projection matrix onto span$\{\mathbf{1}_m\}^\perp$ and let $T=\Sigma_{2s}^n\cap \S^{n-1}$, then $\|P\|_F=\sqrt{m-1}$ and by \Cref{theorem_mainBA}, we have with probability at least $1-3e^{-u^2}$,
$$
\sup_{x\in T} \left| \frac{1}{\sqrt{m-1}}\|P\tilde{A}x\|_2 - 1 \right| \leq \frac{C}{\sqrt{m-1}}\frac{1}{\sqrt{p(1-p)}} \lp w(T) + u \rp.
	$$
Since $w^2(T)\leq 4s \log \frac{en}{s}$ \cite[Proposition 9.24]{foucart2013mathematical}, by choosing $m$ as in \eqref{eq-nsp-m} with an appropriate absolute constant $C$, we get with probability at least $1-3e^{-u^2}$,
\begin{equation}
\label{eq-nsp-rip}
\sup_{x\in T} \left| \frac{1}{\sqrt{m-1}}\|P\tilde{A}x\|_2 - 1 \right| \leq \frac{1}{2}\rho =: \delta,
\end{equation}
or equivalently, $\frac{1}{\sqrt{m-1}}P\tilde{A}$ satisfies RIP$(2s,\delta)$ where $\delta=\frac{1}{2}\rho \in(0, \frac{1}{2})$.

Note that RIP implies NSP, in particular, using \cite[Theorem 6.13]{foucart2013mathematical} we know \eqref{eq-nsp-rip} implies that $\frac{1}{\sqrt{m-1}}P\tilde{A}$ satisfies $\ell_2$-rNSP$(s,\rho',\tau')$ with
$$
\rho':=\frac{\delta}{\sqrt{1-\delta^2}-\delta/4} < 2\delta =\rho
\quad \text{ and }\quad
\tau':=\frac{\sqrt{1+\delta}}{\sqrt{1-\delta^2}-\delta/4} < 2.
	$$
Notice that $P\tilde{A}= \frac{1}{\sqrt{p(1-p)}} PA$, this robust null space property then reduces to
$$
\|v_S\|_2 \leq \frac{\rho'}{\sqrt{s}}\|v_{\overline{S}}\|_1 + \frac{\tau'}{\sqrt{m-1}\sqrt{p(1-p)}} \|PAv\|_2,
\quad \forall v\in \R^n, S\subset [n] \text{ with } |S|\leq s.
	$$
$\ell_2\text{-rNSP}(s,\rho,\tau)$ for $A$ follows immediately as $\|PAv\|_2\leq \|Av\|_2$.
\end{proof}

\begin{prop}[$\psi_2$-norm of $\tilde{A}_{ij}$ in proof of \Cref{theorem_nsp01}]
\label{prop_nsp_psi2}
Let $p\in (0,1)$ and
\[
X:= \frac{1}{\sqrt{p(1-p)}} \lp \mathbf{Bernoulli}(p) - p \rp
=\left\lbrace 
\begin{array}{lc}
\sqrt{\frac{1-p}{p}} & \text{with probability } p \\
-\sqrt{\frac{p}{1-p}} & \text{with probability } 1-p
\end{array}
\right.
\]
Also let $K$ be the real number such that $K^2\log K=\frac{1}{p(1-p)}$, then $\|X\|_{\psi_2}\leq K$.
\end{prop}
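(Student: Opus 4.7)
My strategy is to verify $\E \exp(X^2/K^2) \leq 2$ by direct computation, using the defining relation $K^2 \log K = 1/(p(1-p))$. Writing $q := 1-p$ and noting that $X^2$ takes the two values $q/p$ and $p/q$ with probabilities $p$ and $q$ respectively, I start from
\[
\E \exp(X^2/K^2) = p\exp\!\left(\frac{q}{pK^2}\right) + q\exp\!\left(\frac{p}{qK^2}\right).
\]
The relation $K^2 \log K = 1/(pq)$ rearranges to $\log K = 1/(pq K^2)$, so the exponents simplify to $q^2 \log K$ and $p^2 \log K$, and the expression becomes $pK^{q^2} + qK^{p^2}$.

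The pivotal step is the factorization $q^2 = q - pq$, which gives $K^{q^2} = K^q \cdot K^{-pq} = K^q e^{-1/K^2}$ (using $pq \log K = 1/K^2$). The analogous identity for $K^{p^2}$ yields the clean form
\[
\E \exp(X^2/K^2) = e^{-1/K^2}\bigl(pK^q + qK^p\bigr),
\]
so it suffices to prove $pK^q + qK^p \leq 2 e^{1/K^2}$.

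For the bracket I intend to apply weighted AM--GM: $K^q = K^q \cdot 1^p \leq qK + p$ and similarly $K^p \leq pK + q$. Summing with weights $p, q$ and using $p^2 + q^2 = 1 - 2pq$ gives
\[
pK^q + qK^p \leq 2pqK + (1 - 2pq) = 1 + 2pq(K-1) = 1 + \frac{2(K-1)}{K^2 \log K},
\]
where the last step uses $pq = 1/(K^2 \log K)$. Since $e^{1/K^2} \geq 1 + 1/K^2$, the target $pK^q + qK^p \leq 2 e^{1/K^2}$ will follow once I verify $\frac{2(K-1)}{K^2 \log K} \leq 1 + \frac{2}{K^2}$, equivalently $2(K - 1 - \log K) \leq K^2 \log K$. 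Setting $h(K) := K^2 \log K + 2\log K - 2(K-1)$, I get $h(1) = 0$ and $h'(K) = 2K \log K + K + 2/K - 2 \geq 0$ for $K \geq 1$ (since $K + 2/K \geq 2\sqrt{2} > 2$ by AM--GM), so $h(K) \geq 0$ on $[1,\infty)$. The hypothesis $K^2 \log K = 1/(pq) > 0$ forces $K > 1$, so this elementary bound applies.

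The main obstacle is spotting the factorization $K^{q^2} = K^q e^{-1/K^2}$; once the clean identity $\E \exp(X^2/K^2) = e^{-1/K^2}(pK^q + qK^p)$ is in hand, the rest is a straightforward application of AM--GM followed by a one-variable elementary inequality.
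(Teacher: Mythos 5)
Your proof is correct. Both you and the paper begin identically by computing $\E\exp(X^2/K^2)=pK^{q^2}+qK^{p^2}$ via the relation $\log K=1/(pqK^2)$, but you then part ways. The paper notes that $1/(pq)\geq 4$ forces $\log K>1/2$, hence $K^2<2/(pq)$, substitutes this to get a function of $p$ alone, and then invokes a separate one-variable inequality from the appendix, namely $(1-x)\bigl(\tfrac{2}{x(1-x)}\bigr)^{x^2/2}\leq 1$ for $x\in(0,1)$, applied once with $x=p$ and once with $x=q$. You instead exploit the algebraic identity $q^2=q-pq$ to factor $K^{q^2}=K^q e^{-1/K^2}$, pulling out a common $e^{-1/K^2}$; then weighted AM--GM linearizes $pK^q+qK^p\leq 1+2pq(K-1)$, and substituting $pq=1/(K^2\log K)$ reduces everything to a one-variable inequality in $K$ alone, $2(K-1-\log K)\leq K^2\log K$, settled by showing $h(K):=K^2\log K+2\log K-2(K-1)$ vanishes at $1$ and has $h'\geq 0$ on $[1,\infty)$. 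Your route is more self-contained (no appendix lemma needed) and sidesteps the crude bound $K^2<2/(pq)$; it is also slightly stronger in that it works for all $K>1$ rather than only for the regime $K^2\log K\geq 4$ that the paper's $\log K>1/2$ step requires. The paper's route has the minor virtue of isolating the messy calculus into a reusable standalone inequality, but arithmetically both reductions are of comparable length.
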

\begin{proof}
Let $q=1-p$ and notice that $q/(pK^2)=q^2\log K$ and $p/(qK^2)=p^2\log K$, we have
\begin{align*}
\E \exp(X^2/K^2) = p e^{q/(pK^2)} + q e^{p/(qK^2)} = p K^{q^2} + qK^{p^2}.
\end{align*}
Since $\frac{1}{pq}\geq 4$, we must have $\log K > \frac{1}{2}$ (otherwise $K^2\log K\leq \frac{e}{2}$), thus $K^2 < \frac{2}{pq}$. So 
\begin{align*}
\E \exp(X^2/K^2) <  p \lp \frac{2}{pq} \rp ^\frac{q^2}{2} + q \lp \frac{2}{pq} \rp ^\frac{p^2}{2} \leq 2
\end{align*}
where the final estimate follows from inequality $(1-x)\lp \frac{2}{x(1-x)} \rp ^ {x^2/2} \leq 1$ when $0<x<1$ (see \Cref{appendix_b} for a proof of this inequality).
\end{proof}

Lastly, we remark that the $\frac{1}{p(1-p)}$ dependence in sample complexity is (up to constants) optimal as $p\to 0$ or $p\to 1$.
In fact, we can show that if $mp<\frac{1}{2}$ or $m(1-p)<\frac{1}{2}$, then matrix $A$ from \Cref{theorem_nsp01} cannot satisfy $\ell_2$-NSP of order 2 (regardless of $\rho$ and $\tau$) with probability at least $\frac{1}{4}$.

To see this, let $q=1-p$, $v=(1,-1,0,\dots,0)^T\in \R^n$, $S=\{1,2\}$ and consider cases:
\begin{itemize}
\item $mp<\frac{1}{2}$. Denote $A_i$ the $i$-th column of $A$, then by Bernoulli's inequality, the probability of $A_i$ being zero is
$$ \P(A_i=\mathbf{0})=(1-p)^m\geq 1-mp > \tfrac{1}{2}. $$
Thus $\P(A_1=A_2=\mathbf{0})> \frac{1}{4}$ and on the event $\{A_1=A_2=\mathbf{0}\}$, we have
$$ \|v_S\|_2=\sqrt{2} \quad\text{ while }\quad  \frac{\rho}{\sqrt{s}}\|v_{\overline{S}}\|_1 + \tau \|Av\|_2=0.
$$
This means with probability at least $\frac{1}{4}$, $A$ cannot satisfy $\ell_2$-rNSP of order 2.

\item $mq<\frac{1}{2}$. Let $B=\mathbf{1}_m\mathbf{1}_n^T -A$ and denote $B_i$ the $i$-th column of $B$.
Notice that $B$ has i.i.d. $\mathbf{Bernoulli}(q)$ entries and $Bv=\mathbf{1}_m\mathbf{1}_n^Tv -Av=-Av$, so by applying the same argument as above to $B$, we can conclude that $A$ does not satisfy $\ell_2$-rNSP of order 2 with probability at least $\frac{1}{4}$.
\end{itemize}

\subsection{Randomized Sketches}
Randomized sketching provide a method for approximating convex programs \cite{2015pilanci,yang2017randomized}. In essence, a randomized sketch reduces the dimension of the original optimization problem through random projections, which can be beneficial in both computational time and memory storage. Following the problem formulation and ideas in \cite{2015pilanci}, consider convex program in the form of
\begin{equation}
\label{origprob}
\min_{x\in\mathcal{C}} f(x):=\|Bx-y\|_2^2,
\end{equation}
where $B\in\R^{n\times d}$, $y\in\R^d$ and $\mathcal{C}\subset \R^d$ is some convex set.
Let $A\in \R^{m\times n}$ be an isotropic and sub-Gaussian matrix and solve instead the convex program
\begin{equation}
\label{sketchprob}
\min_{x\in\mathcal{C}} g(x):=\|A(Bx-y)\|_2^2.
\end{equation}
This is called the "sketched problem". It reduces the dimension from $n$ to $m$ and can be viewed as an approximation to the original problem \eqref{origprob}.
Moreover, say a solution $\hat{x}$ to the sketched problem \eqref{sketchprob} is $\delta$-optimal to the original optimal solution $x^*$ of \eqref{origprob} if
$$ f(\hat{x})\leq (1+\delta)^2f(x^*). $$

Pilanci and Wainwright \cite{2015pilanci} gave a high probability guarantee for $\hat{x}$ being $\delta$-optimal when $m$ is sufficient large. The following \Cref{theorem_delta_opt_guarantee} improves the dependence on $K$ in their guarantee from $K^4$ to $K^2\log K$.
The proof of \Cref{theorem_delta_opt_guarantee} is also more concise thanks to the tools we have developed.
 \begin{theorem}[$\delta$-optimal guarantee]
 \label{theorem_delta_opt_guarantee}
 Let $A$ be an isotropic and sub-Gaussian matrix with sub-Gaussian parameter $K$. For any $\delta\in (0,1)$, if
 $$ m \geq c_0K^2\log K \frac{w^2(B\mathcal{T}\cap \S^{n-1})}{\delta^2}, $$
 then a solution $\hat{x}$ to the sketched problem as given in \eqref{sketchprob} is $\delta$-optimal with probability at least $1-c_1e^{-c_2m\delta^2/(K^2\log K)}$. Here $c_0,c_1,c_2$ are absolute constants and $\mathcal{T}$ is the tangent cone of $\mathcal{C}$ at optimum $x^*$, given by
 $$
 \mathcal{T}:=\text{clconv }\{t(x-x^*): t\geq 0 \text{ and } x\in \mathcal{C}\}
 	$$
where clconv denotes the closed convex hull.
 \end{theorem}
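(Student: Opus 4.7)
The plan is to adapt the framework of Pilanci--Wainwright (who obtained the analogous result with $K^4$ dependence) by replacing their uniform concentration estimates with \Cref{theorem_main}. Write $e := Bx^* - y$ and $u := B(\hat x - x^*)$; by convexity $\hat{x}-x^* \in \mathcal{T}$, hence $u \in B\mathcal{T}$. Since $\|e+u\|_2 \leq \|e\|_2 + \|u\|_2$ by the triangle inequality, to establish $\delta$-optimality it suffices to show $\|u\|_2 \leq \delta\|e\|_2$ with the required probability.

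Introduce
\[
Z_1 := \inf_{v \in K}\frac{\|Av\|_2}{\sqrt{m}}, \qquad Z_2 := \sup_{v \in K}\frac{1}{m}\bigl|\langle Ae, Av\rangle - m\langle e, v\rangle\bigr|, \qquad K := B\mathcal{T}\cap\S^{n-1}.
\]
Two deterministic observations reduce the problem to bounding $Z_1$ and $Z_2$. Optimality of $\hat{x}$ for the sketched problem gives $\|A(e+u)\|_2^2 \leq \|Ae\|_2^2$, which expands to $\|Au\|_2^2 \leq -2\langle Ae, Au\rangle$, and in particular $\langle Ae, Au\rangle \leq 0$. First-order optimality of $x^*$ for the original problem yields $\langle e, w\rangle \geq 0$ for every $w \in B\mathcal{T}$, so $\langle e, u\rangle \geq 0$. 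These sign conditions combine to $|\langle Ae, Au\rangle| \leq |\langle Ae, Au\rangle - m\langle e, u\rangle| \leq m\|u\|_2 Z_2$, and when paired with $\|Au\|_2^2 \geq m Z_1^2 \|u\|_2^2$ they deliver
\[
\|u\|_2 \leq \frac{2 Z_2}{Z_1^2}.
\]

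It remains to verify $Z_1 \geq 1/2$ and $Z_2 \leq (\delta/8)\|e\|_2$ on a common event of probability $\geq 1 - c_1 e^{-c_2 m\delta^2/(K^2\log K)}$. The bound on $Z_1$ is immediate from \Cref{theorem_main} applied to $K$ (with $\mathrm{rad}(K)=1$) and deviation parameter $t := c\sqrt{m}\delta/(K\sqrt{\log K})$: the hypothesis $m \geq c_0 K^2\log K\, w^2(K)/\delta^2$ (for a large enough $c_0$) makes the resulting error bound strictly less than $\sqrt{m}/2$. For $Z_2$, use the polarization identity
\[
4\alpha\bigl(\langle Ae, Av\rangle - m\langle e, v\rangle\bigr) = \bigl[\|A(e + \alpha v)\|_2^2 - m\|e + \alpha v\|_2^2\bigr] - \bigl[\|A(e - \alpha v)\|_2^2 - m\|e - \alpha v\|_2^2\bigr]
\]
with the natural scale $\alpha := \|e\|_2$. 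This reduces the task to controlling $|\|Aw\|_2^2 - m\|w\|_2^2|$ uniformly over $w \in T_+ \cup T_-$, where $T_\pm := e \pm \|e\|_2 K$. By translation invariance of Gaussian width, $w(T_\pm) = \|e\|_2 w(K)$ and $\mathrm{rad}(T_\pm) \leq 2\|e\|_2$, so applying \Cref{theorem_main} to each set (with a union bound) and exploiting the factorization $|\|Aw\|_2^2 - m\|w\|_2^2| = (\|Aw\|_2 + \sqrt{m}\|w\|_2)\cdot|\|Aw\|_2 - \sqrt{m}\|w\|_2|$ together with $\|Aw\|_2 + \sqrt{m}\|w\|_2 \lesssim \sqrt{m}\|e\|_2$ on the same good event yields $Z_2 \lesssim K\sqrt{\log K}\,\|e\|_2 [w(K) + t]/\sqrt{m}$, which is at most $(\delta/8)\|e\|_2$ under the sample complexity hypothesis.

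The main technical obstacle is the $Z_2$ step: converting the linear deviation bound of \Cref{theorem_main} into control of the quadratic deviation $|\|Aw\|_2^2 - m\|w\|_2^2|$ while preserving the right $\|e\|_2$ scaling through polarization. Choosing the scale $\alpha = \|e\|_2$ and separating the union into $T_+, T_-$ keeps the Gaussian widths at the clean value $\|e\|_2 w(K)$, with no spurious $+\|e\|_2$ term. Once this is done, the improved $K^2\log K$ dependence is inherited directly from \Cref{theorem_main}, and the failure probability $\lesssim e^{-c_2 m\delta^2/(K^2\log K)}$ follows from the calibration of $t$.
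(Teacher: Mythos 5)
Your proposal is correct and follows essentially the same route as the paper: reduce to a deterministic bound on $Z_2/Z_1$ (the paper cites Pilanci--Wainwright's Lemma~1 as \Cref{lemma_sketch_ratio}, whereas you re-derive it from the sketched and original KKT conditions), then bound $Z_1$ and $Z_2$ uniformly using \Cref{theorem_main}, converting linear deviations to quadratic ones along the way. The only cosmetic difference in the $Z_2$ step is your two-term polarization over $T_\pm = e \pm \|e\|_2 K$, where the paper instead uses the three-term identity $2\langle u, Qv\rangle = \langle u+v,Q(u+v)\rangle - \langle u,Qu\rangle - \langle v,Qv\rangle$ and its standalone \Cref{lemma_sketch_lm1}; both yield the same $K^2\log K$ sample complexity and the same $e^{-cm\delta^2/(K^2\log K)}$ failure probability.
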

 
We will use an argument similar to \cite{2015pilanci} to prove \Cref{theorem_delta_opt_guarantee}.
First let us state a deterministic result that says $\delta$-optimality can be obtained by controlling two quantities.
 \begin{lemma}[Lemma 1 \cite{2015pilanci}]
 \label{lemma_sketch_ratio}
 For any sketching matrix $A\in\R^{m\times n}$,  let 
 \begin{align*}
 Z_1&:=\inf_{v\in B\mathcal{T}\cap \S^{n-1}} \frac{1}{m}\|Av\|_2^2, \\
 Z_2&:=\sup_{v\in B\mathcal{T}\cap \S^{n-1}} \left| \inp{u}{\lp \frac{1}{m}A^TA-I \rp v} \right|,
 \end{align*}
 where $\mathcal{T}$ is the tangent cone of $\mathcal{C}$ at $x^*$ and $u\in\S^{n-1}$ is an arbitrarily fixed vector. Then
\begin{equation*}
f(\hat{x})\leq \lp 1+2\frac{Z_2}{Z_1}\rp^2 f(x^*). 
\end{equation*}
 \end{lemma}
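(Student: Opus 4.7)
The plan is to reduce the conclusion to a direct bound on $\|B(\hat x-x^*)\|_2$. Set $\Delta:=\hat x-x^*$ and $r:=Bx^*-y$, so that $f(x^*)=\|r\|_2^2$ and, by the definition of the tangent cone, $\Delta\in\mathcal{T}$, hence $B\Delta\in B\mathcal{T}$. Expanding
\[
f(\hat x)=\|B\Delta+r\|_2^2,
\]
the triangle inequality yields $\sqrt{f(\hat x)}\leq \|B\Delta\|_2+\sqrt{f(x^*)}$, so the conclusion reduces to the key estimate
\[
\|B\Delta\|_2\;\leq\;\frac{2Z_2}{Z_1}\sqrt{f(x^*)},
\]
which upon squaring gives the claimed bound.

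The key estimate is established by combining the two optimality conditions. Optimality of $\hat x$ for the sketched problem gives $g(\hat x)\leq g(x^*)=\|Ar\|_2^2$, which rearranges to
\[
\|AB\Delta\|_2^2\;\leq\;-2\inp{B\Delta}{A^TAr}.
\]
Dividing by $m$ and splitting $\tfrac{1}{m}A^TA=I+\bigl(\tfrac{1}{m}A^TA-I\bigr)$ yields
\[
\tfrac{1}{m}\|AB\Delta\|_2^2\;\leq\;-2\inp{B\Delta}{r}-2\inp{B\Delta}{\bigl(\tfrac{1}{m}A^TA-I\bigr)r}.
\]
Optimality of $x^*$ for the original program over the convex set $\mathcal{C}$ gives the first-order condition $\inp{B^Tr}{x-x^*}\geq 0$ for every $x\in\mathcal{C}$; passing to the tangent cone yields $\inp{B\Delta}{r}\geq 0$, which lets us drop the first summand on the right. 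Next I would take the fixed vector in the definition of $Z_2$ to be $u:=r/\|r\|_2$ (the interesting case is $r\neq 0$; otherwise $f(x^*)=0$ and the inequality above forces $AB\Delta=0$, hence $\|B\Delta\|_2^2Z_1\leq 0$ and $f(\hat x)=0$). Writing $v:=B\Delta/\|B\Delta\|_2\in B\mathcal{T}\cap\S^{n-1}$ when $B\Delta\neq 0$, the definitions of $Z_1$ and $Z_2$ (the latter applied via the symmetry of $\tfrac{1}{m}A^TA-I$) produce
\[
Z_1\|B\Delta\|_2^2\;\leq\;\tfrac{1}{m}\|AB\Delta\|_2^2\;\leq\;2Z_2\|B\Delta\|_2\|r\|_2.
\]
Cancelling one factor of $\|B\Delta\|_2$ delivers the key estimate.

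There is no substantial obstacle once the bookkeeping is in place; the argument is essentially algebraic. The main subtlety is the dual use of optimality — one condition from the original problem to kill the cross term $\inp{B\Delta}{r}$, and one from the sketched problem to supply the governing inequality — together with the choice $u=r/\|r\|_2$ that tailors $Z_2$ to the actual residual direction. The degenerate cases $r=0$ and $B\Delta=0$ both reduce the statement to $f(\hat x)\leq f(x^*)$ and require no separate treatment.
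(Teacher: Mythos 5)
The paper states this result as Lemma~1 of Pilanci and Wainwright and cites it without giving its own proof, so there is no in-paper argument to compare against; your reconstruction is the standard one and is correct. You combine the two optimality conditions in the right order (sketched optimality of $\hat x$ supplies $\tfrac{1}{m}\|AB\Delta\|_2^2\leq -2\langle B\Delta,\tfrac{1}{m}A^TAr\rangle$, and convex first-order optimality of $x^*$ over $\mathcal{C}$ kills the cross term $\langle B\Delta,r\rangle$), and the symmetry of $\tfrac{1}{m}A^TA-I$ is exactly what lets you invoke $Z_2$ after normalizing $B\Delta$ and $r$. One caution worth flagging: the lemma's phrase ``an arbitrarily fixed vector'' is loose wording inherited from the citing paper; in Pilanci--Wainwright $u$ is defined to be the specific normalized residual $(Bx^*-y)/\|Bx^*-y\|_2$, and the bound does not hold for a truly arbitrary $u\in\S^{n-1}$. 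You implicitly recognize this by setting $u:=r/\|r\|_2$, and that choice is essential to the argument, not a convenience. Your degenerate-case remark is slightly compressed --- $Z_1\|B\Delta\|_2^2\leq 0$ forces $B\Delta=0$ (hence $f(\hat x)=0$) only when $Z_1>0$, while for $Z_1=0$ the claimed bound is vacuous --- but the substance is right.
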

 Next we show a technical Lemma that will be helpful when estimating $Z_1$ and $Z_2$.
\begin{lemma}
\label{lemma_sketch_lm1}
Let $A$ be an isotropic and sub-Gaussian matrix with sub-Gaussian parameter $K$, and let $T\subset\R^n$ be a set with radius $\text{rad}(T)\leq 2$, then there exists absolute constants $C$ and $c$ such that for any $\delta\in (0,1)$,
\begin{equation*}
\sup_{x\in T}\left| \frac{1}{m}\|Ax\|_2^2-\|x\|_2^2\right|  \leq \delta
\end{equation*}
holds with probability at least $1-3e^{-cm\delta^2/(K^2\log K)}$ provided $m\geq CK^2\log Kw^2(T)/\delta^2$.
\begin{proof}
Denote $L:=K\sqrt{\log K}$. By \Cref{theorem_main} we have
$$
\sup_{x\in T}\left| \frac{1}{\sqrt{m}}\|Ax\|_2-\|x\|_2\right| \leq \frac{C_0L}{\sqrt{m}}\lp w(T)+2u \rp
	$$
with probability at least $1-3e^{-u^2}$ for some absolute constant $C_0$.\\
Take $\displaystyle \delta_0=\frac{1}{5}\delta$ and $\displaystyle m \geq 9C_0^2\frac{L^2w^2(T)}{\delta_0^2}$, also choose $\displaystyle u= \frac{1}{3C_0}\frac{\sqrt{m}\delta_0}{L}$. It follows that the event
$$
\sup_{x\in T}\left| \frac{1}{\sqrt{m}}\|Ax\|_2-\|x\|_2\right| \leq \frac{\delta_0}{3}+\frac{2\delta_0}{3}=\delta_0
	$$
holds with probability at least $1-3e^{-m\delta_0^2/(9C_0^2L^2)}$. On this event,
$$
\sup_{x\in T}\left| \frac{1}{m}\|Ax\|_2^2-\|x\|_2^2 \right| 
\leq (4+\delta_0)\delta_0 \leq 5\delta_0=\delta,
	$$
where we use the estimate 
$ \left| \frac{1}{\sqrt{m}}\|Ax\|_2+\|x\|_2\right| \leq 2\|x\|_2+\delta_0$ for $x\in T$.
\end{proof}
\end{lemma}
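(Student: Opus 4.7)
The plan is to reduce the quadratic deviation to a linear deviation via the factorization $a^2-b^2=(a-b)(a+b)$, and then apply the main theorem (in the form of Corollary \ref{theorem_main}) as a black box. Concretely, I would first invoke Corollary \ref{theorem_main}, noting that the radius of $T$ is bounded by $2$ so that $\mathrm{rad}(T)\le 2$ appears as a harmless constant factor. This yields, with probability at least $1-3e^{-u^2}$,
\begin{equation*}
\sup_{x\in T}\left|\tfrac{1}{\sqrt{m}}\|Ax\|_2-\|x\|_2\right|\le \frac{C_0 K\sqrt{\log K}}{\sqrt{m}}\bigl(w(T)+2u\bigr).
\end{equation*}

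Next I would linearize the desired quantity by writing
\begin{equation*}
\left|\tfrac{1}{m}\|Ax\|_2^2-\|x\|_2^2\right|=\left|\tfrac{1}{\sqrt{m}}\|Ax\|_2-\|x\|_2\right|\cdot\left|\tfrac{1}{\sqrt{m}}\|Ax\|_2+\|x\|_2\right|.
\end{equation*}
On the good event the first factor is at most some small $\delta_0$ (to be chosen), and the second factor is controlled by $2\|x\|_2+\delta_0\le 4+\delta_0\le 5$ since $\|x\|_2\le 2$ and $\delta_0<1$. So the squared deviation is at most $5\delta_0$, and taking $\delta_0=\delta/5$ delivers the target bound $\delta$.

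The remaining step is bookkeeping on the parameters. I would pick $\delta_0=\delta/5$, impose $m\ge C K^2\log K\, w^2(T)/\delta^2$ with a large enough absolute constant $C$ so that the $w(T)$-contribution is at most $\delta_0/3$, and then choose $u$ proportional to $\sqrt{m}\,\delta/(K\sqrt{\log K})$ so that the $u$-contribution is at most $2\delta_0/3$. With these choices, the linear deviation is at most $\delta_0$, and the probability $1-3e^{-u^2}$ becomes $1-3e^{-cm\delta^2/(K^2\log K)}$ for some absolute constant $c$.

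I don't expect a genuine obstacle here; this is essentially a squaring-and-balancing argument on top of the main theorem. The only subtle point is making sure the constants in $C$ and $c$ are chosen consistently so that both the sample-complexity hypothesis and the exponent in the failure probability line up; in particular, one must be careful that the choice of $u$ respects $u\ge 0$ (which is automatic) and that the hypothesis $\mathrm{rad}(T)\le 2$ is actually used when bounding $\frac{1}{\sqrt m}\|Ax\|_2+\|x\|_2$, since otherwise the second factor in the factorization is unbounded.
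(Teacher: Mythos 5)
Your proposal is correct and follows essentially the same route as the paper's proof: apply \Cref{theorem_main} to control the linear deviation, then square up via the factorization $a^2-b^2=(a-b)(a+b)$ using $\mathrm{rad}(T)\le 2$ to bound the sum factor, with the same choices $\delta_0=\delta/5$, $m\gtrsim K^2\log K\,w^2(T)/\delta^2$, and $u\propto \sqrt{m}\,\delta/(K\sqrt{\log K})$. No gap.
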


\begin{proof}[\bf Proof of \Cref{theorem_delta_opt_guarantee}]
We wish to control the ratio $Z_2/Z_1$ in sight of \Cref{lemma_sketch_ratio}.

By \Cref{lemma_sketch_lm1}, if $m\geq CK^2\log Kw^2(T)/\delta^{2}$, then
$$ \P\lp Z_1\geq 1-\frac{\delta}{2} \rp  \geq 1-3e^{-cm\delta^2/(K^2\log K)}.  $$

Let $T:=B\mathcal{T}\cap \S^{n-1}$ and $Q:=\frac{1}{m}A^TA-I$. Since
$$ 2\inp{u}{Qv}=\inp{u+v}{Q(u+v)}-\inp{u}{Qu}-\inp{v}{Qv}, $$
triangle inequality gives
$$
Z_2\leq \frac{1}{2}\sup_{x\in u+T}|\inp{x}{Qx}| + \frac{1}{2}\sup_{x\in \{u\}}|\inp{x}{Qx}| + \frac{1}{2}\sup_{x\in T}|\inp{x}{Qx}|
=:Z_2^{(1)}+Z_2^{(2)}+Z_2^{(2)}
	$$
where $u+T:=\{u+v:v\in T\}$. Applying \Cref{lemma_sketch_lm1} to $Z_2^{(i)}\, (i=1,2,3)$ we get
\begin{align*}
\P\lp Z_2\leq \frac{\delta}{4} \rp 
&= 1- \P\lp Z_2 >\frac{\delta}{4}\rp \\
&\geq 1- \P\lp Z_2^{(1)}>\frac{\delta}{12}\rp - \P\lp Z_2^{(2)}>\frac{\delta}{12}\rp - \P\lp Z_2^{(3)}>\frac{\delta}{12} \rp \\
&\geq 1-9e^{-cm\delta^2/(K^2\log K)},
\end{align*}
provided $m\geq CK^2\log Kw^2\delta^{-2}$ with
$ w:=\max\{ w(u+T),\, w(\{u\}),\, w(T) \}. $\\
By the properties of Gaussian width, we claim that $w=w(T)$. In fact,
$$ w(\{u\}) =\E \sup_{x\in \{u\}}\inp{g}{x} = \E \inp{g}{u} = 0, $$
$$ w(u+T) = \E \sup_{v\in T}\inp{g}{u+v} = \E \lp \inp{g}{u} + \sup_{v\in T}\inp{g}{v} \rp = \E \inp{g}{u} + \E\sup_{v\in T}\inp{g}{v}=w(T). $$

Combining the bounds for $Z_1$ and $Z_2$ we have
$$ 2\frac{Z_2}{Z_1}\leq 2\frac{\delta/4}{1-\delta/2}\leq \delta $$
with probability at least $1-12e^{-cm\delta^2/(K^2\log K)}$. This completes the proof.
\end{proof}

\subsection{Favorable Landscape for Blind Demodulation with Generative Priors}

In this section, we give a concrete example where the improvement on the sub-Gaussian parameter $K$ can be important through blind demodulation with generative priors.

Blind demodulation aims to recover two signals $x_0, y_0 \in \R^l$ from observation $z_0 = x_0 \circ y_0$, where $\circ$ denotes componentwise multiplication. Due to the inherent nature of ambiguity of the solutions from $z_0$, one usually assume that the signals  come with some structure. A traditional way to model this structure is through a  sparsity prior with respect to a basis such as wavelet or the Discrete Cosine Transform basis in case the signals are images. 

On the other hand, with recent development in deep learning, the generative adversarial network (GAN) is turning out to be very effective in generating realistic synthetic images, which naturally indicates that we may model a certain type of image signals as outputs of GAN. Especially in the inverse problems like compressed sensing, phase retrieval including this blind demodulation, practitioners have observed an order of magnitude sample (observation) complexity improvement over the sparsity prior \cite{bora2017compressed,lucas2018using,hand2017global}.

This alternative model is called the generative prior and as a consequence is becoming a new promising model for modern signal processing \cite{bora2017compressed,hand2017global,hand2018phase,hand2019global}. 
In Hand and Joshi \cite{hand2019global}, the authors provide a global landscape guarantee for blind demodulation problem with generative priors, and they applied our Bernstein's inequality in their proof.

With generative priors, unknown signals $x_0,y_0$ are assumed to be in the range of two generative neural networks $\mathcal{G}^{(1)}$ and $\mathcal{G}^{(2)}$ respectively. More precisely, $\mathcal{G}^{(1)}: \R^n \rightarrow \R^l$ is a $d$-layer network, $\mathcal{G}^{(2)}: \R^p \rightarrow \R^l$ is a $s$-layer network and they can be written as 
\begin{align*}
\mathcal{G}^{(1)}(h) &= \text{relu}\lp W^{(1)}_d \dots \text{relu}\lp W^{(1)}_2 \text{relu}\lp W^{(1)}_1 h\rp \rp \dots \rp, \\
\mathcal{G}^{(2)}(m) &= \text{relu}\lp W^{(2)}_s \dots \text{relu}\lp W^{(2)}_2 \text{relu}\lp W^{(2)}_1 s\rp \rp \dots \rp,
\end{align*}
where $\text{relu}$ is the Rectified Linear Unit activation function given by $\text{relu}(x) = \max\{x, 0\}$ and $W^{(1)}_i, W^{(2)}_j$ for $i \in \{1, \dots, d\}$ and $j \in \{1, \dots, s\}$ are weight matrices. 

The weight matrices are normally obtained in the training process of the networks but the empirical evidence in \cite{arora2015deep} suggests that they behave a ``random-like" quantity . Based on this phenomenon, the authors of \cite{hand2019global}  made the following additional assumptions on the networks $\mathcal{G}^{(1)}$ and $\mathcal{G}^{(2)}$ to facilitate analysis further:
\begin{itemize}
    \setlength\itemsep{0em}
	\item[A1.] The weight matrices are random Gaussian matrices. 
	\item[A2.] The dimension of each layer increases at least logarithmically. 
	\item[A3.] The last layer dimension $l$ satisfies, up to log factors, $l \gtrsim n^2 + p^2$. 
\end{itemize}
The signals can then be recovered by finding their latent codes $h_0\in\R^n$ and $m_0\in\R^p$ such that $x_0 = \mathcal{G}^{(1)}(h_0)$ and $y_0 = \mathcal{G}^{(2)}(m_0)$. This leads to the following empirical risk minimization program:
$$ \min_{h\in\R^n,m\in\R^p} f(h,m):=\frac{1}{2}\| \mathcal{G}^{(1)}(h_0)\circ  \mathcal{G}^{(2)}(m_0) - \mathcal{G}^{(1)}(h)\circ  \mathcal{G}^{(2)}(m)\|_2^2. $$
% $$ \min_{h\in\R^n,m\in\R^p} f(h,m):=\frac{1}{2}\| z_0 - \mathcal{G}^{(1)}(h)\circ  \mathcal{G}^{(2)}(m)\|_2^2. $$
Note that there is a scaling ambiguity in this problem since it does not distinguish points on curve $\{(ch,\frac{1}{c}m):c>0\}$ for any given $(h,m)$, thus one can only hope to find the solution curve $\{(ch_0,\frac{1}{c}m_0):c>0\}$. The authors in \cite{hand2019global} showed that under assumptions A1-A3, two conditions that are called the Weight Distributed Condition (WDC) and the joint-WDC are met. These conditions guarantee a favorable landscape for the objective function $f(h,m)$, namely $f$ has a descent direction at all points outside of a small neighborhood of four curves containing the solution.

One of the important ingredients in their proof is concentration bounds for singular values of random matrices. When they showed that the joint-WDC condition is satisfied by concentration argument, they were able to improve the requirement in assumption A3 from, up to log factors, $l \gtrsim n^3 + p^3$ to $l \gtrsim n^2 + p^2$. Such improvement is made possible by our new Bernstein's inequality with refined sub-exponential parameter dependence. This $n^2 + p^2$ sample complexity matches the one in the previous recovery guarantees with sparsity prior (in which case $n$ and $p$ denotes the sparsity levels), but potentially better since the latent code dimension is oftentimes smaller than a sparsity level with respect to a particular basis. 

See Theorem 2, Theorem 5, Lemma 8, and Lemma 9 in \cite{hand2019global} for more details.

\section{Conclusion}
\label{sec_conclusion}
In this article, we proved the optimal concentration bound for sub-Gaussian random matrices on sets. Namely, with high probability,
$$ \sup_{x\in T} \left|\frac{1}{\|B\|_F}\|BAx\|_2-\|x\|_2\right| \lesssim \frac{K\sqrt{\log K}}{\sqrt{\mathrm{sr}(B)}} (w(T)+\mathrm{rad}(T)),
    $$
where $B\in \R^{l\times m}$ is an arbitrary matrix, $A\in\R^{m\times n}$ is an (mean zero) isotropic and sub-Gaussian random matrix, $T\in\R^n$ is the set, $K$ is the sub-Gaussian parameter of $A$, $\mathrm{sr}(B)$ is the stable rank of $B$, $w(T)$ is the Gaussian width of $T$ and $\mathrm{rad}(T):=\sup_{y\in T}\|y\|_2$. Compared to the previous work in \cite{liaw2017simple}, this result generalizes by allowing an arbitrary matrix $B$ while improves the dependency on the sub-Gaussian parameter from $K^2$ to the optimal $K\sqrt{\log K}$. Consequently, this can lead to a tighter concentration bound even in the cases where the sub-Gaussian matrix $BA$ have correlated rows.
It is also worth noting that dependence on $w(T)+\mathrm{rad}(T)$ is optimal in general as well.

We also proved, under extra moment conditions, a new Bernstein type inequality and a new Hanson-Wright inequality. The extra conditions here are bounded first absolute moment (e.g. $\E |Y_i|\leq 2$) for Bernstein's inequality and bounded second moment (e.g. $\E X_i^2=1$) for Hanson-Wright inequality. In many cases, these conditions can be easily met -- for example, they are implied by the isotropic condition of random variables or vectors. In general, both of our new inequalities give improved tail bounds in the sub-Gaussian regime, which is the regime of interest in many applications as demonstrated in \Cref{sec_applications}.

\section{Acknowledgements}
Y.~Plan is partially supported by an NSERC Discovery Grant (22R23068), a PIMS CRG 33: High-Dimensional Data Analysis, and a Tier II Canada Research Chair in Data Science. {\"O}.~Y{\i}lmaz is partially supported by an NSERC Discovery Grant (22R82411) and PIMS CRG 33: High-Dimensional Data Analysis. H.~Jeong is funded in part by the University of British Columbia Data Science Institute (UBC DSI) and by the Pacific Institute of Mathematical Sciences (PIMS). 
The authors of this paper would also like to thank Babhru Joshi for the helpful discussions and providing us with an important application in \Cref{sec_applications}.

%\newpage
\bibliography{optimaltail}
\bibliographystyle{abbrv}

\newpage
\appendix

\section{Properties of \texorpdfstring{$\psi_\alpha$}{}-Norm}
\label{appendix_psi_alpha_properties}

\begin{prop}%[Properties of $\psi_\alpha$-norm]
\label{prop_psi_alpha}
Let $X$, $Y$ be random variables and let $\alpha \geq 1$.
\begin{itemize}
\item[$\mathrm{(a)}$] If $\|X\|_{\psi_\alpha}\leq K<\infty$, then $\P(|X|\geq t)\leq 2\exp(-t^\alpha/K^\alpha)$ for all $t\geq 0$;
\item[$\mathrm{(b)}$] If $\P(|X|\geq t)\leq 2\exp(-t^\alpha/K^\alpha)$ for all $t\geq 0$ and some $K>0$, then $\|X\|_{\psi_\alpha}\leq \sqrt{3}K$;
\item[$\mathrm{(c)}$] $\|X^p\|_{\psi_\alpha}=\|X\|_{\psi_{p\alpha}}^p$ for all $p\geq 1$. In particular, $\|X^2\|_{\psi_1}=\|X\|_{\psi_{2}}^2$;
\item[$\mathrm{(d)}$] $\|XY\|_{\psi_\alpha}\leq \|X\|_{\psi_{p\alpha}}\|Y\|_{\psi_{q\alpha}}$ for $p,q\in (1,\infty)$ such that $\frac{1}{p}+\frac{1}{q}=1$. In particular, $\|XY\|_{\psi_1}\leq \|X\|_{\psi_{2}}\|Y\|_{\psi_{2}}$;
\item[$\mathrm{(e)}$] $\E |X|^p \leq \lp Cp^\frac{1}{\alpha} \|X\|_{\psi_\alpha} \rp ^p$ for all $p\geq 1$ and some absolute constant $C\leq 4$;
\item[$\mathrm{(f)}$] $\|X -\E X\|_{\psi_\alpha} \leq C\|X\|_{\psi_\alpha}$ for some absolute constant $C\leq 7$;
\item[$\mathrm{(g)}$] $\|X\|_{\psi_\alpha}\leq C\|X\|_{\psi_\beta}$ for all $\beta\geq\alpha$ and some absolute constant $C\leq 3$.
\end{itemize}
\end{prop}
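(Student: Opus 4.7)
The plan is to establish the seven properties in the order listed, since several depend on earlier ones. Parts (a) and (b) are the standard equivalence between $\psi_\alpha$-norm bounds and stretched-exponential tail decay. For (a), I would apply Markov's inequality to the non-negative random variable $\exp(|X|^\alpha/K^\alpha)$, whose expectation is at most $2$ by the definition of $\psi_\alpha$-norm. For (b), I would expand $\E\exp(|X|^\alpha/t^\alpha) = 1 + \int_0^\infty \alpha s^{\alpha-1}t^{-\alpha}\exp(s^\alpha/t^\alpha)\,\P(|X|\geq s)\,ds$, plug in the hypothesized tail bound, and evaluate the resulting integral after the substitution $u = s^\alpha/K^\alpha$; choosing $t = \sqrt{3}\,K$ will make the expression at most $2$ once the constants are tracked.

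Parts (c) and (d) are direct manipulations of the definition. Property (c) follows from the identity $\E\exp(|X^p|^\alpha/s^{p\alpha}) = \E\exp(|X|^{p\alpha}/s^{p\alpha})$, so the infimum defining $\|X^p\|_{\psi_\alpha}$ transforms via $t = s^p$. For (d), I would invoke Young's inequality $ab\leq a^p/p + b^q/q$ applied to $a = |X|^\alpha/\|X\|_{\psi_{p\alpha}}^\alpha$ and $b = |Y|^\alpha/\|Y\|_{\psi_{q\alpha}}^\alpha$; taking expectations yields $\E\exp\lp |XY|^\alpha/(\|X\|_{\psi_{p\alpha}}\|Y\|_{\psi_{q\alpha}})^\alpha\rp \leq \frac{1}{p}\E\exp(|X|^{p\alpha}/\|X\|_{\psi_{p\alpha}}^{p\alpha}) + \frac{1}{q}\E\exp(|Y|^{q\alpha}/\|Y\|_{\psi_{q\alpha}}^{q\alpha})\leq 2$, which gives the submultiplicative bound.

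For (e), I would use the layer cake representation $\E|X|^p = \int_0^\infty p s^{p-1}\P(|X|\geq s)\,ds$ together with the tail bound from (a); evaluating via the Gamma function gives $\E|X|^p\leq 2\|X\|_{\psi_\alpha}^p\,\Gamma(p/\alpha + 1)$, and Stirling's estimate then yields $(\E|X|^p)^{1/p}\leq C p^{1/\alpha}\|X\|_{\psi_\alpha}$ for $\alpha\geq 1$. Part (f) is then a triangle inequality argument: $\|X-\E X\|_{\psi_\alpha}\leq \|X\|_{\psi_\alpha} + |\E X|\cdot\|1\|_{\psi_\alpha}$, where $|\E X|\leq \E|X|\lesssim \|X\|_{\psi_\alpha}$ by (e) with $p=1$, and $\|1\|_{\psi_\alpha} = (\log 2)^{-1/\alpha}$ is a fixed constant.

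The most delicate item is (g). After normalizing by $K = \|X\|_{\psi_\beta}$ and setting $Y = X/K$, I have $\E\exp(|Y|^\beta)\leq 2$ and want to control $\E\exp(|Y|^\alpha/C^\alpha)$ for some absolute $C$. The key pointwise inequality is $|Y|^\alpha \leq 1 + |Y|^\beta$ (split into the cases $|Y|\leq 1$ and $|Y|\geq 1$). Combined with Jensen's inequality applied to the concave function $x\mapsto x^{1/C^\alpha}$ (valid when $C\geq 1$), this yields $\E\exp(|Y|^\alpha/C^\alpha)\leq e^{1/C^\alpha}\cdot\lp \E\exp(|Y|^\beta)\rp ^{1/C^\alpha}\leq e^{1/C^\alpha}\cdot 2^{1/C^\alpha}$, which is at most $2$ once $(1+\log 2)/C^\alpha\leq \log 2$, giving the claimed $C\leq 3$. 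None of these steps is a serious obstacle --- they are standard Orlicz-norm manipulations --- and the only real care required is in tracking constants to match the explicit bounds stated in the proposition.
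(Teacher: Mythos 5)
Your outline follows the paper's proof almost exactly in parts (a)--(f): Markov's inequality for (a), an integral evaluation of the moment generating function for (b), the scaling identity for (c), Young's inequality (applied twice) for (d), the layer-cake plus Gamma-function bound for (e), and triangle inequality plus (e) with $p=1$ for (f). For part (g) you take a genuinely different and in fact cleaner route. The paper first upgrades the tail bound $\P(|X|\geq t)\leq 2e^{-t^\beta}$ to $\P(|X|\geq t)\leq 2\cdot 2^{-t^\alpha}$ by case-splitting at $t=1$, and then invokes part (b). Your argument instead normalizes $Y=X/\|X\|_{\psi_\beta}$, uses the pointwise bound $|Y|^\alpha\leq 1+|Y|^\beta$, and applies Jensen's inequality to the concave map $x\mapsto x^{1/C^\alpha}$ to control the moment generating function directly, getting $\E\exp(|Y|^\alpha/C^\alpha)\leq (2e)^{1/C^\alpha}$. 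This avoids routing through (b) entirely (a virtue, since (b)'s constant is delicate; see below) and actually gives the slightly sharper constant $C=(1+1/\log 2)^{1/\alpha}\leq 1+1/\log 2\approx 2.44$, compared to the paper's $\sqrt{3}/\log 2\approx 2.50$.

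One caution on part (b). Your sketch asserts that choosing $t=\sqrt{3}\,K$ makes $\E\exp(|X|^\alpha/t^\alpha)\leq 2$ ``once the constants are tracked,'' but the integral computation you describe gives $\E\exp(|X|^\alpha/t^\alpha)\leq 1+2K^\alpha/(t^\alpha-K^\alpha)$, which is $\leq 2$ only when $t^\alpha\geq 3K^\alpha$, i.e.\ $t\geq 3^{1/\alpha}K$. For $\alpha\geq 2$ this is $\leq\sqrt{3}\,K$, but for $\alpha\in[1,2)$ it exceeds $\sqrt{3}\,K$ (e.g.\ $3K$ at $\alpha=1$). The paper's own proof of (b) has the same gap: it establishes $\|X\|_{\psi_\alpha}\leq 3^{1/\alpha}K$, which matches the claimed $\sqrt{3}\,K$ only when $\alpha\geq 2$. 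So you reproduce an overstated constant that the paper's argument also does not fully justify; the correct universal constant from this method is $3^{1/\alpha}\leq 3$, not $\sqrt{3}$. This is a constant-tracking slip rather than a structural error, and it does not affect (g) in your version since you bypass (b) there.
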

In particular, properties (a) and (b) implies that a random variable is sub-Gaussian (or sub-exponential) if and only if its tail probability is bounded by a Gaussian (or exponential) random variable. Properties (c) and (d) tell us if $X$ and $Y$ are both sub-Gaussian, then $X^2$, $Y^2$ and $XY$ are all sub-exponential. Property (e) tells us for $p\geq 1$, all $p$-th moments of $X$ exist whenever $\|X\|_{\psi_\alpha}$ is finite. Property (f) tells us we can always center random variables without changing their $\psi_\alpha$-norm up to a constant factor. Property (g) tells us all sub-Gaussian random variables are also sub-exponential random variables.

\begin{proof}[\bf Proof of \Cref{prop_psi_alpha}]
\hfill\\
\noindent(a) This follows from Markov's inequality and definition of $\psi_\alpha$ norm.
$$
\P (|X|\geq t) = \P\lp e^{X^\alpha/K^\alpha} \geq e^{t^\alpha/K^\alpha} \rp
\leq \exp(-t^\alpha/K^\alpha)\E e^{X^\alpha/K^\alpha}
\leq 2\exp(-t^\alpha/K^\alpha).
	$$

\noindent(b)
With a change of variable $s=e^\frac{u}{3K^\alpha}$ on interval $(1,\infty)$ we have
\[
\E \exp(\frac{|X|^\alpha}{3K^2}) = \int_{0}^\infty \P \lp e^\frac{|X|^\alpha}{3K^\alpha}\geq s \rp ds
= \int_{0}^1 1\, ds + \frac{1}{3K^\alpha} \int_0^\infty \P(|X|^\alpha\geq u)e^\frac{u}{3K^\alpha} du.
\]
Since $\P(|X|^\alpha\geq u)=\P(|X|\geq u^{1/\alpha})\leq 2\exp(-u/K^2)$, we get
\[
\E \exp(\frac{|X|^\alpha}{3K^2}) \leq 1 + \frac{2}{3K^\alpha} \int_0^\infty \exp(-\frac{u}{K^\alpha}+\frac{u}{3K^\alpha}) du = 2.
\]

\noindent(c) 
This follows from definition.
\begin{align*}
\|X^p\|_{\psi_\alpha}
&= \inf\{ t>0: \E \exp(|X|^{p \alpha}/t^{\alpha}) \leq 2\} \\
&= \inf\{ u^p: u>0 \,\text{ and }\, \E \exp(|X|^{p \alpha}/u^{p\alpha}) \leq 2\} \\
&= \lp \, \inf\{ u>0: \E \exp(|X|^{p \alpha}/u^{p\alpha}) \leq 2\} \, \rp ^p \\
&= \|X\|_{\psi_{p\alpha}}^p.
\end{align*}

\noindent(d) 
Without loss of generality, we can assume $\|X\|_{\psi_{p\alpha}}=\|Y\|_{\psi_{q\alpha}}=1$.\\
By Young's inequality $|ab|\leq \frac{1}{p}|a|^p +\frac{1}{q}|b|^q$ we have
$$
\exp(|XY|^\alpha) 
\leq \exp(\frac{1}{p}|X|^{p\alpha}+\frac{1}{q}|Y|^{q\alpha})
= \exp(\frac{1}{p}|X|^{p\alpha})\exp(\frac{1}{q}|Y|^{q\alpha}).
    $$
Applying Young's inequality again we have
$$
\exp(\frac{1}{p}|X|^{p\alpha})\exp(\frac{1}{q}|Y|^{q\alpha})
\leq \frac{1}{p}\exp(|X|^{p\alpha})+\frac{1}{q}\exp(|Y|^{q\alpha}).
    $$
Therefore
$$
\E \exp(|XY|^\alpha) \leq \frac{1}{p}\E \exp(|X|^{p\alpha})+\frac{1}{q}\E \exp(|Y|^{q\alpha})
\leq \frac{2}{p} + \frac{2}{q} =2.
    $$
This shows $\|XY\|_{\psi_\alpha}\leq 1$.

\vskip .1in
\noindent(e) 
Without loss of generality, we can assume $\|X\|_{\psi_{\alpha}}=1$. Then by property (a), $$\P(|X|\geq t)\leq 2\exp(-t^\alpha)  \;\text{ for } t\geq 0. $$
With a change of variable $u=t^\alpha$ we have
\begin{align*}
\E |X|^p & =\int_0^\infty pt^{p-1}\P(|X|\geq t)\, dt \\
&\leq \int_0^\infty pu^\frac{p-1}{\alpha}2e^{-u}\frac{1}{\alpha}u^{\frac{1}{\alpha}-1} \, du \\
& = \int_0^\infty \frac{2p}{\alpha}u^{\frac{p}{\alpha}-1}e^{-u} \, du \\
&= \frac{2p}{\alpha}\Gamma \lp \frac{p}{\alpha} \rp \\
&= 2\Gamma \lp \frac{p}{\alpha} +1 \rp
\end{align*}
where $\Gamma(\cdot)$ denotes the Gamma function. Note that for $s>0$,
$$
\Gamma(s+1)=\int_0^\infty \lp x^s e^{-\frac{x}{2}} \rp e^{-\frac{x}{2}} dx \leq (2s)^s e^{-s}\int_0^\infty e^{-\frac{x}{2}} dx = 2 \lp \frac{2s}{e} \rp ^s,
    $$
where we used the fact that $x^se^{-\frac{x}{2}}$ attains maximum at $x=2s$ because
$$ \frac{d}{dx} \lp x^se^{-\frac{x}{2}} \rp = x^{s-1} e^{-\frac{x}{2}} \lp s-\frac{x}{2} \rp .$$
Therefore
$$
\E |X|^p \leq 4\lp \frac{2p}{\alpha e} \rp ^\frac{p}{\alpha} =4\lp \frac{2}{\alpha e} \rp ^\frac{p}{\alpha} p^\frac{p}{\alpha} \leq 4 p^\frac{p}{\alpha} \leq \lp 4p^\frac{1}{\alpha} \rp ^p.
    $$

\vskip .1in
\noindent(f) By triangle inequality,
$$ \|X -\E X\|_{\psi_\alpha} \leq \|X\|_{\psi_\alpha} + \|\E X\|_{\psi_\alpha}. $$
Using property (d) and the fact that $\|1\|_{\psi_\alpha}=\lp \frac{1}{\log 2} \rp ^{1/\alpha}$ we have
$$ \|\E X\|_{\psi_\alpha}= |\E X|\cdot \|1\|_{\psi_\alpha} \leq 4 \|X\|_{\psi_\alpha}\cdot  \frac{1}{\log 2}. $$
This completes the proof with $C=1+\frac{4}{\log 4}\approx 6.77$.

\vskip .1in
\noindent(g)
Without loss of generality, we can assume $\|X\|_{\psi_{\beta}}=1$. Then by property (a), 
$$\P(|X|\geq t)\leq 2\exp(-t^\beta) \;\text{ for } t\geq 0.$$
Next we show $$\P(|X|\geq t)\leq 2\cdot 2^{-t^\alpha}=2\exp(-t^\alpha \log 2).$$ 
In fact, this is trivial when $t\in [0,1]$ since $2\cdot 2^{-t^\alpha}\geq 1$.\\
When $t>1$, from $t^{\beta-\alpha}\geq 1>\log 2$ we get
$$
t^\beta \geq t^\alpha \log 2 \;\Rightarrow\; \exp(-t^\beta)\leq \exp(-t^\alpha \log 2)
\;\Rightarrow\; \P(|X|\geq t)\leq 2\exp(-t^\alpha \log 2).
    $$
Therefore by property (b) we have
$$
\|X\|_{\psi_\alpha}\leq \sqrt{3} \lp \frac{1}{\log 2} \rp ^\frac{1}{\alpha} \leq \frac{\sqrt{3}}{\log 2} \approx 2.50.
    $$

\end{proof}

\section{Dependence on Sub-Gaussian Parameter for JL Lemma}
\label{appendix_jlopt}
Here we give an example to demonstrate the $K^2\log K$ dependence in sample complexity for JL Lemma (\Cref{lemma_JL_xy}) is optimal for small $\varepsilon$ and $\delta$. Similar to \Cref{prop_tightness_ex}, this example is also based on scaled Bernoulli distribution. However, this result is not implied by \Cref{prop_tightness_ex} because the latter already assumed $m\geq K^2\log K$.

\begin{prop}
Let $p \in (0,\frac{1}{4})$ and denote $K$ the positive number such that $p^{-1}=K^2\log K$. Let $A\in \R^{m\times n}$ be a random matrix with symmetric i.i.d. entries $A_{ij}$ such that $A_{ij}^2\sim \frac{1}{mp} \cdot \mathrm{\mathbf{Bernoulli}}\lp p \rp$. 
Denote $e_1=(1,0,\dots,0)^T\in \R^n$ and suppose there exist some $\varepsilon,\delta\in (0,\frac{1}{5})$ and an integer $m_0$ such that, as $m$ increases, the probability bound
$$ \P( |\|Ae_1\|_2-1| \geq \varepsilon) \leq \delta
	$$
holds for all $m\geq m_0$. Then
\begin{itemize}
\item[$\mathrm{(a)}$] $\sqrt{m}A$ is isotropic and sub-Gaussian, with sub-Gaussian parameter being no more than $K$.
\item[$\mathrm{(b)}$] $m_0\geq \frac{1}{2}K^2\log K$.
\end{itemize}
\end{prop}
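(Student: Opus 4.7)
I would separate the two parts and handle them independently, since part (a) is a direct structural verification and part (b) is the real content.

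For part (a), the first step is to check isotropy of the rows. Each $A_{ij}$ is symmetric with $A_{ij}^2 \sim \frac{1}{mp}\mathbf{Bernoulli}(p)$, so immediately $\E A_{ij} = 0$ and $\E A_{ij}^2 = \frac{1}{m}$; combined with independence of entries this gives $\E(\sqrt{m}A_i)(\sqrt{m}A_i)^T = I_n$, so the rows of $\sqrt{m}A$ are isotropic. The second step is the sub-Gaussian parameter bound. Here I would observe that $(\sqrt{m}A_{ij})^2 = \frac{1}{p}\mathbf{Bernoulli}(p)$ is exactly the distribution of the $X_i^2$ in \Cref{prop_tightness_ex} with $K^2\log K = 1/p$; since the conclusion $\|X_i\|_{\psi_2}\leq K$ in that proposition is insensitive to the sign distribution, we obtain $\|\sqrt{m}A_{ij}\|_{\psi_2} \leq K$. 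To pass from entries to rows, I would fix $x \in \S^{n-1}$, write $\inp{\sqrt{m}A_i}{x} = \sum_j x_j \sqrt{m}A_{ij}$, and invoke the standard moment-generating-function bound for sums of independent mean-zero sub-Gaussians, yielding a $\psi_2$-norm of at most a constant multiple of $K\lp \sum_j x_j^2 \rp^{1/2} = K$.

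For part (b), the key observation is that $\|Ae_1\|_2 = 0$ with non-negligible probability when $m$ is too small. Explicitly, $\|Ae_1\|_2^2 = \sum_{i=1}^m A_{i1}^2$, and each $A_{i1}^2$ takes the value $1/(mp)$ with probability $p$ and $0$ with probability $1-p$ independently across $i$, so $\P(\|Ae_1\|_2 = 0) = (1-p)^m$. Suppose for contradiction that $m_0 < \frac{1}{2}K^2\log K = \frac{1}{2p}$. Then for any $m \in [m_0, \frac{1}{2p})$ Bernoulli's inequality gives $(1-p)^m \geq 1 - mp > \frac{1}{2}$. On the event $\{\|Ae_1\|_2 = 0\}$ we have $|\|Ae_1\|_2 - 1| = 1 > \varepsilon$, since $\varepsilon < \frac{1}{5}$, hence $\P(|\|Ae_1\|_2 - 1| \geq \varepsilon) > \frac{1}{2} > \delta$, contradicting the hypothesis. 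Therefore $m_0 \geq \frac{1}{2}K^2\log K$, as claimed.

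I do not anticipate a serious obstacle. The only mild technicality is in part (a): to assert literally ``no more than $K$'' rather than ``no more than $CK$,'' one must either absorb an absolute constant into the convention (as is standard throughout this paper's setup, cf.\ the use of $K$ in \Cref{prop_tightness_ex}) or else compute the product MGF $\prod_j \E\exp(\lambda x_j\sqrt{m}A_{ij})$ directly from the three-point distribution of $A_{ij}$ and tune constants using $p < 1/4$. Part (b) is a one-line application of $(1-p)^m \geq 1-mp$ and requires no delicate estimates; all the work is hidden in the observation that a typical realization is sparse in the row, forcing a nontrivial probability mass at zero.
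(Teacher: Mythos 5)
Your proof is correct. Part (a) is handled essentially as in the paper --- both argue via \Cref{prop_tightness_ex} that each scaled entry has $\psi_2$-norm at most $K$; you are more explicit about checking isotropy, and to your credit you flag the genuine subtlety that the paper glosses over with ``straightforward to verify'' (passing from the entry-level $\psi_2$ bound to the row-level bound requires more than a naive Hoeffding-type argument, which only gives $CK$).

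For part (b), you take a genuinely different and cleaner route. The paper works with the \emph{upper} tail: it chooses $m_1$ to be the largest integer with $m_1 p\le 1/2$ (so that $m_1 p\ge 1/4$ using $p<1/4$), notes that a single success of the binomial already forces $\|Ae_1\|_2^2\ge 1/(m_1p)\ge 2>1+3\varepsilon$, and bounds $\delta\ge \P(Z_{m_1}\ge 1)=1-(1-p)^{m_1}\ge 1-e^{-m_1p}\ge 1-e^{-1/4}\approx 0.22$, a contradiction. You instead use the \emph{lower} tail at $m=m_0$ directly: $\P(\|Ae_1\|_2=0)=(1-p)^{m_0}\ge 1-m_0p>1/2$ by Bernoulli's inequality, and the event $\{\|Ae_1\|_2=0\}$ trivially satisfies $|0-1|=1\ge\varepsilon$, so $\delta>1/2$, a contradiction. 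The two arguments exploit opposite tails of the same sparse binomial, but yours avoids introducing the auxiliary integer $m_1$, avoids having to verify $m_1\ge m_0$ and $1/(m_1p)\ge 1+3\varepsilon$, and produces a contradiction with much more slack ($1/2$ versus $1-e^{-1/4}$ against the threshold $1/5$). Your version is the more economical argument.
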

\begin{proof} Part (a) is straightforward to verify (recall $\sqrt{m}\,\|A_{ij}\|_{\psi_2}\leq K$, as shown in \Cref{prop_tightness_ex}). 
For part (b), first notice that
$$
 \P \lp |\|Ae_1\|_2-1| \geq \varepsilon \rp \;\geq\;  \P \lp \|Ae_1\|_2^2 \geq (1+\varepsilon)^2 \rp  \;\geq\;  \P \lp \|Ae_1\|_2^2\geq 1+3\varepsilon \rp,
	$$
and $\|Ae_1\|_2^2\sim \frac{1}{mp}Z_m$ where $Z_m \sim \mathbf{Binomial}(m,p)$. We will prove the result by contradiction. If $m_0p<1/2$, choose $m_1$ to be the largest integer such that $m_1p\leq 1/2$. Then $m_1p\geq 1/2-p \geq 1/4$ and by taking $m=m_1\geq m_0$ we have
$$ \delta \;\geq\; \P(\|Ae_1\|_2^2\geq 1+3\varepsilon) \;=\; \P(Z_{m_1} \geq m_1p(1+3\varepsilon)) 	\;=\; \P(Z_{m_1} \geq 1) \;=\; 1-\P(Z_{m_1} =0). $$
Then
$$ \frac{1}{5}>\delta \geq 1- (1-p)^{m_1} \geq 1 -e^{-m_1p} \geq 1-e^{-\frac{1}{4}} \approx 0.22
$$
where we used inequality $(1-p) \leq e^{-p}$. This completes the proof. 
\end{proof}

\section{A Few Inequalities}
\label{appendix_b}
Here we list and prove the non-standard inequalities used in our proofs.
\begin{itemize}
\item[(a)] $e^x \leq x + \cosh(2x)$ for $x\in \R$.
\item[(b)] $(1-x)^{-\frac{1}{2}}\leq e^x$ for $x\in [0,\frac{1}{2})$.
\item[(c)] $\min\{ 1,\alpha e^{-x}\} \leq 2\exp(-\frac{x\log 2}{\log \alpha}) = 2\exp(-\frac{x}{\log_2 \alpha})$ for $\alpha \geq 2$ and $x\in \R$.
\item[(d)] $(1-x)\lp \frac{2}{x(1-x)} \rp ^ {x^2/2} \leq 1$ for $x\in (0,1)$.
\end{itemize}

\begin{proof} 
(a) From $e^{-2x}\geq 1-2x$ we have
\begin{align*}
2\cosh(2x)+2x-2e^x &= e^{-2x}+ e^{2x} - 2e^x  +2x\\
&\geq 1-2x + (e^x-1)^2-1 +2x \\
&=(e^x-1)^2 \geq 0
\end{align*}

\noindent (b) It suffices to show $f(x):=e^{2x}(1-x)\geq 1$ on $[0,\frac{1}{2})$. \\
Since $f'(x)=e^{2x}(1-2x)>0$ for $x<\frac{1}{2}$, we have $f(x)\geq f(0)=1$ when $x\in [0,\frac{1}{2})$.

\vskip .1in
\noindent (c) When $x\leq \log \alpha$, we have $2\exp(-\frac{x\log 2}{\log \alpha})\geq 2\exp(-\log 2) =1$.\\
When $x>\log \alpha$, notice that
$$\log \alpha -x < \frac{\log 2}{\log \alpha}(\log \alpha -x)=\log 2-\frac{x\log 2}{\log \alpha}.$$
Taking exponential we get $\alpha e^{-x}<2\exp(-\frac{x\log 2}{\log \alpha})$.

\vskip .1in
\noindent (d) It suffices to show $f(x)\leq 0$ where
\[
f(x):= \log(1-x) + \frac{x^2}{2}\log\frac{2}{x(1-x)}, \quad  x\in (0,1).
\]
Taking its derivative we have
\[
f'(x)=-\frac{2(x^2-x)\log\frac{2}{x-x^2} + x-2x^2+2 }{2(1-x)}=-\frac{g(x-x^2)+1-x^2}{2(1-x)}
\]
where $g(t):=-2t\log\frac{2}{t}+t+1$ with $t=x-x^2\in (0,\frac{1}{4}]$.

Since $g'(t)=3+2\log\frac{t}{2}\leq g'(\frac{1}{4})<0$, we know $g$ is decreasing with $g(t)\geq g(\frac{1}{4})>0$. Thus $f'<0$ and $f(x)\leq \lim_{x\to 0}f(x)=0$.
\end{proof}

\end{document}